\newtheorem{theorem}{Theorem}
\newtheorem{lemma}{Lemma}
\newtheorem{definition}{Definition}
\newtheorem{claim}{Claim}
\newcommand{\eps}{\epsilon}
\newcommand{\cgp}{\textsc{Capacitated Graph Pricing}\xspace}
\newcommand{\lsp}{\textsc{L-Sided Pricing}\xspace}
\begin{document}
\title{Graph Pricing With Limited Supply}
%
%
\author{Zachary Friggstad\footnote{This research was undertaken, in part, thanks to funding from the Canada Research Chairs program and an NSERC Discovery Grant.} }
\author{Maryam Mahboub}
\affil{Department of Computing Science, University of Alberta}
\date{}
\maketitle              
\begin{abstract}
We study approximation algorithms for graph pricing with vertex capacities yet without the traditional envy-free constraint.
Specifically, we have a set of items $V$ and a set of customers $X$ where each customer $i \in X$ has a budget $b_i$
and is interested in a bundle of items $S_i \subseteq V$ with $|S_i| \leq 2$. However, there is a limited supply of each item: we only have $\mu_v$ copies of item $v$ to sell for each $v \in V$.
We should assign prices $p(v)$ to each $v \in V$ and chose a subset $Y \subseteq X$ of customers so that each $i \in Y$ can afford their bundle ($p(S_i) \leq b_i$) and at most $\mu_v$ chosen customers have item $v$ in their bundle for
each item $v \in V$. Each customer $i \in Y$ pays $p(S_i)$ for the bundle they purchased: our goal is to do this in a way that maximizes revenue.
Such pricing problems have been studied from the perspective of {\em envy-freeness} where we also must ensure that $p(S_i) \geq b_i$ for each $i \notin Y$. However, the version where we simply allocate items to customers after setting prices and do not worry
about the envy-free condition has received less attention.

With unlimited supply of each $v \in V$, Balcan and Blum (2006) give a 4-approximation for graph pricing which was later shown to be tight by Lee (2015) unless the Unique Games conjecture fails to hold. Our main result is an 8-approximation
for the capacitated case via local search and a 7.8096-approximation in simple graphs with uniform vertex capacities.
The latter is obtained by combing a more involved analysis of a multi-swap local search algorithm for constant capacities and an LP-rounding algorithm for larger capacities.
If all capacities are bounded by a constant $C$, we further show
a multi-swap local search algorithm yields an $\left(4 \cdot \frac{2C-1}{C} + \epsilon\right)$-approximation.
We also give a $(4+\eps)$-approximation in simple graphs through LP rounding when all capacities are very large as a function of $\eps$.

We use a reduction by Balcan and Blum to the case of bipartite graphs where all items on one side must be assigned a price of 0 holds in this setting as well. However, unlike their setting, the resulting problem remains {\bf APX}-hard even if all items have at most 4 copies to sell. We also show our multi-swap analysis is tight using an interesting construction based on regular, high-girth graphs.
\end{abstract}
\section{Introduction}
Choosing prices to sell items in order to maximize revenue is a complicated task even in environments where one can be certain of customer behaviour. Indeed, many so-called {\em pricing problems} have been studied in combinatorial optimization. One popular setting is this:
a collection of items $V$ is available to be sold where we have $\mu_v \in \mathbb Z_{\geq 0} \cup \{\infty\}$ copies of item $v \in V$. Additionally, we are given a collection of customers $X$ where each $i \in X$ has some budget $b_i \geq 0$. In the {\em single-minded} setting,
each customer $i \in X$ is interested in a bundle $S_i \subseteq V$.
We must assign prices $p : V \rightarrow \mathbb R_{\geq 0}$
to the items and sell them to some customers $Y \subseteq X$ while respecting two constraints:
\begin{itemize}
\item {\bf Affordability}:~$p(S_i) := \sum_{v \in S_i} p(v) \leq b_i$ for $i \in Y$, and
\item {\bf Supply Constraints}:~$|\{ i \in Y : v \in S_i\}| \leq \mu_v$ for $v \in V$.
\end{itemize}
That is, each customer that is allocated their bundle can afford it and no item is oversold.
Such a solution $(p, Y)$ is said to be {\bf feasible}, and the goal
is to find a feasible $(p, Y)$ maximizing revenue, {\em i.e.} $\sum_{i \in Y} p(S_i)$.

Much attention has been given to the {\bf envy-free} setting, where a feasible solution must additionally satisfy the property
$p(S_i) \geq b_i$ for $i \notin Y$ or to the {\bf unlimited supply} setting where $\mu_v = \infty$ for each $v \in V$.
Observe that in the unlimited supply setting, any pricing yields an envy-free solution by simply choosing the customers that can afford the price.
However, the problem still remains hard in the setting with unlimited supply of each item, see the related works section.

The single-minded, envy-free pricing (SMEFP) problem with limited supply was studied by Cheung and Swamy \cite{CS08}.
Somewhat informally, they show the following. If there is an LP-based $\alpha$-approximation to the problem of choosing the best customers $Y$ when given prices $p$ (without regard to the envy-free condition), then there is an $O(\alpha \cdot \log C)$-approximation to SMEFP where $C = \max_{v \in V} \mu_v$. In the special case where $|S_i|$ is bounded by a constant for each $i \in X$, this yields a logarithmic approximation for SMEFP.

We study single-minded pricing problems yet {\em without} the envy-free constraint. This is a natural variant of pricing problems where customer satisfaction is less of a concern than overall revenue generation. 
To the best of our knowledge, it seems that pricing problems without the envy-free condition like this have received virtually no attention so far except in simpler cases of unlimited supply where envy-freeness
is a superfluous constraint ({\em i.e.} any solution can be trivially be made envy-free without losing revenue).

More specifically we mainly consider the case when $|S_i| = 2$ for each customer $i$. Here, the set of customers can be thought of as edges $E$ in a graph
$G = (V,E)$ with vertex capacities indicating the number of copies of the item/vertex that can be sold. We show that without the envy-free condition, the problem admits a constant-factor approximation.
In fact, it is relatively easy to get a 16-approximation using randomized rounding (with alterations) of an LP relaxation.
Our primary focus is on obtaining smaller constants through a more intricate {\em local search} approximation algorithm. We also show how a combined approach using linear programming and local search can yield even better approximations
in certain settings.


\subsection{Our Results}\label{sec:results}

We use shorthand notation like $e = uv \in E$ when we want to consider an edge $e \in E$ in some graph $G = (V,E)$ and also want name the endpoints $u,v$ of $e$.
This allows us to name distinct customers (i.e. $e$) who are interested in the same bundle of items ({\em i.e.} $\{u,v\}$).
As mentioned earlier, we focus on the following problem.
\begin{definition}
Let $G = (V,E)$ be a graph with vertex capacities $\mu_v \in \mathbb Z_{\geq 0} \cup \{\infty\}$ where each $e = uv \in E$ has a budget $b_e \geq 0$
and is interested in the bundle of vertices $\{u,v\}$.
In \cgp, we want to find a pricing $p : V \rightarrow \mathbb R_{\geq 0}$ and $F \subseteq E$ such that $(p,F)$ is a feasible
solution to the pricing problem. The goal is to maximize revenue: $\sum_{e = uv \in F} p(u) + p(v)$.
\end{definition}
All of our algorithmic results extend in a simple way to the case where each customer is interested in a bundle of size {\em at most} 2,
but it is slightly simpler to describe the algorithms and their analysis
for the case where each customer wants precisely two different items. Unless otherwise stated,
the graph $G$ may have parallel edges. We use the term {\em simple graph}
to indicate it does not have parallel edges, meaning no two customers want exactly the same bundle.

To get approximations for \cgp, we use the reduction from Balcan and Blum \cite{BB06} to reduce to the case of a bipartite graph where
all items on one side must be given a price of 0. Specifically, we consider the following problem.
\begin{definition}\label{def:lsp}
In \lsp, we are given a \cgp instance in a bipartite graph $(L \cup R, E)$. A feasible solution $(p,F)$ must also have $p(v) = 0$ for each $v \in R$.
\end{definition}

Though they only focused on uncapacitated pricing problems, the reduction in \cite{BB06} easily shows an $\alpha$-approximation for \lsp yields a $4 \alpha$-approximation for \cgp. We remark that the reduction would not be valid if one were looking for envy-free solutions. A key difference between the uncapacitated case studied in \cite{BB06} and the capacitated case we work with is that \lsp remains
hard if there are capacities.
\begin{theorem}\label{thm:apx}
\lsp is {\bf APX}-hard, even if all capacities are at most 4 and all customers have a budget of 1 or 2.
\end{theorem}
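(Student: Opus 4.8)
The plan is to reduce from a bounded-degree version of a known APX-hard problem — most naturally MAX-3-SAT with a bounded number of occurrences per variable, or equivalently vertex cover / independent set on bounded-degree graphs. Since L-Sided Pricing has the flavor of covering edges (customers) by cleverly splitting a budget between a priced left vertex and a free right vertex, I would aim for a reduction from (a gap version of) vertex cover on bounded-degree graphs, or from MAX-2-LIN / MAX-E3-SAT(5). The key design task is to encode a boolean choice for each variable as a pricing decision on a left vertex — e.g., a left vertex $v$ whose price must essentially be forced to lie in $\{0,1\}$ by attaching gadget customers — and then to encode clauses as customers whose budget $2$ can be collected precisely when the associated literals are set "true" (price $1$) and whose capacity-$\le 4$ constraints prevent one left vertex from being double-counted across too many clauses.

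First I would set up the variable gadget: for each variable $x$ create a left vertex $\ell_x$ and attach a bundle of "calibration" customers $\ell_x r$ (with distinct right vertices $r$, each of capacity $1$) having budget $1$, together with possibly a companion left vertex for the negation $\bar x$; the point is that in an optimal solution it is without loss of generality to take $p(\ell_x) \in \{0,1\}$, because the right side contributes $0$ to revenue, so fractional prices only waste budget — any price in $(0,1)$ can be rounded to $1$ keeping all the same customers affordable and weakly increasing revenue, and a price above $1$ loses the unit-budget customers. Next I would build the clause gadget: for a clause $C$ on literals $z_1,z_2,z_3$, introduce customers connecting the left vertices representing those literals to fresh right vertices of bounded capacity, with budgets $1$ and $2$ arranged so that a clause contributes extra revenue iff at least one of its literals is "on." The bounded number of occurrences of each variable in 3-SAT(5)-type instances guarantees every left vertex has bounded degree, hence all capacities can be taken to be a constant (and with care, at most $4$).

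Then I would prove the two directions of the gap-preserving reduction. For completeness: given an assignment satisfying a $(1-\delta)$ fraction of clauses, set $p(\ell_x)=1$ for true literals and $0$ otherwise, select all calibration customers and all clause customers whose budget is met; a direct accounting shows the revenue is at least $\mathrm{OPT}_{\text{base}} + (1-\delta)\cdot(\text{\#clauses})$ or similar. For soundness: given a pricing-and-selection $(p,F)$, first normalize prices into $\{0,1\}$ as above without decreasing revenue, read off the induced assignment, and argue that each unsatisfied clause forfeits a fixed amount of revenue because its customers cannot all be served within budget; the capacity constraints are what stop the solution from "cheating" by reusing a single priced vertex to rescue many clauses. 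Combining, an $\alpha$-factor L-Sided Pricing approximation with $\alpha$ close enough to $1$ would yield a PTAS for the base problem, contradicting its APX-hardness.

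The main obstacle I anticipate is the precise gadget calibration so that (i) fractional/large prices are genuinely dominated — this is the cleanest part since the right side pays nothing — and more delicately (ii) the revenue of clause customers is *tightly* coupled to the satisfaction of the clause, so that there is a constant per-clause revenue gap between satisfiable and unsatisfiable instances while simultaneously keeping all capacities at most $4$ and all budgets in $\{1,2\}$. Balancing the capacity bound of $4$ against the need for enough "slots" in the clause gadget to represent three literals plus an extra reward edge is the fiddly combinatorial core of the construction; I would expect to need a small fixed number of auxiliary right vertices per clause, each of capacity $1$ or $2$, wired so the reward is collectable exactly once and exactly when a literal is on.
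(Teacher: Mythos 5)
Your high-level strategy is sound, and one of the candidate source problems you list (vertex cover on bounded-degree graphs) is exactly what the paper uses: it reduces from \textsc{Vertex Cover} on 3-regular graphs. However, the proposal stops at the point where the actual work begins. The ``fiddly combinatorial core'' you defer --- the concrete gadget wiring, the exact budgets, and the tight revenue accounting --- \emph{is} the proof; without it there is no verifiable reduction, no exact formula relating the optimum of the pricing instance to the optimum of the source instance, and hence no gap preservation. As written, this is a plan rather than a proof.

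Two concrete issues in your sketch are worth flagging. First, normalizing prices into $\{0,1\}$ is the wrong binary encoding here: a left vertex priced $0$ earns nothing from \emph{any} incident customer, so your ``false'' variables would forfeit all their calibration revenue, and the completeness/soundness accounting would not be tight. The paper instead encodes the binary choice as price $1$ versus price $2$: each vertex $v_i$ of the 3-regular graph gets a left copy $l_i$ (capacity 4) with a dedicated ``node customer'' $l_ir_i$ of budget $2$, and each edge $e_j=v_iv_k$ gets a capacity-$1$ right vertex $d_j$ with two budget-$1$ ``edge customers'' $l_id_j$ and $l_kd_j$. Pricing $l_i$ at $2$ (``not in the cover'') collects the full node-customer revenue $2$; pricing it at $1$ (``in the cover'') sacrifices one unit there but lets it serve edge customers. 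One then shows the optimum equals $m+2n-k$ with $k$ the minimum vertex cover, the key soundness step being that in an optimal pricing every edge has an endpoint priced $1$ (otherwise lowering one endpoint's price from $2$ to $1$ loses $1$ on the node customer but gains $1$ from a newly affordable edge customer at the unused $d_j$). Second, your capacity concern resolves itself only because the source graph is regular of degree $3$: each $l_i$ has exactly $4$ incident customers, so capacity $4$ is both sufficient and necessary for the argument. A SAT-based reduction with separate variable and clause gadgets, as you primarily describe, would be considerably harder to calibrate under the ``budgets in $\{1,2\}$, capacities at most $4$'' constraints; the vertex-cover route avoids clause gadgets entirely.
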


So we turn to approximation algorithms. It is possible to get a 4-approximation for \lsp through straightforward rounding of a natural linear programming relaxation that is presented in Section \ref{sec:lp}, thus leading
to a 16-approximation for \cgp overall. We consider an alternative approach to get better approximation guarantees.

Our main algorithmic results are summarized as follows.
\begin{theorem} \label{thm:single}
There is a polynomial-time 2-approximation for \lsp.
\end{theorem}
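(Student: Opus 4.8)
The plan is to design a single-swap local search over price vectors for \lsp. Two preliminary observations drive the setup. First, because $p(v)=0$ for every $v\in R$, once the prices on $L$ are fixed the remaining problem of choosing $F$ is exactly a maximum-weight degree-bounded subgraph ($b$-matching) problem in the bipartite graph on the affordable edges $\{uv : b_{uv}\ge p(u)\}$, with edge weight $p(u)$ and degree caps $\mu$; this is solvable in polynomial time. Second, in any solution we may assume $p(u)\in\{0\}\cup\{b_e : e\in\delta(u)\}$, since raising $p(u)$ to $\min\{b_e : e\in F\cap\delta(u)\}$ preserves feasibility and revenue. So the task is to search over price vectors drawn from these (polynomially many per vertex) candidate values, and the algorithm is: start from the all-zero pricing and, while changing a single coordinate $p(u)\leftarrow t$ to a candidate value and re-optimizing $F$ strictly increases the revenue, do so; output the locally optimal pair $(p,F)$. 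I expect the polynomial bound on the number of iterations to follow from the usual toolkit (restricting to candidate prices and, if needed, a little care in how improving moves are selected); I flag this as the one routine-but-fiddly ingredient needed to match the stated running time, but not as the conceptual difficulty.

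The heart of the proof is an exchange argument showing that every local optimum $(p,F)$ satisfies $\mathrm{rev}(F)\ge \mathrm{OPT}/2$. Fix an optimal solution $(p^*,F^*)$, and note $p^*(u)$ is itself a candidate value for $u$. For each $u\in L$ we invoke local optimality against the move $p(u)\leftarrow p^*(u)$: we build an explicit feasible solution $F'$ for the modified price vector and conclude $\mathrm{rev}(F)\ge\mathrm{rev}(F')$. Take
\[
F' \ :=\ \bigl(F\setminus\delta(u)\setminus B_u\bigr)\ \cup\ \bigl(F^*\cap\delta(u)\bigr),
\]
where $B_u\subseteq F\setminus\delta(u)$ is a set of \emph{bumped} edges, chosen minimally so that inserting the (now affordable) edges $F^*\cap\delta(u)$ does not oversubscribe any $v\in R$. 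Feasibility of $F'$ for the new prices is straightforward: $u$ now has degree $|F^*\cap\delta(u)|\le\mu_u$, no other $L$-vertex gains an edge, affordability of the surviving old edges is untouched, and by construction each $v\in R$ stays within $\mu_v$. Rearranging $\mathrm{rev}(F)\ge\mathrm{rev}(F')$ and using that every edge $uv\in F^*\cap\delta(u)$ has revenue $p^*(u)$ gives, for each $u$,
\[
\mathrm{rev}(F\cap\delta(u)) + \mathrm{rev}(B_u)\ \ge\ p^*(u)\cdot\deg_{F^*}(u).
\]
Summing over $u\in L$, the right side is $\mathrm{rev}(F^*)=\mathrm{OPT}$ and $\sum_{u\in L}\mathrm{rev}(F\cap\delta(u))=\mathrm{rev}(F)$ (each edge of $F$ is counted once, at its unique $L$-endpoint), so it remains only to show $\sum_{u\in L}\mathrm{rev}(B_u)\le \mathrm{rev}(F)$.

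This last inequality is what I expect to be the main obstacle, since the $|L|$ test moves are analyzed independently against the same $F$ yet the bumped sets must be globally consistent. For $v\in R$, the move for $u$ must bump exactly $r_{u,v}:=\max\bigl(0,\ \deg_F(v)+|F^*\cap\delta(u)\cap\delta(v)|-|F\cap\delta(u)\cap\delta(v)|-\mu_v\bigr)$ edges at $v$. A short calculation, using only $\deg_F(v)\le\mu_v$, $\deg_{F^*}(v)\le\mu_v$, and $\sum_{u}|F^*\cap\delta(u)\cap\delta(v)|=\deg_{F^*}(v)$, shows $\sum_{u}r_{u,v}\le\deg_F(v)=|F\cap\delta(v)|$, and separately that $r_{u,v}\le|F\cap\delta(v)\setminus\delta(u)|$. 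A Hall/defect-type argument then produces, for each $v$, pairwise-disjoint sets $B_{u,v}\subseteq F\cap\delta(v)\setminus\delta(u)$ with $|B_{u,v}|=r_{u,v}$; taking $B_u=\bigcup_v B_{u,v}$ (the $B_{u,v}$ are also disjoint over $v$ for fixed $u$, as edges of $F$ have a unique $R$-endpoint) we get $\sum_u\mathrm{rev}(B_u)=\sum_v\mathrm{rev}\bigl(\bigsqcup_u B_{u,v}\bigr)\le\sum_v\mathrm{rev}(F\cap\delta(v))=\mathrm{rev}(F)$. Combining with the displayed per-vertex inequality gives $2\,\mathrm{rev}(F)\ge\mathrm{OPT}$, so the local optimum is a $2$-approximation; parallel edges and infinite capacities require no change, as every inequality above holds verbatim.
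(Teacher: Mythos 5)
Your proof is correct and takes essentially the same route as the paper: a single-swap local search over price vectors with test swaps $p(u)\leftarrow p^*(u)$, followed by an exchange argument that charges each edge of $F$ at most twice. The only cosmetic difference is in bookkeeping the disjointness of the bumped sets $B_u$: the paper constructs an explicit injective pairing $\sigma$ from a subset of $F^*$ into $F$ greedily (pairing edges at each $v\in R$ up front) and then reads the disjointness off injectivity, whereas you establish the existence of compatible disjoint $B_{u,v}$ via a Hall/defect argument on the counts $r_{u,v}$; both give the same per-vertex inequality and the same final bound.
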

This 2-approximation is fairly simple to obtain using local search. But we think it nicely highlights a direction for to designing approximations for pricing+packing problems.

We obtain slightly improved bounds in simple graphs with uniform capacities.
\begin{theorem}\label{thm:uniform}
Instances of \lsp in simple graphs where all vertices in $L \cup R$ have the same capacity $\mu$ admit a randomized, polynomial-time $1.952381$-approximation. For \cgp, this yields a $7.8096$-approximation.
\end{theorem}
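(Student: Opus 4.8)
The plan is to prove Theorem~\ref{thm:uniform} by a dichotomy on the common capacity $\mu$, combining two algorithms whose guarantees move in opposite directions as $\mu$ grows. Since the Balcan--Blum reduction turns any $\alpha$-approximation for \lsp into a $4\alpha$-approximation for \cgp, and $4 \cdot 1.952381 \le 7.8096$, it suffices to give a randomized $1.952381$-approximation for \lsp on simple graphs with all capacities equal to $\mu$. Fix a universal constant threshold $\tau$ (chosen at the end). If $\mu \le \tau$ we run a multi-swap local search algorithm; if $\mu > \tau$ we run an LP-rounding algorithm; either way we output the better solution found, and the analysis shows that in whichever regime we are in the corresponding algorithm already attains ratio $1.952381$.

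For the local search side, recall that in \lsp a feasible solution is a pricing $p$ with $p(v)=0$ for $v \in R$ together with an edge set $F$, so the revenue of a customer $e = uv$ with $u \in L$ is just $p(u)$, and we may assume $p(u) \in \{b_e : e \ni u\} \cup \{0\}$, a polynomial-size set. Given $p$, an optimal feasible $F$ is a maximum-weight $b$-matching in the subgraph of affordable edges, computable in polynomial time, so a solution is really specified by the prices. A $k$-swap move changes $p(u)$ for at most $k$ vertices $u \in L$ (equivalently, reselects a bounded number of customers) and then recomputes the optimal $b$-matching; accepting only moves that improve revenue by a $(1+\eps)$ factor gives a polynomial-time local optimum for any fixed $k$, and here $\mu \le \tau$ is constant so $k$ can be a constant depending only on $\mu$ and $\eps$. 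The analysis compares a local optimum $(p,F)$ with a global optimum $(p^\star,F^\star)$ through an exchange argument over the structure of $F \cup F^\star$: each customer of $F^\star \setminus F$ is ``blocked'' at a saturated endpoint, and one amortizes its revenue against customers of $F$ at that endpoint. The refinement over the general $\left(4 \cdot \frac{2C-1}{C} + \eps\right)$-bound exploits that the graph is simple and the capacities are uniform, which lets the charging argument follow long alternating paths and cycles in $F \triangle F^\star$ and amortize more cleverly, improving the leading constant from $2$ toward roughly $\frac{2\mu-1}{\mu}$ for \lsp. This is exactly the regime in which the tightness construction from regular high-girth graphs becomes relevant, so this refined analysis is the main obstacle: one must simultaneously certify that no $k$-swap helps and bound revenue across alternating structures whose lengths are governed by $k$ and by the girth.

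For the LP-rounding side, used when $\mu > \tau$, we write a relaxation with variables $z_{u,t} \in [0,1]$ encoding ``$p(u) = t$'' (so $\sum_t z_{u,t} \le 1$) and $x_{e,t} \in [0,1]$ encoding ``customer $e = uv$ is served and $p(u) = t$'', with $x_{e,t} \le z_{u,t}$ allowed only for $t \le b_e$, degree constraints $\sum_{e \ni v} \sum_t x_{e,t} \le \mu_v$ at every vertex (including those in $L$), and objective $\sum_{e,t} t\, x_{e,t}$. We round by choosing each $u \in L$'s price independently from $z_{u,\cdot}$, so each affordable-customer indicator has its LP marginal and the pre-alteration expected revenue equals the LP value; we then alter, discarding at each over-full vertex a random excess set of incident customers chosen via independent priorities so that each survives with a controlled probability. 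A concentration argument shows the expected fraction of LP revenue lost to alterations decreases with $\mu$ --- matching our separate $(1+\eps)$-approximation for very large uniform capacities --- so the ratio falls below $1.952381$ once $\mu$ exceeds a suitable universal constant $\tau$. Choosing $\tau$ so that the local-search guarantee at $\mu = \tau$ and the rounding guarantee at $\mu = \tau + 1$ are both at most $1.952381$ finishes \lsp, and the factor-$4$ reduction gives the $7.8096$-approximation for \cgp. The secondary technical obstacle here is bounding the alteration loss \emph{uniformly} over instances, i.e.\ handling customers that (fractionally) nearly saturate both of their endpoints at once.
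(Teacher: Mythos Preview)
Your high-level plan is exactly what the paper does: split on whether the common capacity $\mu$ lies below or above a threshold, use multi-swap local search for small $\mu$, use LP rounding with concentration for large $\mu$, and then apply the factor-$4$ reduction to get the \cgp bound. The paper's specific threshold is $\mu \le 21$ versus $\mu \ge 22$: at $\mu = 21$ the local-search ratio is $\tfrac{41}{21} = 1.952380\ldots$, and for $\mu \ge 22$ a Chernoff calculation (scaling by $\alpha = 0.57$) shows the randomized rounding recovers at least $0.516$ of the LP value, i.e.\ ratio $< 41/21$.

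One misconception is worth correcting. On the local-search side you write that a ``refinement over the general $\left(4\cdot\frac{2C-1}{C}+\eps\right)$-bound exploits that the graph is simple and the capacities are uniform,'' and you anticipate an alternating-path/high-girth argument as ``the main obstacle.'' There is no refinement and no new obstacle here. The bound $\frac{2C-1}{C}+\eps$ is already the \lsp guarantee of Theorem~\ref{thm:multi}, valid for \emph{any} instance (parallel edges allowed, nonuniform capacities allowed) as long as the $L$-side capacities are at most $C$; the factor $4$ you quote only appears after passing back to \cgp. In the uniform-capacity case you simply apply Theorem~\ref{thm:multi} with $C = \mu$. The high-girth construction you mention is used only for the locality-gap \emph{lower bound} (Theorem~\ref{thm:lowermulti}); it plays no role in the upper-bound analysis. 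So the local-search half of the proof is a one-line citation, not a new argument.

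On the LP side your description is close to the paper's, though the paper's alteration is cruder than yours: it scales all fractional edge values by a constant $\alpha$, and for each $v\in R$ it drops \emph{all} edges at $v$ if the scaled load still exceeds $\mu_v$ (rather than trimming an excess set). Simplicity of the graph is used exactly once, to make the loads $x''_{e'}$ at a fixed $v\in R$ independent across $e'\in\delta(v)$ so that a Chernoff bound applies; this is how the paper handles the ``both endpoints nearly saturated'' issue you flag. Your priority-based alteration would also work, but it is not needed.
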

This is obtained through a hybrid of local-search techniques and LP rounding techniques: if $\mu$ is bounded by some appropriate constant then a multi-swap local search
algorithm is used, otherwise we use a randomized LP rounding approach.

For the first part, we consider a more involved algorithm for \lsp with bounded capacities (that need not be uniform).
\begin{theorem}\label{thm:multi}
For any constants $C \geq 2, \epsilon > 0$, there is a polynomial-time $\left(\frac{2C-1}{C} + \epsilon\right)$-approximation for instances of \lsp with $\mu_v \leq C$ for all $v \in L$.
\end{theorem}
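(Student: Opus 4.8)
The plan is to analyze a \emph{local search over price vectors} (writing $\delta(w)$ for the set of edges incident to $w$, and $\delta(S)=\bigcup_{w\in S}\delta(w)$). Since we may assume $p(u)\in\{b_e:e\in\delta(u)\}\cup\{0\}$ for each $u\in L$ (and $p\equiv 0$ on $R$ by definition of \lsp), and since for a fixed price vector $p$ the best compatible $F$ is the maximum-revenue feasible $b$-matching in which an edge $e=uv$ is usable iff $p(u)\le b_e$ and then contributes $p(u)$ — computable in polynomial time by a min-cost flow / bipartite $b$-matching computation — we treat $\mathrm{val}(p)$, the best revenue for $p$, as an oracle. A local move selects $S\subseteq L$ with $|S|\le t$ for a constant $t=t(C,\eps)$, overwrites the prices on $S$ by other candidate values, and is kept iff $\mathrm{val}$ increases by a factor at least $1+\delta$ with $\delta$ inverse-polynomial; the usual argument bounds the number of improving moves by a polynomial and shows the $(1+\delta)$-rule costs only a $(1+\eps)$ factor. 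The output is the terminal price vector together with its optimal matching $F$.

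For the analysis fix an optimal solution $(p^*,F^*)$ and write $o_w=|F^*\cap\delta(w)|$, $f_w=|F\cap\delta(w)|$; note $F^*$ obeys the same capacities. The engine is: for any $S$ with $|S|\le t$, the move ``reset $p(u):=p^*(u)$ for $u\in S$'' is not $(1+\delta)$-improving, and by exhibiting the explicit feasible matching $M_S$ obtained from $F$ by deleting $F\cap\delta(S)$, adding $F^*\cap\delta(S)$, and evicting just enough $F$-edges at the resulting over-full vertices of $R$, one gets
\[
  \mathrm{rev}(F^*\cap\delta(S)) \;\le\; \mathrm{rev}(F\cap\delta(S)) \;+\; (\text{revenue of the $F$-edges evicted at $R$}) \;+\; \delta\cdot\mathrm{rev}(F),
\]
where we also use that the \emph{recomputed} optimum for the new prices dominates $\mathrm{rev}(M_S)$, so that any profitable re-pricing of the $L$-endpoints of evicted edges comes for free. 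Single-vertex swaps ($t=1$) already give a $2$-approximation: summing over $u\in L$ telescopes the first right-hand term to $\mathrm{rev}(F)$, while a saturated $v\in R$ is attacked by exactly $o_v\le\mu_v=f_v$ of the single swaps, so those attacks route to \emph{distinct} (say, cheapest) $F$-edges at $v$ and the total eviction revenue is again at most $\mathrm{rev}(F)$, yielding $\mathrm{OPT}\le 2\,\mathrm{rev}(F)$ up to the $\eps$ slack.

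To reach $\tfrac{2C-1}{C}+\eps$ we use $t$-swaps for a suitable constant $t$ and choose the test sets more globally, exploiting the $L$-side bound in two ways. First, in $F\cup F^*$ every $u\in L$ has degree at most $2\mu_u\le 2C$, so constant-radius neighborhoods explored \emph{through $L$} — high-degree $R$-vertices being handled only via the routing/eviction device above, never expanded — have constant size and are legal test sets; and a multi-swap whose $S$ also contains the $F$-neighbors of an $R$-vertex $v$ deletes $v$'s $F$-edges for free and so requires strictly fewer evictions at $v$ than the corresponding single swaps. Second, a vertex forced to surrender one of its at most $C$ equally priced sold edges loses only a bounded share of its revenue (and the recomputed optimum may recover more by re-pricing it upward). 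Combining these over a carefully averaged family of constant-size test sets, the eviction term that costs a full $\mathrm{rev}(F)$ in the warm-up drops to $(1-\tfrac1C)\mathrm{rev}(F)$, giving $\mathrm{OPT}\le(2-\tfrac1C)\mathrm{rev}(F)$ up to $\eps$. The crux, and the step I expect to be hardest, is precisely designing that distribution over test sets so that \emph{every} $F^*$-edge is fully credited on the left while \emph{every} $F$-edge is charged at most $2-\tfrac1C$ times on the right, controlling the $L$-side eviction loss and the $R$-side bumping loss \emph{simultaneously} and keeping all test sets of size $O(1)$; a matching instance built from high-girth regular graphs should then confirm that $\tfrac{2C-1}{C}$ is the limit of this local-search method.
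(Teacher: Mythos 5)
You have the right framework and the right target bound, but the crucial technical step is missing, and you explicitly flag it as such: ``the crux, and the step I expect to be hardest, is precisely designing that distribution over test sets.'' That is exactly the gap. Your setup mirrors the paper's: the same multi-swap local search with constant-size test sets, the $(1+\delta)$ efficiency trick, the observation that degrees in $F\cup F^*$ restricted to $L$ are at most $2C$ so one can explore constant-radius neighborhoods ``through $L$,'' a pre-computed pairing of $F^*$-edges to $F$-edges at each $R$-vertex to fix which $F$-edge gets evicted (your ``route to distinct $F$-edges''), and the correct target: the $R$-side eviction charge should drop from $\mathrm{rev}(F)$ to $(1-\tfrac1C)\mathrm{rev}(F)$, giving $(2-\tfrac1C)\mathrm{rev}(F)\ge\mathrm{OPT}$. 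None of this is wrong, but without specifying the family of test sets and the weights, the inequality $(2-\tfrac1C)$ is asserted, not proven; the warm-up argument that works for $t=1$ does not upgrade on its own.

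The paper's way of filling the gap is a specific combinatorial construction plus a covering lemma. It builds a directed auxiliary graph $H$ on vertex set $L$: for each $e^*=uv\in F^*$ paired with $\sigma(e^*)\in F$, put a directed edge from $u$ (the $L$-end of $e^*$) to the $L$-end of $\sigma(e^*)$. Both in- and out-degree of $H$ are at most $C$. The test sets are the directed balls $B^+(u,r)$ of radius $0\le r\le d$ for a fixed constant $d$, which have size at most $1+C+\cdots+C^d$. The eviction set for the swap at $B^+(u,r)$ is exactly $\sigma(\delta_{F^*}(\partial B^+(u,r)))$, i.e.\ only evictions caused by boundary vertices cost anything; evictions from interior vertices are already deleted on the $L$-side. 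The paper then proves a ball-covering theorem (its Theorem \ref{thm:cover}): in any digraph with indegree $\le C$, there is a nonnegative integer weighting $\tau(u,r)$ with $\tau(u,r)\le C^{d-r}$ such that every vertex lies in the weighted collection of balls with total weight exactly $\sum_{i=0}^d C^i=\frac{C^{d+1}-1}{C-1}$ and on the boundaries with total weight exactly $C^d$. Summing the weighted swap inequalities then charges each $F^*$-edge fully $\frac{C^{d+1}-1}{C-1}$ times and each $F$-edge $\frac{C^{d+1}-1}{C-1}+C^d$ times, and letting $d\to\infty$ gives $\frac{C}{2C-1}-\eps$. That covering theorem, proved by an inductive definition of $\tau$ and a nonnegativity argument using the indegree bound, is the piece your proposal lacks; everything else you wrote is a correct sketch of the surrounding scaffolding.
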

Note this does not require any bounds on capacities for nodes in $R$.
For example with $C = 2$ this yields a $(1.5 + \epsilon)$-approximation and in the case we prove is {\bf APX}-hard ($C = 4$) this yields
a $(1.75 + \epsilon)$-approximation. Observe if $C=1$ then both \cgp and \lsp reduce to maximum-weight matching because we can easily set prices to match the full budget of all edges in any matching.

Theorems \ref{thm:single} and \ref{thm:multi} are proven using {\em local search} algorithms.
That is, if we are given prices $p : L \rightarrow \mathbb R_{\geq 0}$
then the optimal customers $F \subseteq E$ can be computed using a maximum-weight $\mu$-matching algorithm. The local search algorithm
for Theorem \ref{thm:single} iteratively tries to change the price of one item in $L$ to see if it yields a better matching. We prove
with a simple argument that a local optimum is a 2-approximate solution for \lsp.

To prove Theorem \ref{thm:multi}, we consider a local search algorithm that changes $O_{\epsilon, C}(1)$ prices at a time in each step.
To analyze the performance of such an algorithm, we need a result about covering directed graphs by directed balls in a uniform way.
This may be of independent interest in other settings, so we state it here.

Let $H = (L,F)$ be a directed graph. For any $u \in L$ and $r \geq 0$ consider the ``directed ball'' $B^+(u,r) = \{v \in L: d_H(u,v) \leq r\}$ of nodes in $L$ reachable from $u$ in at most $r$ steps.
Similarly, let $\partial B^+(u,r) = \{v \in L: d(u,v) = r\}$ be nodes $v$ such that the shortest $u-v$ path in $H$ has length exactly $r$ (the {\em boundary} of $B^+(u,r)$).
We prove the following covering result for directed graphs.

\begin{theorem}\label{thm:cover}
Let $H = (L,F)$ be a directed graph where the indegree of each node is at most $C$ and let $d \in \mathbb Z_{\geq 0}$.
There is a ``weighting'' of directed balls $\tau : L \times \{0, 1, \ldots, d\} \rightarrow \mathbb Z_{\geq 0}$ with the following properties:
\begin{itemize}
\item For any $v \in V$, $\displaystyle\sum_{\substack{u \in L, 0 \leq r \leq d\\ \texttt{s.t. } v \in B^+(u,r)}} \tau(u,r) = \sum_{i=0}^d C^i = \frac{C^{d+1}-1}{C-1}$.
\item For any $v \in V$, $\displaystyle\sum_{\substack{u \in L, 0 \leq r \leq d\\ \texttt{s.t. } v \in \partial B^+(u,r)}} \tau(u,r) = C^d$.
\end{itemize}
Furthermore, $\tau(u,r) \leq C^{d-r}$ for each $u \in V$ and $0 \leq r \leq d$.
\end{theorem}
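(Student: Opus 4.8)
The plan is to construct $\tau$ explicitly by a top-down recursion on the radius $r$, starting from $r = d$ and working down to $r = 0$, so that the boundary-counting identity is enforced layer by layer. The key observation is that for a fixed node $v$, the set of pairs $(u,r)$ with $v \in \partial B^+(u,r)$ is governed by the in-neighbourhood structure of $H$: if $v \in \partial B^+(u,r)$ with $r \geq 1$, then there is an in-neighbour $w$ of $v$ with $v \notin B^+(u,r-1)$ and $w \in \partial B^+(u,r-1)$ (take the predecessor of $v$ on a shortest path). Since each node has indegree at most $C$, this suggests that the ``mass'' $C^d$ that must land on the boundary at $v$ can be pulled back through at most $C$ in-edges, each contributing $C^{d-1}$ mass at the previous layer --- which is exactly the recursive pattern $C^d = C \cdot C^{d-1}$ we want.

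Concretely, I would define $\tau$ so that for each $v$ and each $r$, the total $\tau$-weight of balls of radius exactly $r$ that have $v$ on their boundary equals $C^d$ (when $r < d$ we may need to ``top up'' using balls actually centered near $v$, but the cap $\tau(u,r) \le C^{d-r}$ should accommodate this since a single center contributes at most $C^{d-r}$). The first bullet (the ball-covering identity with value $\sum_{i=0}^d C^i$) should then follow by summing the boundary identity over $r = 0, 1, \dots, d$: if $v \in B^+(u,r)$ then $v \in \partial B^+(u,r')$ for exactly one $r' \le r$, so $\sum_{u,r : v \in B^+(u,r)} \tau(u,r) = \sum_{r=0}^d \sum_{u : v \in \partial B^+(u, r'), r' \le r} \tau(u,r)$, and a careful bookkeeping of how each boundary-pair $(u,r')$ is counted (once for each $r \ge r'$, i.e. $d - r' + 1$ times, weighted appropriately) should telescope to $\sum_{i=0}^d C^i$. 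I would likely phrase the construction inductively: assume the statement for $d-1$ applied to a suitable auxiliary graph or to the same graph with shifted radii, then build the radius-$d$ weighting on top.

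The cleanest route is probably a direct inductive construction on $d$. For $d = 0$, set $\tau(u,0) = 1$ for all $u$; both identities and the bound $\tau(u,0) \le C^0 = 1$ hold trivially. For the inductive step, take the weighting $\tau'$ for parameter $d-1$ (satisfying the two identities with values $\frac{C^d - 1}{C-1}$ and $C^{d-1}$, and the bound $\tau'(u,r) \le C^{d-1-r}$), then ``inflate'' it: roughly, multiply everything by $C$ and re-index radii, then add a layer of radius-$0$ balls or radius-$d$ balls to fix up the discrepancy between $C \cdot \frac{C^d-1}{C-1}$ and $\frac{C^{d+1}-1}{C-1} = C\cdot\frac{C^d-1}{C-1} + 1$. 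The ``$+1$'' per node is exactly one unit of radius-$0$ ball weight at each $v$, i.e. $\tau(v,0) \mathrel{+}= 1$, which is within the bound $C^d$. One has to check this addition does not disturb the boundary identity at other radii --- it only adds to the $r=0$ boundary term, which must also be checked against $C^d$; since each $v$ gets exactly one such unit this is fine as long as $C^d \ge 1$.

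The main obstacle I anticipate is verifying the boundary identity $\sum_{u : v \in \partial B^+(u,r)} \tau(u,r) = C^d$ is preserved under whatever inflation/re-indexing operation I use --- the subtlety is that $\partial B^+(u,r)$ in the graph $H$ does not interact cleanly with rescaling radii, because "distance exactly $r$" is not "distance exactly $r-1$ plus one" in a bijective way (a node can be reachable at several distances only if... no, distance is unique, but the set of boundary-pairs changes non-trivially). So I expect the real work is to set up the recursion at the level of in-edges: process vertices in order of increasing distance from a hypothetical source is not available here since there is no single source, so instead I would argue vertex-by-vertex, defining for each node $v$ and radius $r$ how much weight it "sends back" along each of its (at most $C$) in-edges to radius $r-1$, and show these local flows assemble into a global consistent $\tau$ with the stated cap. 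Ensuring the cap $\tau(u,r) \le C^{d-r}$ holds when many such backward-flows pile up at the same center $u$ is the delicate quantitative point, and I would handle it by noting that the weight arriving at $(u,r)$ from the layer-$(r+1)$ boundaries is bounded because $u$ can be the shortest-path-predecessor-pulled-back center for a limited set of nodes at distance $r+1$, the count being controlled by out-degree-free, purely indegree-$\le C$ branching: at most $C^{d-r}$ total.
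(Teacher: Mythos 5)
Your plan correctly identifies the right general direction (a top-down definition of $\tau$ from $r=d$ to $r=0$, using the indegree bound to pull mass back along in-edges), but as written it does not constitute a proof: you never actually define $\tau$, and the one concrete route you do explore -- inflating a $d-1$ weighting by a factor $C$, re-indexing radii, and adding radius-$0$ mass -- does not work, as you yourself observe. The re-indexed sum $\sum_{u : v \in B^+(u,r)} C\,\tau'(u,r-1)$ overshoots the target because the condition $d_H(u,v)\le r$ is strictly weaker than $d_H(u,v)\le r-1$, so the ball identity is not preserved; and the proposed ``$+1$ at radius $0$'' would then push the boundary identity above $C^d$ rather than restore it. Your acknowledgment at the end that the inflation/re-indexing step is the sticking point is exactly where the proof is missing.

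There is also a conceptual gap in what invariant to maintain. You propose to enforce ``for each $v$ and each $r$, the total $\tau$-weight of radius-$r$ balls with $v$ on their boundary equals $C^d$'' -- that cannot be right, since summing over $r$ would give $(d+1)C^d$ rather than the required $C^d$. The paper's construction controls a \emph{stronger, parameterized family} of quantities: for every $i\in\{0,\dots,d\}$ and every $u$, it forces $\sum_{j=i}^d \sum_{v : d_H(v,u)=j-i} \tau(v,j) = C^{d-i}$ (Lemma~\ref{lem:cnt}), and defines $\tau(u,i)$ directly as $C^{d-i}$ minus the $j>i$ part of that sum. Both bullets of the theorem then fall out by setting $i=0$ and by summing over $i$. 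Crucially, the ``delicate quantitative point'' you flag (the cap $\tau(u,r)\le C^{d-r}$) is actually the easy direction -- it follows immediately from nonnegativity of the subtracted terms -- whereas the substantive step is nonnegativity itself, which needs the branching argument: each $v$ at distance $j-i>0$ from $u$ factors through one of $u$'s at most $C$ in-neighbours $w$, so the subtracted double sum is at most $C\cdot C^{d-i-1}=C^{d-i}$. Without the explicit recurrence and this family of identities, the pieces you have sketched do not assemble into a proof.
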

That is, each $v \in L$ lies in these balls with weight precisely $\frac{C^{d+1}-1}{C-1}$ and appears on the boundary of the balls with weight precisely $C^d$.
The bound on $\tau(u,r)$ at the end of the statement is a required to ensure the local search algorithm used to prove Theorem \ref{thm:multi} runs in polynomial time.

We also show the analysis of both algorithms are tight. See Section \ref{sec:algo} for definitions of the two local search algorithms
mentioned in the results below and Section \ref{sec:locality} for precise statements of how the analysis is tight.
Slightly informal statements of the results are mentioned here.
\begin{theorem}
For any $\epsilon > 0$, the locality gap of the single-swap algorithm (Algorithm \ref{alg:single}) can be as bad as $2-\epsilon$.
\end{theorem}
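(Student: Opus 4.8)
The plan is to construct, for each $\epsilon > 0$, an explicit instance of \lsp together with a pricing that is locally optimal for the single-swap algorithm but whose revenue is a factor $2-\epsilon$ below the global optimum. First I would recall what the single-swap algorithm can do in one step: starting from prices $p : L \to \mathbb{R}_{\geq 0}$, it changes the price of exactly one vertex $u \in L$ (to any nonnegative value), then recomputes the optimal $\mu$-matching; the move is accepted only if revenue strictly increases. So I need a configuration in which raising, lowering, or re-setting any single price $p(u)$ cannot increase the (optimally rematched) revenue, yet the optimum uses a genuinely different global price vector.

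The natural template is a bipartite gadget where $L$-prices come in two "scales." In the local optimum, each $u \in L$ is priced at some value $a$ so that it serves customers with budget $a$ on edges to $R$, collecting revenue roughly $a$ per sold unit; in the global optimum, the same vertices are priced at $2a$, each serving a customer with budget $2a$ on a different edge, collecting $2a$ per unit — but those high-budget edges are arranged so that they can only be "activated" if a whole cluster of $L$-vertices is simultaneously repriced to $2a$. Concretely, for a parameter $k$ (with $1/k$ playing the role of $\epsilon$), I would take $k$ vertices $u_1,\dots,u_k \in L$, a shared/limited-capacity structure on the $R$ side forcing the budget-$2a$ edges to be mutually exclusive with the budget-$a$ edges, and budgets tuned so that if only $j < k$ of the $u_i$ are set to $2a$ then the rematching loses at least as much from the displaced budget-$a$ customers (who can no longer afford anything, or get bumped by capacity) as it gains from the $j$ new budget-$2a$ sales. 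Making the ratio tend to $2$ requires that a single-price change recover essentially none of the factor-$2$ improvement, while all $k$ changes together recover all of it.

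The key steps, in order, are: (1) define the vertex sets $L$, $R$, their capacities $\mu$, the edge set with budgets in $\{a, 2a\}$, and the two candidate pricings $p_{\mathrm{loc}}$ (all $u_i \mapsto a$) and $p_{\mathrm{opt}}$ (all $u_i \mapsto 2a$); (2) compute the optimal $\mu$-matching revenue under $p_{\mathrm{opt}}$ and verify it is $2-o(1)$ times the revenue under $p_{\mathrm{loc}}$; (3) the crux — show $p_{\mathrm{loc}}$ is a local optimum, i.e. for every $u \in L$ and every new price $p'$, the best $\mu$-matching for the modified pricing has revenue $\leq$ that of $p_{\mathrm{loc}}$. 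This last step splits into cases by whether $p'$ is lowered (which only weakly helps some customers but the matching was already saturating capacity), raised to exactly $2a$ (which forfeits that vertex's cheap customers and, by the mutual-exclusion design, gains at most one $2a$-customer — a wash or a loss for $k$ large relative to the cheap-side yield), or raised to some other value (dominated by one of the previous cases, since budgets take only two values). Care is needed because after a price change the algorithm rematches \emph{globally}, so I must argue no cascade of rematching across the whole instance helps — this is why the gadget should be a disjoint union of identical small clusters, so a single price change is confined to one cluster and the clusters do not interact.

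I expect step (3), establishing local optimality against the \emph{rematched} objective, to be the main obstacle: it is not enough that the "obvious" rematch fails to improve; one must rule out all $\mu$-matchings in the perturbed instance. The cleanest way to handle this is to make capacities and budgets so rigid that, once a single $p(u)$ is changed, the set of edges that are simultaneously affordable and capacity-compatible is forced, leaving essentially one viable matching to evaluate. A secondary subtlety is ensuring the construction is a valid \lsp instance (bipartite, $R$-side prices zero, integral capacities) and that the $2-\epsilon$ bound is genuinely approached rather than merely $2-\Omega(1)$; taking $k \to \infty$ with all other quantities scaled appropriately should give $\epsilon = \Theta(1/k)$.
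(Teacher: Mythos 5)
Your overall strategy---a bipartite instance with budgets at two scales $a$ and $2a$, tight capacities on the $R$ side, a local optimum pricing everything at $a$, a global optimum pricing everything at $2a$, and local optimality enforced by having each newly affordable budget-$2a$ edge displace a budget-$a$ sale of some \emph{other} $L$-vertex---is exactly the mechanism the paper uses. But the proposal stops where the proof actually lives: you never specify the displacement topology, i.e., which $R$-vertices the expensive edges of a given $u_i$ share with the cheap edges of which other vertices, and that choice is what makes your step (3) go through. The paper's instance is a path $u_1, v_1, u_2, v_2, \ldots, u_n$ (with $C$ parallel copies of each $R$-vertex to accommodate capacity $C$ on $L$): $u_i$'s budget-$1$ edges go ``right'' to $v_{i,j}$, its budget-$2$ edges go ``left'' to $v_{i-1,j}$, and every $v_{i,j}$ has capacity $1$. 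Repricing a single $u_i$ to $2$ then gains exactly $2C$ (its $C$ expensive edges), but forfeits its own $C$ cheap sales and evicts $u_{i-1}$'s $C$ cheap sales from the now-contested $v_{i-1,j}$'s---a net change of $0$---and a short argument shows no global rematching does better. The $2/n$ loss in the ratio comes from the boundary vertex $u_1$, which has no expensive edges; so the size parameter driving $\epsilon\to 0$ is the length of one chain, not the number of disjoint gadgets.

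Two of your stated design goals would lead you astray if followed literally. First, you ask that the expensive edges ``can only be activated if a whole cluster is simultaneously repriced'' and that repricing any $j<k$ vertices be non-improving; the single-swap theorem only needs $j=1$, and the working construction does \emph{not} have the stronger property (repricing two adjacent $u_i$'s in the path is strictly improving, net gain $C$)---demanding it for all $j$ is essentially the multi-swap lower bound, which the paper can only achieve via high-girth regular graphs. Second, a disjoint union of identical clusters buys nothing (its locality gap equals that of one copy, since a single swap and the induced rematching are automatically confined to one component), and the ``cascade of rematching'' you are trying to suppress is not something to avoid: the depth-one cascade---the evicted neighbour---is precisely what cancels the gain of the swap. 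So the idea is right, but until the adjacency structure is written down and the post-swap optimal matching is actually bounded (the paper does this by arguing that any optimal rematching must use all of $u_i$'s expensive edges and then bounding what the remaining graph can contribute), the assertion that the cheap pricing is locally optimal remains unverified.
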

\begin{theorem}\label{thm:lowermulti}
For any $C \geq 2, \rho \geq 1, \epsilon > 0$, the locality gap of the $\rho$-swap algorithm (Algorithm \ref{alg:multi}) on \lsp instances
having maximum capacity at most $C$ can be as bad as $\left(\frac{2C-1}{C} - \epsilon\right)$
\end{theorem}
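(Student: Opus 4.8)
The plan is to construct, for every $C \ge 2$, $\rho \ge 1$ and $\epsilon > 0$, a family of \lsp instances --- indexed by a girth parameter $g$ that we send to infinity --- on which the $\rho$-swap local search procedure (Algorithm~\ref{alg:multi}) can get stuck at a solution whose value is only a $\left(\tfrac{C}{2C-1} + o_g(1)\right)$ fraction of the optimum. Concretely I would (i) fix a base bipartite graph; (ii) equip it with capacities bounded by $C$ and with two budget levels; (iii) write down an explicit global optimum together with an explicit feasible solution $(p,F)$ that we claim is a $\rho$-swap local optimum; and (iv) verify the claim. The role of the $+\epsilon$ is only that the construction attains the ratio $\tfrac{2C-1}{C}$ in the limit $g \to \infty$, so for any target $\epsilon$ we take $g$ (hence the instance size) large enough.

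\textbf{The instance.} For the base graph I would take a $C$-regular bipartite graph $(L \cup R, E)$ of girth larger than some $g = g(\rho)$; such graphs exist for every girth, by explicit algebraic constructions or by a standard probabilistic argument. Set the capacities to be at most $C$ with a mild asymmetry between the two sides (e.g. $\mu_v = C$ for $v \in R$ and a value in $\{C-1,C\}$ for $v \in L$), chosen so that neither the ``cheap'' nor the ``expensive'' matching below can be locally padded with extra edges. Partition $E$ into a ``cheap'' part of budget $1$ and an ``expensive'' part of budget $\beta$, where the expensive edges form a spread-out subgraph (for instance one or two perfect matchings from a $1$-factorization of $E$) and $\beta > 1$ is a free parameter. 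The global optimum prices all of $L$ at $\beta$ and sells the expensive subgraph; the candidate local optimum $(p,F)$ prices all of $L$ at $1$ (or uses two price levels in a checkerboard pattern along the expensive subgraph) and sells a maximum $\mu$-matching of the whole graph. A short calculation then fixes $\beta$ so that $\mathrm{OPT}/\mathrm{val}(p,F) = \tfrac{2C-1}{C}$ exactly in the limit.

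\textbf{The crux: $(p,F)$ is a $\rho$-swap local optimum.} Fix any $T \subseteq L$ with $|T| \le \rho$ and any re-pricing $p'$ that agrees with $p$ off $T$; we must show that the best $\mu$-matching for $p'$ has revenue at most $\mathrm{val}(p,F)$. First I would argue that the only re-pricings worth considering raise the price of a $T$-vertex from the cheap level to the expensive level $\beta$ (lowering a price, or raising it above $\beta$, is non-improving: no new edge becomes affordable while each sold edge is worth less, or nothing at all). Next, using the girth bound, the symmetric difference between an optimal matching $F'$ for $p'$ and $F$ can be taken supported on the union of radius-$O(\rho)$ balls around $T$, and each such ball is a tree. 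On a tree the accounting is forced: a re-priced vertex $u \in T$ can only sell expensive edges incident to it, and to do so it must evict cheap edges at its $R$-neighbours, each of which is capacity-tight; those evictions propagate outward, and because every internal $R$-node has capacity exactly $C$ the revenue recovered by re-inserting edges one level deeper falls an $\tfrac{C-1}{C}$-fraction short of what is needed at that level --- this mirrors the equality case of the ball/boundary weighting in Theorem~\ref{thm:cover}. Summing over the disjoint trees, the net change is at most $0$, so no $\rho$-swap improves $(p,F)$.

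\textbf{Main obstacle and loose ends.} The real work is the crux paragraph: a single price change can in principle trigger a global re-optimization of the $\mu$-matching, so the argument must show that any improving rearrangement can be localized to the tree-like neighbourhoods of $T$ (and that whatever happens outside them contributes nothing net), and then carry out the level-by-level charging on trees so that the loss-to-gain ratio is pinned at $\tfrac{C}{C-1}$ per level and aggregates to the claimed $\tfrac{2C-1}{C}$. Getting the capacities and the value $\beta$ exactly right --- so that both the cheap and the expensive configurations are genuinely stable and the ratio is not merely bounded but equals $\tfrac{2C-1}{C}$ --- is the other fiddly point. Once the crux is in hand the rest is bookkeeping: checking that the instance is a legitimate \lsp instance with all capacities $\le C$, observing that $\rho$ enters only through the requirement $g > \Theta(\rho)$ (so a single sufficiently high-girth instance defeats the $\rho$-swap algorithm for every $\rho$ at once), and taking $g$ large to turn $\tfrac{2C-1}{C} - o_g(1)$ into $\tfrac{2C-1}{C} - \epsilon$. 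This simultaneously certifies that the analysis behind Theorem~\ref{thm:multi} is tight.
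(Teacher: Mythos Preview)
Your high-level plan---build a high-girth bipartite instance with two budget levels and argue that the cheap pricing is a $\rho$-swap local optimum---is the right shape, but the concrete construction you propose does not work, and the ``crux'' paragraph rests on a mis-located eviction.

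Take your instance literally: a $C$-regular bipartite graph, $\mu_v=C$ on $R$, $\mu_u\in\{C-1,C\}$ on $L$, expensive edges forming one perfect matching with budget $\beta=2C-1$, all other edges budget $1$. With $\mu_u=C$ everywhere, the price-$1$ solution sells all $Cn$ edges for revenue $Cn$. Now swap a single $u\in L$ to price $2C-1$. The expensive edge $e=uv$ was \emph{already} being sold; it is still sold (now at price $2C-1$), and $v$ is still within capacity, so no eviction is triggered at $v$. What actually changes is that the $C-1$ cheap edges at $u$ become unaffordable and drop out; their $R$-endpoints lose one edge each and cannot refill (all their other $L$-neighbours are still saturated). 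Net change: $+(2C-2)$ on $e$, $-(C-1)$ from the dropped edges, i.e.\ $+(C-1)>0$. A single swap improves, so $p$ is not even $1$-swap locally optimal. Taking $\mu_u=C-1$ instead forces $\beta\le C-1$ to block this swap, which kills the ratio. Using two expensive matchings does not help for the same reason: the expensive edges at $u$ are already present in the cheap solution, so repricing $u$ harvests their price increase without needing any $R$-side eviction.

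The paper's construction avoids exactly this failure by making the cheap and expensive edge sets at each $u\in L$ \emph{disjoint} and each of size $C$, and by giving every $v\in R$ capacity~$1$. Concretely it builds a \emph{layered} graph $L_1,R_1,\ldots,L_t,R_t$ (Lemma~\ref{lem:girth}) in which each $u\in L_i$ has $C$ ``cheap'' edges to $R_i$ with budget $C$ and $C$ ``expensive'' edges to $R_{i-1}$ with budget $2C-1$; the local optimum prices every $u$ at $C$ and sells the cheap edges, the global optimum prices at $2C-1$ and sells the expensive ones. Now repricing $u$ to $2C-1$ really does require evicting $C$ edges across the capacity-$1$ vertices of $R_{i-1}$, each worth $C$, and the deficit propagates through the layers. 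The proof that no $\rho$-swap improves (Claim~\ref{claim:localopt}) is not a tree/level-by-level charging but a max-flow/min-cut argument bounding the size of the best matching after the swap; the high girth is used to show that the auxiliary graph on the swapped set $X$ is a forest, which controls how many in-edges of $X$ can be ``internal'' to the reachable set. Your ball-propagation picture is intuition for why Theorem~\ref{thm:cover} is tight, but it is not the mechanism of the lower-bound proof itself.
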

The first construction is quite simple, but the second construction for the multi-swap analysis is much more involved. As a starting point
for the construction, we require simple, regular graphs of constant degree but arbitrarily large girth. Such graphs were shown to exist by Sachs \cite{S63}.

Instances of $\lsp$ satisfying $\mu_v = \infty$ for $v \in R$ can be solved in polynomial time by a greedy algorithm
that simply chooses the best price at each vertex: customers incident to different vertices in $L$ have no interaction \cite{BB06}.
So it is natural to wonder if we can get better approximations if capacities are large.
Though it is not our main result, we confirm the intuition that larger capacities yield better approximations (in simple graphs)
through randomized rounding of an LP relaxation.
\begin{theorem}\label{thm:rounding}
For any $\epsilon > 0$, there is some $C_{\epsilon} \geq 0$ such that instances of \cgp in simple graphs satisfying
$\mu_v \geq C_\epsilon$ admit a randomized, polynomial-time $(4+\epsilon)$-approximation.
\end{theorem}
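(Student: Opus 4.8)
The plan is to formulate an LP relaxation of \cgp (specialized, via the Balcan--Blum reduction, to \lsp on a bipartite graph $(L \cup R, E)$ where $R$-vertices are priced $0$), round it, and fix up infeasibility caused by overselling. Since we only pay for $L$-prices, a natural relaxation has, for each $v \in L$, a distribution over candidate prices: variables $x_{v,q} \ge 0$ with $\sum_q x_{v,q} \le 1$ indicating that $v$ is assigned price $q$, where $q$ ranges over the polynomially many ``useful'' prices (budgets $b_e$ of incident edges). For each edge $e = uv$ with $u \in L$ there is a variable $y_e \le \sum_{q \le b_e} x_{u,q}$ for ``$e$ is served'', and the supply constraints become $\sum_{e \ni v} y_e \le \mu_v$ for all $v \in L \cup R$. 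The objective is $\sum_{e=uv \in E, u \in L} \sum_{q \le b_e} q \cdot (\text{appropriate product/linearization})$ — more carefully, we should put the revenue on the price variables: maximize $\sum_{v \in L} \sum_q q \cdot z_{v,q}$ where $z_{v,q} \le x_{v,q}$ and $z_{v,q} \le (\text{number of incident edges served at price} \le q)/\mu_v$-type bound; the cleanest formulation discretizes so that each $v\in L$ ``commits'' to one price level and counts served incident edges. I would adopt the formulation where the LP directly has, for each $v \in L$, variables for ``price $q$ is chosen and $k$ incident edges are served'', but to keep the relaxation small it suffices to use $x_{v,q}$ and $y_e$ with the objective $\sum_{e=uv} \sum_{q \le b_e} q \cdot y_{e,q}$ after splitting $y_e = \sum_q y_{e,q}$ by the price of $u$.

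The rounding: for each $v \in L$ independently, pick price $q$ with probability $x_{v,q}$ (and ``no price'', i.e. $p(v)$ huge, with the remaining probability). This fixes all prices. Then each edge $e = uv$ ($u \in L$) is \emph{affordable} with probability exactly $\sum_{q \le b_e} x_{u,q} \ge y_e$. Now among affordable edges we must choose a subset respecting $\mu_v$ for every $v$; greedily/fractionally the affordable edges form a feasible $b$-matching instance in expectation, but a single $v$ could have many more than $\mu_v$ affordable incident edges. This is where we use that $\mu_v \ge C_\epsilon$ is large: by a Chernoff bound, the number of affordable edges at $v$ concentrates around its mean, which is at most $\mu_v$, so with probability $1 - o_\epsilon(1)$ it exceeds $\mu_v$ by at most a $(1+\epsilon')$ factor; when it does overshoot, we discard a $(1 - \frac{1}{1+\epsilon'})$-fraction of the revenue at that vertex (or discard the vertex's contribution entirely in the rare overflow event), losing only a $(1+\epsilon)$ factor overall. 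After trimming we have a feasible solution whose expected revenue is at least $\frac{1}{1+\epsilon}\sum_{v \in L}\sum_q q\, x_{v,q}^{*} \cdot (\text{served fraction})$, which is $\ge \frac{\mathrm{LP}}{1+\epsilon} \ge \frac{\mathrm{OPT}_{\lsp}}{1+\epsilon}$; combined with the factor-$4$ Balcan--Blum reduction this yields a $(4+\epsilon)$-approximation after rescaling $\epsilon$.

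The main obstacle I anticipate is handling the revenue accounting at an overflowing vertex $v \in L$ cleanly: the served edges at $v$ have different prices only through $u$'s price, but $v \in R$ overflows are even subtler because the prices at $R$ are $0$, so an $R$-overflow forces us to drop edges that carry revenue attributed to their $L$-endpoints. I would deal with this by a careful two-stage trimming — first resolve $R$-side overflows by, for each overflowing $r \in R$, keeping an arbitrary $\mu_r$ of its incident affordable edges and showing (via concentration, since $\mu_r \ge C_\epsilon$) that the expected revenue lost is an $\epsilon$-fraction of what the LP collected on edges at $r$ — and then resolve $L$-side overflows similarly. Ensuring these two trimming steps do not compound badly (each loses a $(1+\epsilon')$ factor, and $(1+\epsilon')^2 \le 1+\epsilon$ for suitable $\epsilon'$) is routine once the concentration statements are set up; the key quantitative input is simply that $C_\epsilon$ can be chosen large enough (roughly $\Theta(\epsilon^{-2}\log \epsilon^{-1})$) that the Chernoff tail beats the number of vertices after a union bound, or alternatively that the \emph{expected} loss per vertex is an $\epsilon$-fraction of its LP revenue even without a union bound, which is the route I would actually take to avoid any dependence on $|V|$.
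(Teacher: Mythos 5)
Your overall plan---Balcan--Blum reduction to \lsp, an LP with per-price variables, independent price sampling, Chernoff-based trimming with per-edge expected-loss accounting rather than a union bound, and $C_\epsilon = \Theta(\epsilon^{-2}\log\epsilon^{-1})$---is the same route the paper takes (its Theorem \ref{thm:LPBigCap}). But the rounding step as you describe it has a genuine gap: you count the \emph{number of affordable edges} at a vertex and claim its mean is at most $\mu_v$. That is false. An edge $e=uv$ is affordable with probability $\sum_{q\le b_e} x_{u,q}$, and the LP only guarantees $y_e \le \sum_{q\le b_e} x_{u,q}$ together with $\sum_{e\ni v} y_e \le \mu_v$; the affordability probabilities can all be $1$ while the $y_e$ sum to $\mu_v$ (take $\deg(v)=10\mu_v$ with every budget above every candidate price and $y_e = 1/10$). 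Then $v$ is oversold by a factor $10$ with certainty and your trimming throws away $90\%$ of the revenue there, not an $\epsilon$-fraction. The repair is to serve each affordable edge only to the fractional extent the LP prescribes \emph{conditioned on the sampled price}, namely $x_{e,p'(u)}/y_{u,p'(u)}$, whose expectation is $\sum_p x_{e,p}$ and hence sums to at most $\mu_v$ over $\delta(v)$; one then applies Chernoff to these $[0,1]$-valued fractional loads and, at the very end, invokes integrality of the bipartite $\mu$-matching polytope to extract an integral customer set of at least the fractional value. This last step is needed in any case, since even a non-oversold set of affordable edges must still be turned into a feasible allocation.

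The second, related gap is your plan to ``resolve $L$-side overflows similarly'' by concentration: for $u\in L$ all edges of $\delta(u)$ become affordable or not according to the single random choice $p'(u)$, so their indicators are perfectly correlated and nothing concentrates; a vertex with $10\mu_u$ incident high-budget edges is oversold with probability $1$ no matter how large $\mu_u$ is. The paper avoids any $L$-side trimming by putting the capacity constraint into the LP \emph{per price}, $\sum_{e\in\delta(u)} x_{e,p} \le y_{u,p}\cdot\mu_u$ (constraint \ref{lp:capL}), so that the conditional fractional load $\sum_{e\in\delta(u)} x_{e,p'(u)}/y_{u,p'(u)} \le \mu_u$ holds deterministically for every sampled price; your LP has only the unconditional constraint $\sum_{e\ni v} y_e\le\mu_v$, which does not give this. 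With these two repairs your argument coincides with the paper's; the remaining ingredients (simplicity of $G$ for independence across $\delta(v)$, $v\in R$, and the factor $4$ from the reduction) you already have right.
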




\subsection{Related Work}
The basic model of pricing problems of this sort were introduced by Guruswami {\em et al.} \cite{GHKKKM05}.
Among other results, they give an $O(\log n + \log m)$-approximation for the case of single-minded pricing without item capacities if we have $n$ items and $m$ customers. Here, the bundle $S_i$ for each customer $i$ may be any subset of items (not just size 2). This was later improved by Briest and Krysta to an $O(\log D + \log k)$-approximation where each set has size at most $k$ and each item appears in at most $D$ sets \cite{BK06}.
The logarithmic approximation is essentially tight: Chalermsook {\em et al.} show for any constant $\epsilon > 0$ there is no $O(\log^{1-\epsilon}(m + n))$-approximation unless {\bf NP} $\subseteq$ {\bf DTIME}$(n^{\texttt{polylog}(n)})$ \cite{CCKK12}.

If all customers are interested in a set of size at most $k$, Balcan and Blum give an $O(k)$-approximation which specializes to a 4-approximation in the case $k = 2$ \cite{BB06}. Amazingly, this may also be tight: building on work by Khadekar {\em et al.} \cite{KKMS09}, Lee showed that there is no $(4-\epsilon)$-approximation when $k=2$ for any constant $\epsilon > 0$ unless the Unique Games conjecture fails \cite{lee15}.

Cheung and Swamy studied the envy-free variant of capacitated pricing problems \cite{CS08}. As mentioned earlier, they show that LP-based
approximations that choose the maximum-profit set of customers for given prices translate to approximation algorithms for envy-free pricing
with capacities while losing an $O(\log \mu_{\max})$-factor. In particular, for envy-free \cgp they get an $O(\log \mu_{\max})$-approximation.

Other variants of envy-free pricing problems have been studied, we do not attempt to comprehensively survey all such models and just sample a few to discuss.
For example, it could be that each customer is interested in acquiring just a single item from
their subset (rather than all items). This was also studied in \cite{GHKKKM05} and follow-up work ({\em eg.} \cite{EFS12}).
Other directions have considered more restricted subsets of items in single-minded pricing,
for example the customers may be interested in the edges of
a subpath of a large path (the {\bf highway problem}) or subpaths of a tree (the {\bf tollbooth problem}). See \cite{GR11} and \cite{GS10}
for definitions and state-of-the-art approximations for these problems.


\subsection{Organization}
Section \ref{sec:prelim} introduces notation and discusses the reduction from \cgp to \lsp. Section \ref{sec:algo}
gives the local search algorithms to prove Theorems \ref{thm:single} and \ref{thm:multi}. The tightness of the analysis
of these local search algorithms is presented in Section \ref{sec:locality}. The randomized LP rounding algorithms
and, ultimately, the proof of Theorem \ref{thm:uniform} appears in Section \ref{sec:lp}. The {\bf APX}-hardness proof appears in \ref{app:hardness}.


\section{Preliminaries}\label{sec:prelim}

We consider graphs that may have parallel edges, unless we explicitly specify otherwise. We do not consider loops. It is easy to extend our
result to cases where some customers may only be interested a singleton bundle. A brief discussion of this can be found in Appendix \ref{app:loops}.

For a set of nodes $S$ in a graph $G = (V,E)$, we let $N(S)$ denote all nodes not in $S$ that are neighbours of some node in $S$.
For $u \in V$ we let $\delta_G(u)$ be all edges having $u$ as an endpoint. Often the subscript $G$ is omitted when
it is clear from the context.  For a subset of edges $B$, we let $\delta_B(u) = \delta(u) \cap B$, again when the graph $G$ is clear from the context.

Again, we sometimes refer to an edge $e$ by $uv$ where $u,v$ are the endpoints of $e$. For brevity, we may use notation like $e = uv \in E$ when we want to consider an edge $e \in E$
but also want to name the endpoints $u,v$ of $e$ as well. The reason for using this notation rather than simply saying $uv \in E$
is that our local search algorithms do work for graphs with parallel edges (i.e. customers interested in identical bundles),
so $e$ would be one particular customer and $u,v$ would name the items that $e$ is interested in.

Given a function $f : T \rightarrow \mathbb R$ on some finite set $T$, for any $S \subseteq T$ we let $f(S)$ denote $\sum_{x \in S} f(x)$.
Similarly, if $p : V \rightarrow \mathbb R_{\geq 0}$ is a pricing of the vertices of a graph $G = (V,E)$, for an edge $e = uv \in E$
we let $p(e)$ denote $p(u) + p(v)$.
For two pricings $p, p' : V \rightarrow \mathbb R_{\geq 0}$ of the nodes of a graph, we let $\texttt{HW}(p, p') = |\{v \in V : p(v) \neq p'(v)\}|$, the number of vertices with different prices between $p$ and $p'$.

Finally, consider an instance $G = (L \cup R, E)$ of \lsp where edges have budgets $b_e$ and vertices have capacities $\mu_v$.
For any pricing $p$ of the vertices, let $\texttt{val}(p) = \max_{F \subseteq E \\ \texttt{ and } (p,F) \texttt{ feasible}} \sum_{e \in F} p(e)$
be the maximum profit of a feasible solution with prices $p$. Note that $\texttt{val}(p)$ can be computed in polynomial time as it is merely
asking for a maximum-weight $\mu$-matching solution using only edges $e = uv$ with $p(e) \leq b_e$ (the weight of such an edge being $p(e)$).


\subsection{Reduction to L-Pricing}
To begin, we use a reduction by Balcan and Blum \cite{BB06} which was stated originally only for the uncapacitated case. We sketch the proof for completeness.
\begin{lemma}[Balcan and Blum \cite{BB06}]\label{lem:reduce}
If there is an $\alpha$-approximation for \lsp then there is a $4\alpha$-approximation problem for \cgp.
\end{lemma}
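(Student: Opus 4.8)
The plan is to take an optimal (or near-optimal) solution for a \cgp instance and show how to extract from it an \lsp instance whose optimal value is at least a constant fraction of the \cgp optimum, and conversely that any feasible \lsp solution lifts back to a feasible \cgp solution of the same value. First I would fix an optimal \cgp solution $(p^*, F^*)$ with revenue $\mathrm{OPT}$. The key structural idea from Balcan and Blum is to split the revenue of each purchased edge $e = uv \in F^*$ into the contribution $p^*(u)$ from one endpoint and $p^*(v)$ from the other; orienting each edge toward its ``more expensive'' endpoint (or using a random threshold), one can argue that there is a way to designate, for each edge, a ``paying'' endpoint so that the total price collected at designated endpoints is at least $\mathrm{OPT}/2$. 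Then I would 2-color the vertices: a standard argument (e.g.\ take a max-cut, or a random bipartition) shows we can partition $V$ into $L \cup R$ so that at least half of this designated revenue comes from edges whose designated (paying) endpoint lies in $L$ and whose other endpoint lies in $R$. This loses another factor of $2$, giving a bipartite sub-instance on $(L \cup R, E')$ where we only care about charging the $L$-endpoint, and we may as well set all $R$-prices to $0$ — precisely an \lsp instance — whose optimum is at least $\mathrm{OPT}/4$.

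Next I would run the assumed $\alpha$-approximation on this \lsp instance to obtain $(p, F)$ with $p(v) = 0$ for $v \in R$ and revenue at least $\frac{1}{\alpha}\cdot\frac{\mathrm{OPT}}{4}$. The final step is to check that $(p, F)$, viewed back in the original graph $G$ (keeping $p$ on $L$ and $p \equiv 0$ on $R$, and selecting exactly the edge set $F$), is feasible for the original \cgp instance: affordability holds because $p(e) = p(u) \le b_e$ was enforced in the \lsp instance, and the supply constraints hold because $F$ was feasible there and the capacities $\mu_v$ are inherited unchanged. Hence we recover a \cgp solution of value $\ge \mathrm{OPT}/(4\alpha)$, i.e.\ a $4\alpha$-approximation. (One should also handle the approximate-vs-exact subtlety: since the construction is existential, we don't actually build the \lsp instance from $\mathrm{OPT}$; instead we enumerate over the polynomially many relevant bipartitions/orientations implicit in the reduction — or, as Balcan and Blum do, argue a single random bipartition works in expectation — so that the \lsp instance we feed to the algorithm is one we can construct without knowing $\mathrm{OPT}$.)

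The main obstacle I expect is exactly that last point: making the reduction constructive rather than merely existential. The factor-$4$ loss is split into two factor-$2$ losses — one from choosing a paying endpoint per edge, one from the bipartition — and each must be realized by something computable. I would handle the bipartition either by derandomizing a random $2$-coloring (each edge is ``cut in the right direction'' with probability $\tfrac14$, so in expectation a quarter of the designated revenue survives, and this can be derandomized by conditional expectations) or simply by noting the reduction only needs to produce \emph{some} \lsp instance on a bipartition of $V$ together with a choice of which endpoint is the $L$-side one, and we can afford to try the random choice and argue about expected revenue. The other direction — lifting a feasible \lsp solution back to \cgp — is routine, since feasibility constraints only become easier when we re-introduce the (zero-priced) $R$ side. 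Since the excerpt says Balcan and Blum's argument ``easily shows'' this and only asks for a sketch ``for completeness,'' I would keep the write-up at the level above, emphasizing the two-stage factor-$2$ decomposition and the trivial reverse direction.
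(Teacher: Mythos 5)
Your overall architecture (bipartition, set $R$-prices to $0$, run the \lsp algorithm, lift back) matches the paper, and your ``lifting back'' direction is correct and is indeed the routine part. But your accounting of the factor $4$ has a genuine gap. You split the loss as $2\times 2$: first designate a paying endpoint per edge (losing a factor $2$), then claim a bipartition exists under which ``at least half'' of the designated revenue comes from edges whose designated endpoint lands in $L$ and whose other endpoint lands in $R$. That second claim is false in general: once an orientation (designated endpoint) is fixed per edge, this is a \emph{directed} cut condition, and a random bipartition satisfies it with probability only $\tfrac14$ per edge --- a figure you yourself quote in your derandomization remark, contradicting the ``at least half'' in your main argument. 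Nor can existence of a $\tfrac12$-fraction directed cut be guaranteed (e.g.\ two antiparallel designated customers on the same vertex pair already cap you at $\tfrac12$, and a complete graph with all orientations caps you near $\tfrac14$). Followed consistently, your two stages give $\tfrac12\cdot\tfrac14=\tfrac18$ of $\mathrm{OPT}$, i.e.\ an $8\alpha$-approximation, not $4\alpha$.

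The fix --- and what the paper actually does --- is to skip the designation step entirely. Take one uniformly random bipartition $V = L\cup R$ and keep only the crossing edges of $F^*$. For each bought edge $e=uv\in F^*$, the term $p^*(u)$ survives exactly when $u\in L$ and $v\in R$ (probability $\tfrac14$), and the term $p^*(v)$ survives exactly when $v\in L$ and $u\in R$ (probability $\tfrac14$); by linearity of expectation over all $2|F^*|$ endpoint contributions, the expected surviving revenue is exactly $\tfrac14\sum_{e\in F^*}p^*(e)$. (If you want a genuine $2\times 2$ decomposition: with probability $\tfrac12$ the edge is cut at all, and conditioned on that, each endpoint is equally likely to be the $L$-side one, so on average half of $p^*(u)+p^*(v)$ survives --- the ``choice of paying endpoint'' must be made by the random bipartition, not fixed in advance.) Your worry about constructiveness is also overstated: the reduction never needs to know $\mathrm{OPT}$ or $F^*$; the random bipartition is the entire construction, and $F^*$ appears only in the analysis. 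Derandomization needs only pairwise independence of the events $u\in L$, as the paper notes.
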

\begin{proof}
Let $G = (V,E)$ be an instance of \cgp with capacities $\mu$ and budgets $b$. Randomly form $L$ by including each vertex independently
with probability $1/2$ and set $R = V-L$. Discard all edges
with both endpoints in the same set of the partition. Consider an optimum pricing $p^*$ for $G$ and corresponding set of customers $F^* \subseteq E$. Let $F'$ be the restriction of $F^*$ to edges
$e$ with endpoints in each of $L$ and $R$ and consider prices $p'$ where $p'(u) = p^*(u)$ for $u \in L$ and $p'(v) = 0$ for $v \in R$. One can easily check
${\bf E}\left[\sum_{e \in F'} p'(e)\right] = \frac{1}{4} \cdot \sum_{e \in F^*} p^*(e)$.
\end{proof}
This can be efficiently derandomized because we only require pairwise independence of the events $u \in L$ for various $u \in V$, see \cite{randbook} for details behind this technique.


\section{Local-Search Algorithms}\label{sec:algo}

We consider local-search algorithms for \lsp. Recall we are given a bipartite
graph $G = (L \cup R, E)$ where each $v \in L \cup R$ has a capacity $\mu_v \geq 0$, each $e \in E$ has a budget $b_e$,
and we are restricted to setting $p(v) = 0$ for each $v \in R$.

It is clear that there is an optimal solution $p$ such that for each $u \in L$ we have $p(u) = b_e$ for some $e \in \delta(u)$.
Otherwise we could
increase $p(u)$ to the next budget of an edge touching $u$ (or decrease, if $p(u)$ exceeds all budgets of edges touching $u$)
while not decreasing the value of the solution.
So, for $u \in L$ we define $P_u = \{b_e : e \in \delta(u)\}$ to be the set of budgets
of customers interested in item $u$.

We run a local-search approximation based on this observation. Here, a vector $p$ over $L$ is a {\bf pricing} if $p(u) \in P_u$ for each $u \in L$. The local-search algorithm iteratively tries to improve a pricing by changing the price of only one vertex until no such
improvement is possible. The full algorithm is presented in Algorithm \ref{alg:single}. Because a price $p(u)$ is chosen from $P_u$
for each $u \in L$, it is clear that an iteration can be executed in polynomial time.

\begin{algorithm}[H]
\begin{algorithmic}
\STATE let $p$ be any pricing
\WHILE{$\texttt{val}(p') > \texttt{val}(p)$ for some pricing $p'$ with $\texttt{HW}(p, p') = 1$}
\STATE $p \leftarrow p'$
\ENDWHILE
\RETURN $p$
\end{algorithmic}
\caption{Single-Swap Algorithm for \lsp.}
\label{alg:single}
\end{algorithm}

Call a pricing $p$ {\em locally optimal} if it cannot be improved by changing the price for any $u \in L$, note Algorithm \ref{alg:single}
returns a locally-optimal pricing. As is common in local search, we analyze the quality of a locally-optimal solution. In the next subsection we show $\texttt{val}(p) \geq \texttt{val}(p^*)/2$ where $p^*$ is an optimal pricing for the \lsp instance.

The main concern is then the efficiency of the algorithm. Clearly each iteration can be executed in polynomial time but the number of iterations is not apparently bounded. We use a more recent observation from \cite{FKS} to find a solution which may not be a local optimum but is still guaranteed to have value at least $1/2$ of the optimum value (no $\eps$-loss in the guarantee, like one would expect using older methods). The simple application of this trick is discussed in Appendix \ref{app:faster}.


\subsection{Single-Swap Analysis}

We fix $p^*$ to be some particular optimal pricing.
\begin{theorem}\label{thm:localsingle}
For any locally-optimal pricing $p$, $\texttt{val}(p) \geq \texttt{val}(p^*)/2$.
\end{theorem}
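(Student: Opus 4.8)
The plan is to fix the optimal pricing $p^*$ and the locally optimal pricing $p$, together with optimal customer sets $F^* \in \arg\max$ for $p^*$ and $F \in \arg\max$ for $p$, so that $\texttt{val}(p^*) = \sum_{e \in F^*} p^*(e)$ and $\texttt{val}(p) = \sum_{e \in F} p(e)$. For each vertex $u \in L$ I would consider the ``test move'' that changes $p$ only at $u$ by setting its price to $p^*(u)$, leaving every other vertex of $L$ at its current price $p$. Since $p$ is locally optimal, the best $\mu$-matching under this new pricing has value at most $\texttt{val}(p)$; the strategy is to exhibit an \emph{explicit} feasible matching under the new pricing whose value is then a lower bound on nothing useful directly --- rather, the value lost by passing from $p$ to this new pricing and back is what we control. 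Concretely, I would build a feasible candidate matching for the perturbed pricing by starting from $F$, removing the edges incident to $u$, and adding in the $F^*$-edges incident to $u$ (those $e \in \delta(u) \cap F^*$, which are affordable at the new price since $p^*(u) + 0 = p^*(e) \le b_e$ and all other endpoints are in $R$ with price $0$). Because only $u$'s incident edges change and the capacity $\mu_u$ bounds how many we can add, feasibility is maintained if we also respect $\mu_u$ on the $R$-side --- but every $R$ vertex has price $0$, so adding $F^*$-edges at $u$ never violates an $R$-capacity that wasn't already slack, and on $u$ itself we add at most $\mu_u$ edges. Local optimality then gives $\sum_{e \in F \setminus \delta(u)} p(e) + \sum_{e \in F^* \cap \delta(u)} p^*(e) \le \texttt{val}(p) = \sum_{e \in F} p(e)$, i.e. $\sum_{e \in F^* \cap \delta(u)} p^*(e) \le \sum_{e \in F \cap \delta(u)} p(e)$.

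The key step is then to sum this inequality over all $u \in L$. On the left side, each edge $e = uv \in F^*$ with $u \in L, v \in R$ contributes $p^*(e) = p^*(u)$ exactly once (it is counted only at its $L$-endpoint $u$), so the left side sums to $\sum_{e \in F^*} p^*(e) = \texttt{val}(p^*)$. On the right side, each edge $e = uv \in F$ contributes $p(e) = p(u)$ once for its $L$-endpoint $u$, so the right side sums to $\sum_{e \in F} p(u) = \sum_{e \in F} p(e) = \texttt{val}(p)$. Wait --- that would give $\texttt{val}(p^*) \le \texttt{val}(p)$, which is too strong; the gap of $2$ must come from somewhere, so the test move cannot simply add \emph{all} of $F^* \cap \delta(u)$ for free. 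The resolution, and the real content of the argument, is that when we raise $u$'s price from $p(u)$ to $p^*(u)$, edges in $F$ incident to $u$ that were affordable at $p(u)$ may \emph{no longer} be affordable at $p^*(u)$, so we must drop them; and conversely if $p^*(u) < p(u)$ the $F^*$-edges are fine but we lose revenue per edge. So the honest test-move matching is: keep $F \setminus \delta(u)$, and on $u$ use whichever of the two edge-sets $\{e \in F \cap \delta(u) : p^*(u) \le b_e\}$ (repriced at $p^*(u)$) or $F^* \cap \delta(u)$ (at $p^*(u)$) we like --- I will use $F^* \cap \delta(u)$. Local optimality gives $\texttt{val}(p) \ge \sum_{e \in F \setminus \delta(u)} p(e) + \sum_{e \in F^* \cap \delta(u)} p^*(u)$, hence $\sum_{e \in F \cap \delta(u)} p(u) \ge \sum_{e \in F^* \cap \delta(u)} p^*(u)$ as before.

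The missing factor of $2$ therefore must be charged against a \emph{second} family of test moves, or against the fact that one must also account for revenue that $F$ collects at the $R$-endpoints --- but $R$-prices are $0$, so that is not it either. The correct accounting, which I would carry out carefully, is that an edge $e = uv \in F$ with $u \in L$ has $p(e) = p(u)$, but an edge $e = uv \in F^*$ also has $p^*(e) = p^*(u)$ with $u \in L$; however $F$ is a \emph{maximum-weight} $\mu$-matching for pricing $p$, and in particular $\texttt{val}(p) \ge \sum_{e \in F^* : p(e) \le b_e} p(e)$ is \emph{false} in general because $F^*$ need not be feasible as a $\mu$-matching combined with... no, $F^*$ \emph{is} a valid $\mu$-matching. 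The genuine subtlety, and what I expect to be the main obstacle, is handling edges of $F^*$ that are unaffordable under $p$: for $e = uv \in F^*$ with $p(u) > b_e$ we cannot charge $e$ to the single-swap move at $u$ directly, and the factor $2$ arises precisely from splitting $F^*$ into edges that ``survive'' under $p$ and edges that do not, charging the former to $F$ via matching-optimality (losing a factor coming from capacity-sharing between the two matchings, resolved via a Hall-type / augmenting-path decomposition of $F \cup F^*$) and the latter to the single-swap local-optimality moves. I would make this precise by decomposing $F \triangle F^*$ into paths and cycles in the bipartite multigraph, and on each such component balancing the $p$-revenue of $F$-edges against the $p^*$-revenue of $F^*$-edges, using local optimality at each $L$-vertex where an $F^*$-edge must be ``installed.'' Getting the bookkeeping on these components right --- especially at the endpoints of odd paths and at $L$-vertices saturated by $F^*$ but not by $F$ --- is the crux; everything else is summation.
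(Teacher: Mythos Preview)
Your proposal has a genuine gap, and it is exactly at the point where you wave your hands: the claim that ``adding $F^*$-edges at $u$ never violates an $R$-capacity that wasn't already slack'' is false. The price at an $R$-vertex being $0$ has nothing to do with its capacity. If $v \in R$ has $\mu_v = 1$ and some edge $u'v \in F$ with $u' \neq u$ is already present, then inserting $uv \in F^* \cap \delta(u)$ into your candidate matching overloads $v$. This is precisely why your first computation gave the absurd $\texttt{val}(p^*) \le \texttt{val}(p)$: the candidate matching you wrote down is infeasible, so local optimality gives you no inequality at all.

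The factor of $2$ you are hunting for comes from exactly this $R$-side conflict, not from affordability or from any path/cycle decomposition of $F \triangle F^*$. The paper's fix is short: before analysing any swaps, greedily build an injective map $\sigma : B' \to F$ (for some $B' \subseteq F^*$) that pairs each $e^* = uv \in F^*$ with an edge $\sigma(e^*) \in F$ sharing the same $R$-endpoint $v$, whenever such an unpaired $F$-edge exists. Then, for the test swap at $u$, the feasible candidate is
\[
F \;\cup\; \delta_{F^*}(u) \;-\; \delta_F(u) \;-\; \{\sigma(e^*) : e^* \in \delta_{B'}(u)\},
\]
where the last set is what you must additionally discard from $F$ to free up room on the $R$-side. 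Local optimality then gives
\[
0 \;\ge\; p^*(u)\,|\delta_{F^*}(u)| \;-\; p(u)\,|\delta_F(u)| \;-\; \sum_{e^* \in \delta_{B'}(u)} p(\sigma(e^*)).
\]
Summing over $u \in L$, the first term yields $\texttt{val}(p^*)$, the second yields $\texttt{val}(p)$, and the third yields at most $\texttt{val}(p)$ because $\sigma$ is injective (each $F$-edge is discarded in at most one swap). That is the whole proof; your proposed alternating-path decomposition is neither needed nor clearly sufficient as stated.
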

\begin{proof}
Let $B \subseteq E$ be the edges that are bought in the local optimum solution, and $B^* \subseteq E$ the edges
that are bought in the global optimum solution. Thus, $\texttt{val}(p) = \sum_{u \in L} p(u) \cdot |\delta_B(u)|$
and $p(e) \leq b_e$ for each $e \in \delta_B(u)$.

For each $u \in L$, consider the local search step that changes the price of $u$ from $p(u)$ to $p^*(u)$. That is, consider $p^u$ where $p^u(u) = p^*(u)$ and $p^u(u') = p(u')$ for $u' \in L-\{u\}$. We refer to this swap as the $p \rightarrow p^u$ swap. For brevity, let $\Delta_u := \texttt{val}(p^u) - \texttt{val}(p)$ and note $\Delta_u \leq 0$ because $p$ is a local optimum. We provide a lower bound on $\Delta_u$ in a way that relates part of the global optimum with part of the local optimum.

First, construct a subset $B' \subseteq B^*$ and an injective mapping $\sigma : B' \rightarrow B$ iteratively as follows in Algorithm \ref{alg:pair}. Intuitively, it greedily pairs some edges in $B^*$ with edges in $B$ sharing the same endpoint in $R$ until no more pairs can be made.
After this pairing, for each $v \in R$ we either have $\delta_{B^*}(v) \subseteq B'$ or $\delta_{B}(v) \subseteq \sigma(B')$ (or both).

\begin{algorithm}[H]
\begin{algorithmic}
\STATE $B' := \emptyset$
\FOR{each $e^* = uv \in B^*$ where $v \in R$}
\IF{there is some $e \in \delta_B(v)$ such that no $e' \in B^*$ has $\sigma(e') = e$}
\STATE {\bf set} $B' := B' \cup \{e^*\}$ and $\sigma(e^*) := e$
\ENDIF
\ENDFOR
\end{algorithmic}
\caption{Constructing $B'$ and $\sigma$.}
\label{alg:pair}
\end{algorithm}

Now we bound $\Delta_u$. One possible matching with the modified prices $p^u$ is
\[ B^u := B \cup \delta_{B^*}(u) - \delta_B(u) - \{\sigma(e) : e \in \delta_{B'}(u) \}. \]
A simple inspection of the definition of $B'$ and $\sigma$ shows this is feasible. That is, it alters $B$ by swapping $\delta_B(u)$ for $\delta_{B^*}(u)$ and removes edges paired, via $\sigma$, with $\delta_{B^*}(u)$
to make room across nodes in $R$ for these new edges. It could be that some edges in $\delta_{B^*}(u)$ are not paired by $\sigma$ but this indicates their right-endpoints already have enough room already to accommodate these edges without removing other edges from $B$. So, $B^u$ respects the vertex capacities.

Now, $\Delta_u$ represents the cost change when using the maximum value matching with the new profits. This can be bounded as follows,
based on the fact that $B^u$ is a feasible solution under prices $p^u$:
\[ 0 \geq \Delta_u \geq p^*(u) \cdot |\delta_{B^*}(u)| - p(u) \cdot |\delta_B(u)| - \sum_{e' \in \delta_{B'}(u)} p(\sigma(e')). \]
Summing over all $u \in L$ and noting each $e \in B$ has its corresponding term appearing in the last sum for at most one $u \in L$
because $\sigma'$ is one-to-one shows $0 \geq \texttt{val}(p^*) - 2 \cdot \texttt{val}(p)$.
\end{proof}


\subsection{An Improved Multi-Swap Algorithm for Bounded Capacities}
Here we consider the restriction of \lsp to instances where $\mu_u \leq C$ for each $u \in L$ for some fixed constant $C$.\footnote{\cgp with $C=1$ is equivalent to maximum-weight matching. We can easily set prices to get the full budget of all edges from any matching.} Note we do not require capacities of $v \in R$ to be bounded.

Let $d \geq 1$ be a fixed integer: larger $d$ will result in better approximation guarantees with a slower, but still polynomial-time, algorithm.
The multi-swap algorithm we consider is given in Algorithm \ref{alg:multi}. Let $\rho = 1 + C + C^2 + \ldots + C^d =  \frac{C^{d+1}-1}{C-1}$. An iteration
runs in polynomial time because $\rho$ is a constant.
\begin{algorithm}[H]
\begin{algorithmic}
\STATE let $p$ be any pricing
\WHILE{there is a pricing $p'$ with $\texttt{HW}(p, p') \leq \rho$ and $\texttt{val}(p') > \texttt{val}(p)$}
\STATE $p \leftarrow p'$
\ENDWHILE
\RETURN $p$
\end{algorithmic}
\caption{Multi-Swap Algorithm For \lsp.}
\label{alg:multi}
\end{algorithm}
As before, call a pricing $p$ {\em locally optimal} if $\texttt{val}(p') \leq \texttt{val}(p)$ for any pricing $p'$ with $\texttt{HW}(p, p') \leq \rho$.
Recall $P_u$ for $u \in L$ is the set of distinct budgets of the edges incident to $u$ and that, in \lsp, we can assume any pricing $p$ has $p(u) \in P_u$ for all $u \in L$.
So, as $C$ and $d$ are constants, a single iteration can be executed in polynomial time by trying all subsets $S \subseteq L$ of bounded size and, for each of those, trying all $\prod_{u \in S} (|P_u|-1) \leq |E|^{O(1)}$ ways to change the prices of all $u \in S$.
We prove the following approximation guarantee.
\begin{theorem}\label{thm:multigood}
Let $p$ be a locally-optimal solution and $p^*$ a global optimum solution. Then $\texttt{val}(p) \geq \frac{C - C^{-d}}{2C - 1 - C^{-d}} \cdot \texttt{val}(p^*)$.
\end{theorem}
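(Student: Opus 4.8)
The plan is to extend the single‑swap locality argument of Theorem~\ref{thm:localsingle} to ``ball swaps'' in an auxiliary digraph and then aggregate those swaps with the weighting from Theorem~\ref{thm:cover}. Fix the $\mu$-matchings $B$ and $B^*$ realizing $\texttt{val}(p)$ and $\texttt{val}(p^*)$, so $\texttt{val}(p)=\sum_{u\in L}p(u)|\delta_B(u)|$ and $\texttt{val}(p^*)=\sum_{u\in L}p^*(u)|\delta_{B^*}(u)|$. As in the single‑swap proof, first run Algorithm~\ref{alg:pair} to obtain $B'\subseteq B^*$ and the injection $\sigma:B'\to B$ pairing $B^*$-edges with $B$-edges at a common $R$-endpoint, so that for each $v\in R$ either $\delta_{B^*}(v)\subseteq B'$ or $\delta_B(v)\subseteq\sigma(B')$.

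Next I would build a directed graph $H=(L,F)$ from $\sigma$: for each $e'\in B'$ with $\sigma(e')=e$, add a directed edge from the $L$-endpoint of $e'$ to the $L$-endpoint of $e$ (discarding loops). Since every $u\in L$ has $|\delta_{B'}(u)|\le|\delta_{B^*}(u)|\le\mu_u\le C$ and $|\delta_B(u)\cap\sigma(B')|\le|\delta_B(u)|\le\mu_u\le C$, both the out-degree and the in-degree of $H$ are at most $C$. Hence every directed ball $B^+(u_0,r)$ with $r\le d$ has at most $1+C+\cdots+C^r\le\rho$ vertices, so it is an admissible swap set for Algorithm~\ref{alg:multi}, and the in-degree bound lets us apply Theorem~\ref{thm:cover} to $H$ with this $d$ to obtain $\tau$.

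The heart of the argument is, for each $(u_0,r)$ with $\tau(u_0,r)>0$, the local-optimality inequality coming from the swap that resets $p$ to $p^*$ on $B^+(u_0,r)$ (a change of Hamming weight at most $\rho$). To invoke $\texttt{val}(p')\le\texttt{val}(p)$ I must exhibit a feasible $\mu$-matching $B^S$ under the modified prices; the natural candidate deletes $\delta_B$ and inserts $\delta_{B^*}$ over the ball. Because $B^+(u_0,r)$ is closed under out-edges of $H$ everywhere except on its boundary, every $\sigma$-pair lying sufficiently inside the ball is automatically reconciled (the inserted $B^*$-edge's $\sigma$-partner is also deleted), and one checks that the only pairs that can break feasibility at some $v\in R$ — forcing us to drop an inserted edge or delete an extra $B$-edge — are those whose $B^*$-side sits on $\partial B^+(u_0,r)$. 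I would repair feasibility for exactly those pairs and charge the lost value to $\partial B^+(u_0,r)$, obtaining a per-swap inequality of the shape ``$p^*$-value $B^*$ places on $B^+(u_0,r)\ \le\ p$-value $B$ places on $B^+(u_0,r)$ $+$ a correction of the same per-vertex shape as $\texttt{val}(p)$ supported on $\partial B^+(u_0,r)$.'' Then summing over all $(u_0,r)$ weighted by $\tau$: the first identity of Theorem~\ref{thm:cover} turns each term $\sum_{u\in B^+(u_0,r)}p(u)|\delta_B(u)|$ (and its $p^*$ analogue) into $\rho\cdot\texttt{val}(p)$ (resp. $\rho\cdot\texttt{val}(p^*)$), and the second identity turns the boundary correction into $C^d\cdot\texttt{val}(p)$, so collecting terms yields $(\rho+C^d)\texttt{val}(p)\ge\rho\,\texttt{val}(p^*)$, which is exactly the claimed bound after writing $\tfrac{\rho}{\rho+C^d}=\tfrac{C-C^{-d}}{2C-1-C^{-d}}$ using $\rho=\tfrac{C^{d+1}-1}{C-1}$. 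The inequality $\tau(u,r)\le C^{d-r}$ is not needed for correctness, only for the polynomial running time already noted.

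I expect the decisive obstacle to be precisely the feasibility repair and its bookkeeping. A naive repair (simply discarding the offending inserted $B^*$-edge, or deleting its $\sigma$-partner at distance $r+1$) charges a $p^*$-value, or a $p$-value at the wrong boundary, and only yields the much weaker ratio $\tfrac{\rho-C^d}{\rho}$; obtaining the stated ratio requires showing that one can always restore a feasible $\mu$-matching while surrendering at most the $p$-value that $B$ places on $\partial B^+(u_0,r)$ itself — this is the one place where the finer structure of $\tau$ and of Algorithm~\ref{alg:pair} must be exploited, and it is what upgrades the crude factor $2$ of the single-swap analysis to the $\tfrac{2C-1}{C}$-type guarantee.
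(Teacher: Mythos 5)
Your architecture is exactly the paper's: the pairing $\sigma$ from Algorithm~\ref{alg:pair}, the auxiliary digraph $H$ with in/out-degree at most $C$, ball swaps $p\to p^{u,i}$, and $\tau$-weighted aggregation via Theorem~\ref{thm:cover} leading to $(\rho+C^d)\,\texttt{val}(p)\ge\rho\,\texttt{val}(p^*)$. You also correctly identify that the only pairs needing extra repair are those $e^*\in B'$ whose left endpoint lies on $\partial B^+(u_0,r)$. But your resolution of the ``decisive obstacle'' is inverted, and this is a genuine gap. The repair you dismiss as naive and too lossy --- ``deleting its $\sigma$-partner at distance $r+1$'' --- is precisely the paper's repair: the test matching is
\[
B^{u,i} \;=\; B \,\cup\, \delta_{B^*}(B^+(u,i)) \,-\, \delta_B(B^+(u,i)) \,-\, \sigma\bigl(\delta_{B'}(\partial B^+(u,i))\bigr),
\]
and the deleted partners $\sigma(e^*)$ indeed have left endpoints that are out-neighbours of boundary vertices, hence possibly outside the ball. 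This does \emph{not} degrade the ratio to $\frac{\rho-C^d}{\rho}$: the point is that the charge is \emph{indexed} by the boundary vertex (the tail of the $H$-edge, i.e.\ the left endpoint of $\sigma^{-1}(e)$), not by the left endpoint of the deleted edge itself. For a fixed $e\in\sigma(B')$, the total $\tau$-weight of swaps in which $e$ is deleted via this correction is the weight of pairs $(u,i)$ with the left endpoint of $\sigma^{-1}(e)$ in $\partial B^+(u,i)$, which the second bullet of Theorem~\ref{thm:cover} makes exactly $C^d$; added to the weight $\rho$ from the event that $e$'s own left endpoint lies in the ball, each $B$-edge is charged at most $(\rho+C^d)\,p(e)$ in total, and injectivity of $\sigma$ finishes the count.

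By contrast, the repair you propose as the correct one --- restoring feasibility ``while surrendering at most the $p$-value that $B$ places on $\partial B^+(u_0,r)$ itself'' --- does not restore feasibility: to accommodate an inserted $e^*=vw$ with $v\in\partial B^+(u_0,r)$ at its right endpoint $w\in R$, the edge that must leave is $\sigma(e^*)$, and there is no reason its left endpoint is a boundary vertex, so deleting only $B$-edges incident to $\partial B^+(u_0,r)$ can leave $w$ over capacity. So the per-swap inequality you would write down is unsupported. The fix is simply to adopt the $\sigma$-partner deletion and perform the aggregation as above; no finer structure of $\tau$ or of Algorithm~\ref{alg:pair} is needed beyond the two covering identities and injectivity of $\sigma$.
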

So for any fixed $C \geq 2$ and $\eps > 0$ we can choose $d$ large enough to get a $\left(\frac{C}{2C-1} - \eps\right)$-approximation for instances of \lsp where all capacities of nodes in $L$ are bounded by $C$. We can use the trick from Appendix \ref{app:faster} to ensure the number of iterations is polynomially-bounded.

We will soon prove Theorem \ref{thm:cover} stated in Section \ref{sec:results}.
For now, we show how to complete the local search analysis using this result. Let $p^*$ denote an optimal pricing, $B \subseteq E$ the edges
bought in the local optimum $p$, and $B^* \subseteq E$ the edges bought under $p^*$. Let $\sigma : B' \rightarrow B$ be a pairing
constructed in the same way as in the single swap analysis (using Algorithm \ref{alg:pair}) where $B' \subseteq B^*$.

To describe the swaps used in the analysis, first
consider the following auxiliary directed graph $H = (L, F)$ whose nodes are the same as the left-side of this \lsp instance and
whose edges are given as follows. For any $e^* = uv \in B'$, let $w \in L$ be the left-endpoint of $\sigma(e^*)$. Add a directed edge from $u$ to $w$ in $F$.

Observe that both the indegree and outdegree of a vertex in $H$ is at most $C$ by this construction, so Theorem \ref{thm:cover} applies.
Let $\tau : L \times \{0, 1, \ldots, d\}$ be the given weighting of directed balls in $H$. These weights will be used to combine inequalities generated by the test swaps below.

~

~


\noindent
{\bf Choosing the Test Swaps}\\
For any $u \in L$ and any $0 \leq i \leq d$, consider the prices $p^{u,i}$ defined by
\[ p^{u,i}(v) = \left\{\begin{array}{rl}
p^*(v) & \texttt{ if } d_H(u,v) \leq i \\
p(v) & \texttt{ otherwise}
\end{array}\right. \]
Note $\texttt{HW}(p, p^{u,i}) = |B^+(u,i)| \leq C^0 + C^1 + \ldots + C^i \leq \rho$ because the outdegree of each vertex is at most $C$, so $p \rightarrow p^{u,i}$ is a valid test swap.
Let $\Delta_{u,i} =  \texttt{val}(p^{u,i}) -  \texttt{val}(p)$ and note $\Delta_{u,i} \leq 0$ by local optimality.
We bound the difference by explicitly describing a feasible set of edges $B^{u,i}$, namely:
\[ B^{u,i} = B \cup \delta_{B^*}(B^+(u,i)) - \delta_B(B^+(u,i)) - \sigma(\delta_{B'}(\partial B^+(u,i))). \]

That is, add all edges from $B^*$ touching a vertex in the directed ball $B^+(u,i)$ and remove all edge from $B$ that either touch $B^+(u,i)$ or are paired (via $\sigma$)
with an edge in $B'$ that touches $\partial B^+(u,i)$. It is again easy to check that $(p^{u,i}, B^{u,i})$ is a feasible solution: across $u \in L$ we simply exchanged edges in $B$ touching $U$
for edges in $B^*$ touching $u$ and we ensured any new $e^* \in B'$ has $\sigma(e^*)$ removed to make room for $e^*$ across its right-endpoint.
Observe for any $e^* \in \delta_{B'}(B^+(u,i-1))$ that $\sigma(e^*)$ is already removed when $\delta_B(B^+(u,i))$ is removed from $B$, which is why the last part of the definition of $B^{u,i}$
only uses the boundary $\partial B^+(u,i)$ instead of all of $B^+(u,i)$ to remove the remaining edges of $B$ that are paired with $\delta_{B'}(B^+(u,i))$.

Weighting the inequalities by the values $\tau(u,i)$ from Theorem \ref{thm:cover},
\begin{eqnarray}
0 & \geq & \tau(u,i) \cdot \Delta_{u,i} \geq \tau(u,i) \cdot \left(\sum_{e \in B^{u,i}} p^{u,i}(e) - \sum_{e \in B} p(e)\right) \nonumber \\
& = & \tau(u,i) \cdot \sum_{e \in B^* \cap B^{u,i} } p^*(e) - \tau(u,i) \cdot \sum_{e \in B-B^{u,i}} p(e). \label{eqn:lower} 
\end{eqnarray}

It remains to consider the contribution of each edge in $B^*$ and $B$ to this bound if we sum over all $u \in L, 0 \leq i \leq d$.
Observe an edge $e = vw \in B^*$ is ``swapped in'' in this analysis for the swap $p \rightarrow p^{u,i}$ if and only if $v \in B^+(u,i)$.
So by Theorem \ref{thm:cover}, the total contribution of $p^*(e)$ to $\sum_{u,i} \tau(u,i) \cdot \Delta_{u,i}$ is precisely $\frac{C^{d+1}-1}{C-1} \cdot p^*(e)$.

On the other hand, an edge $e = vw \in B$ is ``swapped out'' in this analysis for the swap $p \rightarrow p^{u,i}$ if and only if $v \in B^+(u,i)$ or 
$\sigma^{-1}(e) \in \partial B^+(u,i)$ (if $e$ is indeed paired by $\sigma$).
Again by Theorem \ref{thm:cover}, the total $\tau$-weight of the first event
is exactly $\frac{C^{d+1}-1}{C-1}$ and, if $\sigma^{-1}(e)$ is defined, the total $\tau$-weight of the second event is exactly $C^d$. Thus,
\[ 0 \geq \sum_{u \in L\\ 0 \leq i \leq d} \tau(u,i) \cdot \Delta(u,i) \geq \frac{C^{d+1}-1}{C-1} \cdot \texttt{val}(p^*) - \left(\frac{C^{d+1}-1}{C-1} + C^d\right) \cdot \texttt{val}(p), \]
which proves Theorem \ref{thm:multigood}.


\subsection{Proof of Theorem \ref{thm:cover}}

Before presenting the full proof to conclude the analysis, we consider a simpler setting to develop intuition. Suppose,
for each $0 \leq i \leq d$ and each $u \in L$ there are precisely $C^i$ nodes $w \in L$ with $d_H(w,u) = i$. This would happen if, say, $H$ has indegree and outdegree exactly $C$ at each
vertex and the undirected version of $H$ has girth $> 2d$.
Then setting $\tau(u,i) = 1$ if $i = d$ and 0 otherwise for each $u \in L$ would suffice.

In the general setting without this assumption, we have to consider other directed balls $B^+(u,i)$ for different $0 \leq i \leq d$ and with, perhaps, larger weights than 1.
This is because the radius-$d$ balls $B^+(u,d)$ themselves for various $u \in L$ do not cover each $v \in L$ precisely $\sum_{i=0}^d C^i$ times.

Inductively define $\tau(u,i)$ for $u \in L$ and $0 \leq i \leq d$ as follows:
\[ \tau(u,i) = \left\{\begin{array}{cl}
1 & \texttt{ if } i = d, \\
C^{d-i} - \displaystyle\sum_{j=i+1}^{d} \sum_{v \in L:d_H(v,u) = j-i} \tau(v,j) & \texttt{ otherwise.}
\end{array}\right. \]

The inspiration behind this construction is that in general we would have $d_H(u,v) = i$ for only {\em at most} $C^i$ nodes $u$. So we consider smaller directed balls to make up this deficiency. If we think that the distance $i$ requirement
for each $v \in V$ is exactly $C^i$, then for each $u \in L$ the ball $B^+(u,j)$ contributes to the distance $d-j + d_H(u,v)$ requirement for each $v \in B^+(u,j)$. See Figure \ref{fig:swaps} for an illustration.

The recurrence above ensures the total contribution to the distance $i$ requirement for each $v$ by all all directed balls is exactly $C^i$.  We formalize this idea and show the $\tau$ values are nonnegative
in Lemma \ref{lem:cnt} below.

\begin{figure}[h]
\begin{center}
\includegraphics[width=0.3\textwidth]{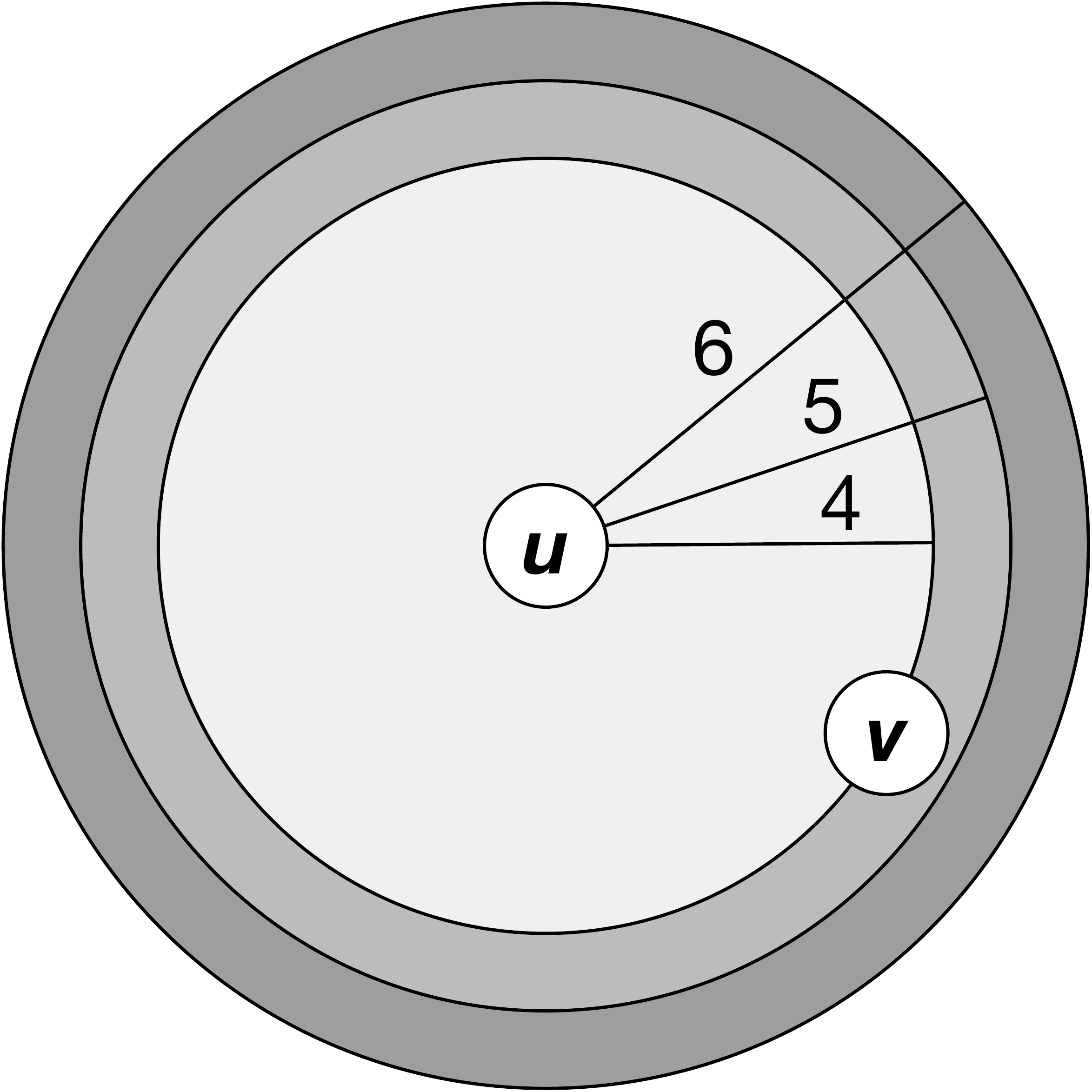}
\end{center}
\caption{Illustration of the case $d = 6$ where $d_H(u,v) = 4$. The directed ball $B^+(u,6)$ contributes to the ``distance 4'' requirement
for $v$, $B^+(u,5)$ contributes to the ``distance 5'' requirement for $v$, and $B^+(u,4)$ contributes to the ``distance 6'' requirement for $v$.}  \label{fig:swaps}
\end{figure}

\begin{lemma}\label{lem:cnt}
For each $u \in L, 0 \leq i \leq d$
we have $\displaystyle\sum_{j=i}^{d} \sum_{v \in L : d_H(v,u) = j-i} \tau(v,j) = C^{d-i}$
and $0 \leq \tau(u,i) \leq C^{d-i}$.
\end{lemma}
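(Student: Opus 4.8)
The plan is to prove the two claims in Lemma~\ref{lem:cnt} by a single downward induction on $i$, going from $i = d$ down to $i = 0$, since the definition of $\tau(u,i)$ for $i < d$ refers only to values $\tau(v,j)$ with $j > i$. I would actually prove a slightly strengthened statement so that the induction carries: namely that for every $u \in L$ and every $0 \le i \le d$, the ``total contribution to the distance-$i$ requirement of $u$'', defined as $T(u,i) := \sum_{j=i}^{d} \sum_{v \in L : d_H(v,u) = j-i} \tau(v,j)$, equals exactly $C^{d-i}$, and simultaneously that $0 \le \tau(u,i) \le C^{d-i}$.

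\textbf{Base case.} For $i = d$ the sum $T(u,d)$ has only the $j = d$ term, which is $\sum_{v : d_H(v,u) = 0} \tau(v,d) = \tau(u,d) = 1 = C^{d-d}$, using that the only node at directed distance $0$ from $u$ is $u$ itself. Also $\tau(u,d) = 1$ clearly satisfies $0 \le 1 \le C^0 = 1$.

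\textbf{Inductive step for the equality $T(u,i) = C^{d-i}$.} Fix $i < d$ and assume the full statement holds for all indices larger than $i$. Split $T(u,i)$ into the $j = i$ term and the rest: the $j=i$ term is $\tau(u,i)$ (only $v = u$ is at distance $0$), and the remaining sum is $\sum_{j=i+1}^{d} \sum_{v : d_H(v,u) = j-i} \tau(v,j)$, which is precisely the subtracted term in the definition of $\tau(u,i)$. Hence $T(u,i) = \tau(u,i) + \big(C^{d-i} - \tau(u,i)\big) = C^{d-i}$, i.e. the equality is essentially immediate from how $\tau(u,i)$ was defined to cancel the deficiency — this part is purely bookkeeping.

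\textbf{Inductive step for nonnegativity $0 \le \tau(u,i) \le C^{d-i}$ — the main obstacle.} From the definition, $\tau(u,i) = C^{d-i} - \sum_{j=i+1}^{d} \sum_{v : d_H(v,u) = j-i} \tau(v,j)$, so $\tau(u,i) \le C^{d-i}$ holds as soon as the subtracted sum is nonnegative, which follows from the induction hypothesis $\tau(v,j) \ge 0$ for $j > i$. The real work is the lower bound $\tau(u,i) \ge 0$, i.e. showing $\sum_{j=i+1}^{d} \sum_{v : d_H(v,u) = j-i} \tau(v,j) \le C^{d-i}$. The idea is to compare this with the total contribution quantity one level up. Setting $k = j - 1 \ge i$, the sum becomes $\sum_{k=i}^{d-1} \sum_{v : d_H(v,u) = k - i + 1} \tau(v, k+1)$. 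I want to bound this by grouping the nodes $v$ at distance $k-i+1$ from $u$ according to their predecessor along a shortest path. Concretely, since $H$ has indegree at most $C$, the set $\{v : d_H(v,u) = r+1\}$ injects, via "successor on a shortest $v \to u$ path'', into at most $C$ copies over $\{w : d_H(w,u) = r\}$; more useful here is that each such $v$ has some out-neighbour $w$ with $d_H(w,u) = r$, but for the counting we want indegree. Let me instead use: every $v$ with $d_H(v,u) = r+1$ has an out-edge to some $w$ with $d_H(w,u) = r$, and $w$ has at most $C$ in-neighbours, so the map $v \mapsto w$ is at most $C$-to-one onto a subset of $\{w : d_H(w,u) = r\}$. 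Then, using the monotonicity $\tau(v,k+1) \le \tau$-values are controlled... — here I would need the bound $\tau(v, k+1) \le C^{d-k-1}$ from the induction hypothesis, but that alone gives only a crude estimate. The clean approach: observe that $\sum_{k=i}^{d-1}\sum_{v: d_H(v,u)=k-i+1}\tau(v,k+1)$ is exactly $\sum_{w : d_H(w,u) = \text{(one less)}}$ contributions, and more precisely I claim $\sum_{j=i+1}^d \sum_{v:d_H(v,u)=j-i}\tau(v,j) = \sum_{w} \big(\text{indeg-weighted } T\text{-terms}\big)$, so using $T(w, i) = C^{d-i}$ for each $w$ at distance... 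Actually the cleanest route is: for each $w$ at directed distance exactly $1$ from $u$ (there are at most $C$ of them, call this set $N$), one has $\sum_{j \ge i+1}\sum_{v : d_H(v,u) = j-i}\tau(v,j) \le \sum_{w \in N} \sum_{j \ge i+1} \sum_{v : d_H(v,w) = j-i-1} \tau(v,j) = \sum_{w \in N} T(w, i+1) = |N| \cdot C^{d-i-1} \le C \cdot C^{d-i-1} = C^{d-i}$, where the first inequality is because any $v$ with $d_H(v,u) = j-i \ge 1$ has some out-neighbour $w$ with $d_H(w,u) = j-i-1$ and $d_H(w,u)\ge 1$ forces $w\ne u$ reachable... wait, need $w$ at distance exactly $1$, which requires $j - i = $ could be large. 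Let me restate: pick $w$ to be the second vertex on a shortest $v \to u$ path, so $d_H(w, u) = (j-i) - 1$ and $d_H(v,w) \le 1$... no. I will fix the argument by taking $w$ = the vertex right \emph{after} $v$: then $d_H(w,u) = j-i-1$ and we need $w$ to range over distance-$1$ nodes, which is wrong in general. The correct decomposition, which I expect to be the subtle point to get exactly right, is to route by the \emph{first} step out of $u$-direction: every $v \ne u$ reachable toward $u$... hmm — since $H$'s edges point the way they do, "$d_H(v,u) = r$'' means $v$ reaches $u$; the last edge before $u$ enters $u$ from some in-neighbour of $u$. So let $N$ be the in-neighbours of $u$ (at most $C$ of them, since indegree $\le C$); every $v$ with $d_H(v,u) = j-i \ge 1$ satisfies $d_H(v, w) = j-i-1$ for some $w \in N$ (the penultimate vertex). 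Therefore $\sum_{j=i+1}^{d}\sum_{v:d_H(v,u)=j-i}\tau(v,j) \le \sum_{w \in N}\sum_{j=i+1}^d \sum_{v : d_H(v,w) = j-i-1}\tau(v,j) = \sum_{w\in N} T(w,i+1)$, and by the induction hypothesis each $T(w,i+1) = C^{d-i-1}$, so the total is at most $|N|\,C^{d-i-1} \le C^{d-i}$, giving $\tau(u,i) \ge 0$. This completes the induction; the one place demanding care is justifying that inequality (it is an inequality, not an equality, because a single $v$ may be counted under several penultimate vertices $w$, and because some $\tau$ values being possibly positive is exactly what keeps the bound from being tight), and verifying all the distance arithmetic with the indegree bound lines up.
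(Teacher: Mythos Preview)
Your final argument is correct and is essentially identical to the paper's proof: both use downward induction on $i$, get the equality and upper bound immediately from the recurrence, and for nonnegativity route each $v$ with $d_H(v,u)=j-i\ge 1$ through the penultimate vertex $w$ on a shortest $v\!\to\! u$ path (so $w$ is an in-neighbour of $u$ and $d_H(v,w)=j-i-1$), then bound the double sum by $\sum_{w:\,d_H(w,u)=1} T(w,i+1)=|N|\cdot C^{d-i-1}\le C^{d-i}$ using the indegree bound and the inductive equality at level $i+1$. The exploratory detours in your write-up (trying out-neighbours, trying the first step of the path) should be pruned, but the destination you reach is exactly the paper's argument.
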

\begin{proof}
The equality is by construction and the observation that $d_H(v,u) = 0$ if and only if $v = u$.
The inequalities are proven inductively with the base case $i = d$ being given.
Now suppose for $i < d$ we know $0 \leq \tau(u,j) \leq C^{d-j}$ for any $i < j \leq d$ and any $u \in L$.
By the recurrence for $\tau(u,i)$ and because $\tau(v,j) \geq 0$ for any $i < j \leq d$ and $v \in V$, we see $\tau(u,i) \leq C^{d-i}$.
Next, we prove $\tau(u,i) \geq 0$ for each $u \in L$.

For any $i < j \leq d$ and any $v \in L$ with $d_H(v,u) = j-i$, there is some $w \in L$ such that $d_H(v,w) = i-j-1$ and $d_H(w,u) = 1$.
That is, consider a shortest $v-u$ path $P$ in $H$, as $i < j$, we have $v \neq u$ so 
the second-last node on this path is a node $w$ whose distance to $u$ is 1 (it could be $w=v$, if $j-i = 1$).

From this and using the equality from the first part of the theorem statement, we bound the double sum in the recurrence defining $\tau(u,i)$ by
\begin{eqnarray*}
\sum_{j=i+1}^d  ~ \sum_{v \in L : d_H(v,u) = j-i} \tau(v,j) & \leq & \sum_{w : d_H(w,u) = 1} ~ \sum_{j=i+1}^d ~ \sum_{v \in L : d_H(v,w) = j-(i+1)} \tau(v,j) \\
& = & \sum_{w : d_H(w,u) = 1} C^{d-(i+1)} \\
& \leq & C^{d-i}.
\end{eqnarray*}
The last bound follows as each $v \in L$ has indegree at most $C$ in $H$.
Thus, from the recurrence again, we see $\tau(u,i) \geq 0$.
\end{proof}

Lemma \ref{lem:cnt} finishes the proof of Theorem \ref{thm:cover} as follows. The first bullet point in Theorem \ref{thm:cover} follows by summing over all $0 \leq i \leq d$. The second point
follows by fixing $i = 0$.


\section{Locality Gaps}\label{sec:locality}
For an instance of \lsp and a value $\rho \geq 1$, we say its {\bf locality gap with respect to the $\rho$-swap heuristic} is the maximum of $\frac{\texttt{val}(p^*)}{\texttt{val}(p)}$ over pricings $p$ that cannot be improved by changing the value of up to $\rho$ entries $p(v)$ where $p^*$ is any optimal pricing. That is, the locality gap is the worst possible approximation guarantee of a local optimum for that instance.


\subsection{Single-Swap}
\begin{theorem}
For any $C \geq 1$ and $\eps > 0$, there is an instance $\Phi = (G,\mu,b)$ of \lsp where all $u \in L$ have capacity $C$ and all $v \in R$ have capacity 1 such that the locality gap of $\Phi$ is at least $2 - \eps$ with respect to the single-swap heuristic.
\end{theorem}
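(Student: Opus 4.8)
The plan is to exhibit a family of \lsp instances, parametrized by a large integer $k$, where a natural "uniform low price" pricing is a single-swap local optimum but the optimal pricing earns almost twice as much. The single-swap analysis in Theorem~\ref{thm:localsingle} loses a factor of $2$ precisely because when we test the swap $p \to p^u$, we may be forced to evict an edge $\sigma(e^*) \in B$ whose profit we charge against, even though that evicted edge is incident to a different vertex $u' \ne u$ of $L$. To make the gap tight, I would build an instance where this "collateral eviction" is unavoidable for every single-vertex swap, so the local optimum cannot see any improvement even though globally one can coordinate the price changes at two vertices simultaneously.

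Concretely, I would take $L = \{u_1, \dots, u_n\}$ and give each $u_j$ a high-budget edge (budget roughly $2$) and a low-budget edge (budget $1$), arranging the right-side vertices in $R$ (each of capacity $1$) so that the high-value edge at $u_j$ competes for a right-vertex with the low-value edge at $u_{j+1}$ (indices mod $n$), forming a cycle. Set all $u_j$ capacities to $C$. The local optimum $p$ sets every $p(u_j) = 1$ and buys all the low-budget edges, collecting revenue about $n$; raising a single $p(u_j)$ to $2$ lets $u_j$ sell its high-budget edge, but this edge's right-endpoint is already occupied by $u_{j+1}$'s low-budget edge, which must then be evicted — and $u_{j+1}$ cannot re-sell into that slot at the new price, so the swap does not increase $\texttt{val}$. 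Meanwhile the global optimum sets all $p(u_j) = 2$, but to make this feasible on the shared right-vertices it keeps only the high-budget edges, and by choosing the cycle/capacity structure so that these can all be simultaneously satisfied, it earns revenue about $2n$ (minus lower-order terms). Choosing $n$ large enough relative to $\eps$ gives locality gap $\ge 2 - \eps$. One should double-check the claim with $C$ copies per $u_j$: duplicate the low-budget edges $C$-fold at each vertex (or route them to $C$ distinct capacity-$1$ right-vertices) so the $\mu_{u_j} = C$ capacities are fully used in the local optimum and there is genuinely no slack to absorb a swap.

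The steps, in order, would be: (1) define the graph $G = (L \cup R, E)$, the capacities ($\mu_{u} = C$ for $u \in L$, $\mu_v = 1$ for $v \in R$), and the two budget classes; (2) exhibit the candidate local optimum $p$ with $p(u_j) = 1$ for all $j$ and compute $\texttt{val}(p)$ exactly by describing the $\mu$-matching $B$ it induces; (3) verify local optimality — for each $u_j$ and each alternative price in $P_{u_j} = \{1, 2\}$, argue that the best $\mu$-matching under the modified prices has value $\le \texttt{val}(p)$, the crux being that raising one price forces an eviction on a shared, capacity-$1$ right-vertex with no compensating gain; (4) exhibit the global pricing $p^*$ with $p^*(u_j) = 2$, describe its feasible edge set $B^*$ explicitly, and lower-bound $\texttt{val}(p^*)$; (5) take the ratio and choose $n$ large so it exceeds $2 - \eps$.

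The main obstacle is step (3): making the local-optimality check airtight. Changing $p(u_j)$ affects both the set of edges at $u_j$ that remain affordable \emph{and}, through the right-side capacity constraints, which edges at the neighboring vertices can be packed; I need the cyclic "competition" gadget to be rigid enough that no single price change — in either direction — yields even a tie-breaking improvement, while still leaving the fully-coordinated global solution feasible. Getting the boundary/wrap-around behavior of the cycle right (so that there is no "seam" vertex where a local swap happens to help) and confirming that the low-budget-edge multiplicities correctly saturate the capacity $C$ are the fiddly parts; the revenue computations in steps (2) and (4) are then routine arithmetic.
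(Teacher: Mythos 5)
Your construction is close in spirit to the paper's, but the cyclic ``wrap-around'' you introduce to avoid a seam is exactly what breaks local optimality, so step (3) of your plan would fail. Consider your cycle with $C=1$ for concreteness: $u_1 - w_1 - u_2 - w_2 - \cdots - u_n - w_n - u_1$, with $u_j w_j$ of budget $2$ and $w_j u_{j+1}$ of budget $1$. Under $p(u_j)=1$ for all $j$, all $2n$ edges are affordable at value $1$ and the best $\mu$-matching has value $n$. Now raise $p'(u_1)=2$: the edge $u_1 w_1$ becomes value $2$ (and $w_n u_1$ becomes unaffordable). Take $u_1 w_1$ for value $2$; deleting $u_1$ and $w_1$ leaves the path $u_2 - w_2 - \cdots - u_n - w_n$, which still has a perfect matching of $n-1$ unit-value edges (e.g.\ $u_j w_j$, which remain affordable at price $1$ under budget $2$). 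Total value $2 + (n-1) = n+1 > n$, so the swap improves and your candidate $p$ is \emph{not} a local optimum. The same calculation persists with $C$ parallel copies. The underlying issue is that a single-vertex swap in a cycle deletes one $L$-vertex and one $R$-vertex, leaving a balanced graph with no deficiency; the gain of $2C$ is only partially offset by the $C$ lost unit-value edges.

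The paper instead takes a path $u_1, v_{1}, u_2, v_{2}, \ldots, u_n, v_n$ (times $C$), where $u_1$ deliberately has \emph{no} high-budget edge: the alternation begins and ends with a low-budget edge. After fixing $u_i$'s $C$ high-budget edges (value $2C$) and deleting $u_i$ and the $v_{i-1,\cdot}$'s, the left piece $u_1,\ldots,u_{i-1}$ has $i-1$ left-vertices but only $i-2$ right-vertices, a deficiency of $C$ in matching size; that deficiency exactly cancels the extra $C$ of value, giving a tie rather than an improvement, and the locality gap is then $\frac{2C(n-1)}{Cn} \to 2$. In short, the ``seam'' you were trying to eliminate is not a bug to be fixed; it is the essential structural feature that forces a deficit and makes $p$ locally optimal. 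Replacing your cycle with this dangling path (and duplicating both edge types $C$-fold) would repair the argument.
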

Thus, our single-swap analysis is tight. While this is most striking when $C = 1$, we remark it is still interesting for larger $C$ because it is not obvious, {\em a priori}, that the single-swap algorithm's analysis cannot be improved as the capacities in $L$ increase.
\begin{proof}
For $n \geq 2$, consider the graph $G^{n,C} = (L \cup R, E)$, $L = \{u_i : 1 \leq i \leq n\}$ and $R = \{v_{i,j} : 1 \leq i \leq n, 1 \leq j \leq C\}$.
The edges are the union of the edges on the paths $P_j= \{u_1, v_{1,j}, u_2, v_{2,j}, \ldots, u_n, v_{n,j}\}$ for $1 \leq j \leq C$.
We use $\mu_u = C$ for $u \in L$ and $\mu_v = 1$ for $v \in R$.

The budgets are given as follows. First let $E_{LOC} = \{u_iv_{i,j} : 1 \leq i \leq n, 1 \leq j \leq C\}$ and
$E_{OPT} = \{u_iv_{i-1,j} : 2 \leq i \leq n, 1 \leq j \leq C\}$. All edges in $E_{LOC}$ have a budget of 1 and all edges in $E_{OPT}$ have
a budget of 2.

Using $p^*(u) = 2$ for every vertex in $L$ and corresponding edges $E_{OPT}$ is a solution with value of $2C(n-1)$. Now consider the solution with $p(u) = 1$ for each vertex in $L$. This solution is just $E_{LOC}$ with a value of $Cn$, which can be seen to be the optimal matching
under these prices because the capacity of every vertex in $L$ is saturated by $E_{LOC}$. Note $\texttt{val}(p) = \left(\frac{1}{2-2/n}\right)\cdot \texttt{val}(p^*)$.
We claim this is a local optimum with respect to the single-swap heuristic.

The only possible swap is to change the price of some $u_i$ to 2. If $i = 1$, this is clearly not an improving swap because no edge incident to $u_1$ can afford the new price and all other vertices are priced 1 so no matching has value $\geq n-1$. So suppose $i \geq 2$.

The only edges incident to $u_i$ that can afford this new price are $(u_i, v_{i-1,j})$ for all $1 \leq j \leq C$. Furthermore, for any
$\mu$-matching $B$ that does not use an edge $e \in \delta_{E_{OPT}}(u_i)$, we can get a better $\mu$-matching (with respect to the new prices)
by adding $e$ to $B$ and, if necessary, removing some edge of $E_{LOC} \cap B$ sharing the right-endpoint with $e$.

Thus, the optimum matching after changing $p(u)$ to 2 uses all of $\delta_{E_{OPT}} \cap B$. After fixing these edges, which have total value $2C$, it is easy to see the best matching we can get in the graph obtained by removing the endpoints of edges $\delta_{E_{OPT}}(u) \cap B$ (plus all edges incident to these endpoints) has value at most $C(n-2)$. So this is not an improving swap.
\end{proof}


\subsection{Multi-Swap}
Before presenting the lower bound, we describe a construction of a layered graph with high girth and particular degree bounds.
The construction is depicted in Figure \ref{fig:lowerbound}.

\begin{figure}[h]
\includegraphics[width=1.0\textwidth]{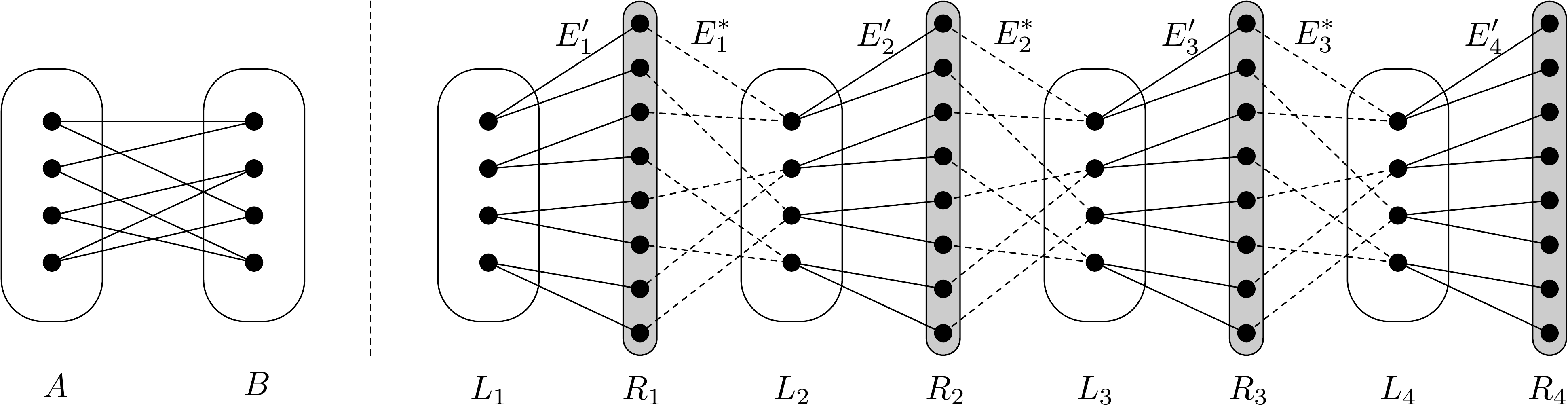}
\caption{An example with $C=2$.
{\bf Left}: the bipartite graph $H$ constructed from $H'$ in the proof of Lemma \ref{lem:girth}.
{\bf Right}: the resulting graph $G^t$ with $t = 4$. Edges of $E'$ are solid and the edges of $E^*$ are dashed.}\label{fig:lowerbound}
\end{figure}

\begin{lemma}\label{lem:girth}
For any $C \geq 2, \rho \geq 1$ and $t \geq 1$ there is a simple, layered, and bipartite graph $G^t = (L \cup R, E)$ with consecutive layers $L_1, R_1, L_2, \ldots, L_t, R_t$ where the subgraph
induced by $L_i$ and $R_i$ is a $(C,1)$-biregular bipartite graph and the subgraph induced by $R_i$ and $L_{i+1}$ is a $(1,C)$-biregular bipartite graph (for each relevant $i$). Further, $G^t$ has girth exceeding $2t \cdot \rho$.
\end{lemma}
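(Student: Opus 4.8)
The plan is to build $G^t$ in two stages: first construct a single high-girth ``building block'' bipartite graph with the right biregularity, then chain $t$ copies of it together into the layered graph, being careful that chaining does not create short cycles.

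\textbf{Step 1: the building block.} Start from the existence of simple $C'$-regular graphs of arbitrarily large girth (Sachs \cite{S63}), where $C'$ is a degree we will choose. From such a graph $H'$ I would form a bipartite graph $H$ by a standard subdivision/incidence trick: for instance, subdividing each edge of $H'$ once yields a bipartite graph whose two sides are the original vertices (degree $C'$) and the edge-midpoints (degree $2$), and girth at least twice that of $H'$. More generally, to get a $(C,1)$-biregular part glued to a $(1,C)$-biregular part I would take the vertex set $L_i$ to be one part, $R_i$ to be a set of ``degree-1 on the left, but grouped'' gadget vertices, and so on; the point is only that each $R_i$-vertex has degree $1$ toward $L_i$ and degree $C$ toward $L_{i+1}$ (and symmetrically), so $R_i$-vertices act as the right-endpoints $v$ in the \lsp construction with capacity $1$, while $L$-vertices have capacity $C$. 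The girth of this block can be made to exceed any prescribed bound by choosing the girth of $H'$ large enough, since the transformation only inflates distances by a constant factor.

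\textbf{Step 2: chaining.} Given the block realizing one $(C,1)$-then-$(1,C)$ stage, take $t$ disjoint copies and identify the ``output'' left-layer of copy $\ell$ with the ``input'' left-layer $L_{\ell+1}$ of copy $\ell+1$ (or insert the copies so the layers line up as $L_1, R_1, L_2, \ldots, L_t, R_t$). This preserves the biregularity conditions layer-by-layer by construction. For the girth bound: any cycle in $G^t$ is a closed walk through consecutive layers; because the layered structure forces a cycle to either stay within constantly many consecutive blocks or to traverse many layers, and within any window of $\approx 2t$ layers the induced subgraph is (a piece of) a single high-girth block, I would argue that a cycle either lives inside one high-girth block — hence is long — or must span enough layers that its length is forced to exceed $2t\rho$ simply by counting layer-crossings. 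Concretely: pick the block's girth to exceed $2t\rho$ at the outset; then a short cycle cannot fit in one block, and a cycle spanning multiple blocks has length at least (number of blocks it touches) times (minimum crossing length), which I can likewise push above $2t\rho$.

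\textbf{Main obstacle.} The delicate point is the girth argument for the chained graph: short cycles can in principle be created at the interfaces where consecutive blocks are identified, even when each block individually has huge girth. The cleanest fix is to make the identification ``non-collapsing'' — each glued layer is a genuine independent set shared between exactly two blocks, and no two vertices are over-identified — so that any cycle crossing an interface must immediately re-enter one of the two incident blocks and travel a long way before it can return, exploiting that block's large girth. I expect the bulk of the write-up to be this case analysis (cycle contained in a block vs. cycle crossing $k \geq 1$ interfaces), together with the routine verification that subdividing/incidence-constructing from Sachs's graphs delivers exactly the stated $(C,1)$- and $(1,C)$-biregular layers. Everything else — disjoint union, layer identification, degree bookkeeping — is bookkeeping.
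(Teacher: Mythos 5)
Your overall plan (Sachs graphs, a subdivision-style bipartite block, then chaining $t$ copies) is in the right spirit, but the girth argument for the chained graph --- which you correctly flag as the delicate point --- has a genuine gap that your proposed fix does not close. The dichotomy ``a cycle lives inside one block or is forced long by layer-counting'' is false on both horns. A cycle can weave back and forth between two adjacent blocks across the single shared layer $L_{\ell+1}$: a length-$4$ excursion from $u$ to $u'$ through block $\ell+1$ followed by a length-$4$ return from $u'$ to $u$ through block $\ell$ yields an $8$-cycle no matter how large each block's individual girth is, because the two excursions live in different blocks and the girth of their union at the interface is not controlled by either block alone (high girth does not prevent two vertices of the shared layer from being at distance $4$ in both blocks simultaneously under an arbitrary identification). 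And layer-counting is far too weak: with only $2t$ layers, a cycle spanning layers $j_{\min}$ through $j_{\max}$ is only forced to have length $\Omega(j_{\max}-j_{\min}) = O(t)$, nowhere near the required $2t\rho$ once $\rho > 2$. Your ``non-collapsing identification'' remark does not force a cycle crossing an interface to travel far before returning to the shared layer. (Separately, you misstate the biregularity: each $R_i$-vertex has degree $1$ toward $L_{i+1}$, not $C$; it is the $L_{i+1}$-vertices that have degree $C$ into $R_i$. This degree-$(1,1)$ structure of the $R_i$-vertices, i.e.\ that they are subdivision points, is exactly what makes the correct argument go through.)

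The paper closes this gap by not using disjoint blocks at all. It takes a single $(2C)$-regular graph $H$ of girth $>\rho t$, orients its edges along an Eulerian circuit so that every vertex has in- and out-degree $C$, lets $R_i$ be a copy of the edge set of $H$, and joins the copy of a directed edge $e = u \to v$ in $R_i$ to the copy of $u$ in $L_i$ and the copy of $v$ in $L_{i+1}$. Thus every inter-layer adjacency is a copy of the incidence structure of the \emph{same} graph $H$, and $G^t$ is a subdivision of a layered graph built from $H$. Any simple cycle of $G^t$, however it weaves among layers, then projects to a closed walk in $H$ of half its length whose edge set contains a cycle of $H$ (the Eulerian orientation guarantees the down-edges and up-edges at each $L$-vertex correspond to disjoint sets of edges of $H$, so the projected walk never collapses), and the girth of $H$ finishes the proof. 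If you wish to keep your two-stage presentation, you must choose the identifications between consecutive blocks so that their union is again the incidence structure of one high-girth graph --- which is precisely what the Eulerian-orientation construction accomplishes, and which your proposal leaves unaddressed.
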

Note, this implies $|L_i| = |L|/t$ and $|R_i| = |R|/t = C \cdot |L_i|$ for each $1 \leq i \leq t$.
\begin{proof}
In \cite{S63}, it is is shown that for any $C', g \geq 3$ there is a simple, connected $C'$-regular graph whose girth ({\em i.e.} shortest cycle length) is at least $g$. In our setting, this means a $(2C)$-regular graph with
girth exceeding $\rho \cdot t$ exists where $\rho, t$ are as in the statement of Lemma \ref{lem:girth}.
Call this graph $H = (V,F'')$.

As $H$ is $(2C)$-regular and connected it contains an Eulerian circuit.
Direct all edges along this circuits so each vertex of $H$ has indegree $C$ and outdegree $C$. Finally, build a bipartite graph $H = (A \cup B, F)$ where $A$ and $B$ are
disjoint copies of $V$ where $u \in A$ and $v \in B$ is an edge of $F$ if $uv$ is a directed edge obtained when we directed the Eulerian circuit.

Build the following layered graph $G^t = (L \cup R, E)$. For each $1 \leq i \leq t$, let $L_i$ be a set of size $V$
and $R_i$ be a set of size $|F|$. Recall that both $A$ and $B$ in $H$ are viewed as copies of $V$ in $H'$, so each $L_i$ can be viewed either as a copy of $A$ or as a copy of $B$, when appropriate.
Now, for each $u \in A$, each $e \in \delta_H(u)$, and each layer $1 \leq i \leq t$, add an edge in $G^t$ from the copy of $u$ in $L_i$ to the copy of $e$ in $R_i$. Call the set of all such edges added for a given $i$ $E'_i$.
Similarly, for each $v \in B$, each $e \in \delta_H(u)$, and each layer $1 \leq i \leq t-1$, add an edge in $G^t$ from the copy of $e$ in layer $R_i$ to the copy of $v$ in $L_i$. Call the set of all such edges added for a given $i$ $E^*_i$.

Then let $L = \cup_{i=1}^t L_i$, $R = \cup_{i=1}^t R_i$ and $E = (\bigcup_{i=1}^{t-1} E'_i \cup E^*_i) \cup E'_t$. This construction is depicted in Figure \ref{fig:lowerbound} in Section \ref{sec:locality}.

To complete the analysis, consider a simple cycle $C$ in $G^t$.
 Note that $C$ alternates between using nodes in $L$ and nodes in $R$. Furthermore, if $C$ uses nodes consecutive nodes $a \in L_i, b \in R_j, c \in L_k$ (where $|j-i| \leq 1$ and $|k-j| \leq 1$)
then the nodes of $H$ corresponding to $a$ and $c$ are connected by an edge in $H$ that corresponds to node $b$. Thus, the cycle $C$ corresponds to a circuit $C'$ of $H^t$ with $|C'| = |C|/2$. Here, $C'$ may use
an edge more than once so $|C'|$ measures the steps taken by the circuit $C'$.

Consider any node $a \in C \cap L$ and say it is a copy of node $v$ of $H$. Because the cycle $C$ is simple in $G^t$, then the two adjacent nodes $b, b'$ to $a$ on $C$ correspond to distinct edges in $H$ incident to $v$.
This is true for every $a \in C \cap L$, so the set of nodes of $H$ corresponding to nodes in $C \cap L$ are incident to at least two distinct edges traversed by $C'$. That is, the edges used on the circuit $C'$ contain a cycle.
As the girth of $H$ is at least $\rho \cdot t$, then $|C| = 2 \cdot |C'| \geq 2 \cdot \rho \cdot t$.
\end{proof}

\begin{theorem}
For all $C \geq 2, \rho \geq 1$ and $\epsilon > 0$, there is an instance $\Phi = (G,\mu,b)$ of \lsp where all $u \in L$ have capacity $C$ and all $v \in R$ have capacity 1 such the locality gap if $\Phi$ is at least $\frac{2C-1}{C} - \epsilon$ with respect to the $\rho$-swap heuristic.
\end{theorem}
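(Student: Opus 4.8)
The plan is to use the high-girth layered graph $G^t$ from Lemma~\ref{lem:girth} as the skeleton and place two budget classes on it, mimicking the single-swap construction but arranged so that $\rho$-swaps are too ``local'' to notice the global improvement. Concretely, I would take the edge set $E = \left(\bigcup_{i=1}^{t-1} E'_i \cup E^*_i\right) \cup E'_t$ and declare the edges in $E' = \bigcup_i E'_i$ to be the \emph{local} customers with budget $1$, and the edges in $E^* = \bigcup_i E^*_i$ to be the \emph{global} customers with budget $2$. Capacities are $\mu_u = C$ for $u \in L$ and $\mu_v = 1$ for $v \in R$, as in the statement. The two candidate pricings are: $p(u) = 1$ for all $u \in L$, whose best $\mu$-matching is all of $E'$ (every left vertex is $C$-saturated, every right vertex $1$-saturated, and all budgets are met), giving $\texttt{val}(p) = |E'|$; and $p^*(u) = 2$ for all $u \in L_i$ with $i \ge 2$ (and $p^*(u) = 1$, say, for $u \in L_1$, or just restrict attention to the layers $2,\dots,t$ where the contribution dominates), whose best $\mu$-matching uses all of $E^*$, giving roughly $2|E^*|$. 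Since $|E^*_i| = |E'_i|$ for each $i < t$ and there are $t$ copies of $E'$ but only $t-1$ copies of $E^*$, the ratio $\texttt{val}(p^*)/\texttt{val}(p)$ tends to $2 \cdot \frac{t-1}{t}$ as a first approximation; the correct constant $\frac{2C-1}{C}$ will come out once I account carefully for the boundary layer $L_1$ and the fact that a $\rho$-swap \emph{can} extract a little profit, as explained next. Taking $t$ large then drives the gap to $\frac{2C-1}{C} - \epsilon$.

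The heart of the argument is showing $p$ is a $\rho$-local optimum. Suppose a pricing $p'$ differs from $p$ on a set $S \subseteq L$ with $|S| \le \rho$; I may assume $p'(u) \in \{1,2\}$ for all $u$ and that $p'$ raises exactly the vertices of $S$ to price $2$. The key structural fact, coming from girth exceeding $2t\rho \ge 2\rho$, is that the ``influence region'' of $S$ in the auxiliary bipartite graph $H$ — the union of short alternating walks through the vertices of $S$ — is a forest: no small set of left-vertices touches a cycle in $H$ (by the girth bound, exactly as in the proof of Lemma~\ref{lem:girth}), so locally the graph looks like a tree where each raised vertex $u$ has exactly $C$ incident $E^*$-edges that now become affordable, but spending on them costs us the $E'$-edges at the shared right-endpoints. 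I would then run an exchange argument analogous to the one in the single-swap locality-gap proof: the best $\mu$-matching under $p'$ must use all newly-affordable $E^*$-edges incident to $S$ (each worth $2$), and fixing those forces the removal of enough $E'$-edges that the net change is non-positive. The counting is exactly the ``$C$ vs.\ $2C-1$'' trade-off: raising one vertex gains $\le 2C$ from its $E^*$-edges but the first such raise in a tree component destroys $C$ local edges at the up-layer endpoints plus must vacate room, and the acyclicity ensures these losses do not ``wrap around'' to be recouped. Summed over the $\le \rho$ raised vertices and over all $t$ layers, the total gain is at most the total loss, so $\texttt{val}(p') \le \texttt{val}(p)$.

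The main obstacle I expect is making the last exchange argument fully rigorous when $S$ spans several layers and the raised vertices interact through shared right-endpoints: I need the girth bound to guarantee that the subgraph of $G^t$ induced by $S$, their $R$-neighbours, and the $L$-vertices reachable in a bounded number of alternating steps is a disjoint union of trees, so that a clean ``process one tree component at a time'' accounting goes through with no double-counting and no cyclic recoupment of profit. Getting the precise constant $\frac{2C-1}{C}$ rather than merely $2 - o(1)$ requires being careful that in each tree component the ratio of (profit available from $E^*$-edges) to (profit of destroyed $E'$-edges) is at most $\frac{2C}{C} = 2$ only on the leaves but is pinched down by the internal structure — equivalently, that a local swap in a tree of raised vertices can improve revenue by a factor strictly less than $\frac{2C-1}{C}$ — and then letting $t\to\infty$ to absorb the single ``free'' boundary layer $L_1$ into the $\epsilon$. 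Once the forest structure is established, the remaining computation is a routine layer-by-layer sum, so I would not belabour it.
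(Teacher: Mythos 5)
Your skeleton (the high-girth layered graph $G^t$ of Lemma~\ref{lem:girth}, with $E'$ as the ``local'' edges and $E^*$ as the ``global'' edges, and $t\to\infty$ to absorb the boundary layer) matches the paper. But there is a genuine gap, and it is not a matter of bookkeeping: the budget assignment $b_e=1$ on $E'$ and $b_e=2$ on $E^*$ makes the construction false. With those budgets the all-ones pricing $p$ is not even a $2$-swap local optimum. Your own accounting shows why: raising a single $u\in L_i$ ($i\ge 2$) to price $2$ loses its $C$ edges of $E'$ (value $C$), gains its $C$ edges of $E^*$ (value $2C$), and displaces $C$ edges of $E'$ at the shared right-endpoints (value $C$) --- exactly break-even. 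Now raise in addition a $w\in L_{i+1}$ one of whose $E^*$-edges lands on a right-endpoint $r\in R_i$ whose unique $E'$-occupant comes from $u$: that occupant is already gone, so $w$'s displacement loss is only $C-1$ and the pair of swaps nets $+1>0$. So for any $\rho\ge 2$ there is an improving swap, and the construction collapses. This had to happen: if $p$ were a $\rho$-local optimum with gap $2\cdot\frac{t-1}{t}$, it would contradict Theorem~\ref{thm:multigood}, which caps the gap of any such local optimum near $\frac{2C-1}{C}<2$. Your hope that ``the correct constant $\frac{2C-1}{C}$ will come out'' from boundary effects and from ``the fact that a $\rho$-swap can extract a little profit'' is the tell --- if a $\rho$-swap extracts any profit, $p$ is not a local optimum and there is nothing left to prove. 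The fix is to set $b_e=C$ on $E'$ and $b_e=2C-1$ on $E^*$: then each raised vertex incurs a strict deficit of $C$ (gain $(2C-1)C$, lose $C^2+C^2$), the recoupment from overlapping displacements is what must be bounded by $C(|X|-1)$, and the two candidate solutions have values $C^2|L|$ and $(2C-1)C(|L|-n)$, giving the ratio $\frac{2C-1}{C}\cdot\frac{t-1}{t}$.

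A secondary issue is the structural claim you lean on. The ``influence region'' of $S$ (the union of short alternating walks through $S$) is not the right object and need not be a forest. What the paper shows is acyclicity of an auxiliary graph $\mathcal T$ on $X$ alone, whose edges record, for each recouped displacement, the deepest raised vertex from which the displaced edge's tail is reachable; girth $>2t\rho$ makes $\mathcal T$ simple and acyclic, so at most $|X|-1$ displacements are recouped. This is then combined with a max-flow/min-cut bound on the surviving matching $\delta_{M^*}(L-X)$, not with a component-by-component exchange. Without that quantitative ``$|X|-1$'' bound, ``process one tree at a time'' does not pin the net change at $\le 0$; with budgets $C$ and $2C-1$ the margin is exactly $C$ per raised vertex versus $C$ per recouped edge, so the count must be exact.
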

That is, our bound on the locality gap for the multi-swap algorithm on instances with bounded capacity is tight.
\begin{proof}
Fix $C \geq 2, \rho \geq 1, \epsilon > 0$ and let $t$ be such that $\frac{2C-1}{C} \cdot \frac{t-1}{t} \geq \frac{2C-1}{C} - \epsilon$. Let $G^t = (L \cup R, E)$ be the graph from Lemma \ref{lem:girth} for these parameters $C, \rho, t$.
For each $1 \leq i \leq t$, let $E'_i$ be the edges connecting $L_i$ to $R_i$ and for each $1 \leq i < t$ let $E^*_i$ be the edges connecting $R_i$ to $L_{i+1}$. Naturally, let $E' = \cup_{i=1}^t E'_i$ and $E^* = \cup_{i=1}^{t-1} E^*_i$.
See Figure \ref{fig:lowerbound} for an illustration. Let $n$ be such that $|L_i| = n$ and $|R_i| = C \cdot n$ for all $1 \leq i \leq t$.

The optimum is at least $(2C-1) \cdot C \cdot (|L|-n)$, which can be seen by using edges $E^*$ where each vertex in $L$ has a price of $2C-1$.
Now consider the pricing $p$ that uses price $C$ for each vertex in $L$. The optimum set of edges to buy with these prices is $E'$ with a value of $C \cdot C \cdot |L|$.
The proof of the following claim appears below.
\begin{claim}\label{claim:localopt}
The pricing $p$ is locally optimal with respect to the $\rho$-swap procedure.
\end{claim}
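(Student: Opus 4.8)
The plan is to show that no $\rho$-swap can improve the pricing $p$ that assigns price $C$ to every vertex in $L$, by exploiting the high girth of $G^t$ guaranteed by Lemma \ref{lem:girth}. First I would set up the accounting: under $p$ the optimal purchase set is $E'$ (each $e = uv \in E'$ with $u \in L_i$, $v \in R_i$ has budget exactly $C$, equals $p(e)$, and buying $E'$ saturates every $L$-vertex to capacity $C$ while using each $R$-vertex once), giving $\texttt{val}(p) = C^2 \cdot |L|$. Now consider any candidate pricing $p'$ with $S := \{u \in L : p'(u) \neq p(u)\}$ of size at most $\rho$; I must bound $\texttt{val}(p')$ from above by $C^2 |L|$. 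The key structural point: the only budgets available are $C$ (on $E'$) and $2C-1$ (on $E^*$), so for $u \in L$ the only meaningful alternative to $p'(u) = C$ is $p'(u) = 2C-1$, in which case the edges of $E'$ at $u$ become unaffordable and only the (single, since $R$-vertices connecting into $L_{i}$ via $E^*$ have the "1" side) $E^*$-edges entering $u$ can be sold at that price.

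Next I would argue via a charging/exchange argument localized around $S$. Take any feasible $(p', F')$; I want to compare $F'$ against $E'$. For vertices $u \in L \setminus S$ we have $p'(u) = C$, so any edge of $F'$ incident to such $u$ contributes at most what $E'$ already gets there; the only way $p'$ could beat $p$ is by extracting the higher price $2C-1$ at the vertices of $S$ that are re-priced to $2C-1$, but each such vertex can then only sell its $E^*$-edges, of which there are $C$ incident options but whose far endpoints lie in a previous layer and have capacity forcing further displacement. I would track the "frontier" of vertices affected: re-pricing a vertex $u$ to $2C-1$ forces us to drop $E'$-edges at $u$ (losing $C \cdot C = C^2$ of value from the baseline) and lets us add $E^*$-edges (gaining at most $C \cdot (2C-1)$), but those added edges collide at their $L$-endpoints in the adjacent layer, forcing the removal of baseline $E'$-edges there too unless those endpoints are also in $S$. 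Because $|S| \leq \rho$ and the girth exceeds $2t\rho$, the subgraph of $G^t$ touched by $S$ within distance $O(\rho)$ is a forest (no short cycles), so this displacement process cannot "close up" and recoup value through a cycle; it is forced to terminate at leaves where we strictly lose. Summing the local gains and losses over the bounded-size affected region and using that each re-priced vertex's net effect is nonpositive once the forced cascade of $E'$-removals in neighboring layers is accounted for yields $\texttt{val}(p') \leq \texttt{val}(p)$.

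The main obstacle I anticipate is making the cascade argument fully rigorous: I need a clean invariant showing that whenever $F'$ sells an $E^*$-edge at price $2C-1$ incident to some $u \in S$, the "profit" of doing so is more than offset by $E'$-edges that $F'$ is forced to omit, and that these offsets can be assigned injectively so no double-counting occurs. The acyclicity coming from girth is exactly what licenses a clean injective charging (in a forest one can root the affected components and push charges toward the leaves), but I would need to handle the bookkeeping of capacities carefully — each $L$-vertex has capacity $C$ and each $R$-vertex capacity $1$, so an $E^*$-edge sold at $u$ occupies one unit at $u$ but its $R$-endpoint was the "hub" connecting to a unique $L$-vertex in the earlier layer, and I must verify the degree pattern (the $(C,1)$- and $(1,C)$-biregularity from Lemma \ref{lem:girth}) makes the charging balance exactly with the $(2C-1)/C$ ratio. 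A secondary subtlety is boundary layers: vertices in $L_1$ have no incoming $E^*$-edges, so re-pricing them to $2C-1$ is pure loss, which is the base case that anchors the whole argument; I would state this first and then induct on the layer structure of the affected forest.
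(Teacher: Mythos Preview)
Your high-level instincts are right---girth is what makes the argument work, and the paper also ultimately reduces to an acyclicity/forest statement---but two concrete missteps in your sketch would prevent it from closing.

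First, the ``distance $O(\rho)$'' picture is wrong. The displacement cascades you describe are not short: when you add an $E^*$-edge at $u\in X\cap L_{i+1}$, the displaced $E'$-edge frees a slot at some $w\in L_i$, which (since $p'(w)=C$) can be refilled by an $E^*$-edge to $R_{i-1}$, displacing another $E'$-edge, and so on all the way down to $L_1$. Each such step is value-neutral (swap a price-$C$ edge for a price-$C$ edge), so the cascade propagates through up to $t$ layers, i.e.\ length up to $2t$ in $G^t$. Consequently your claim that ``it is forced to terminate at leaves where we strictly lose'' is off: termination at $L_1$ gives no extra loss. The girth bound $>2t\rho$ is calibrated precisely so that $|X|\le\rho$ of these length-$\le 2t$ paths cannot close up into a cycle among themselves; it is \emph{not} saying that a radius-$O(\rho)$ neighborhood of $X$ is a tree.

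Second, the per-vertex accounting you propose (gain $C(2C-1)$, lose $C^2$ for your own $E'$-edges, lose another $C^2$ for the displaced neighbors' $E'$-edges) is too crude: if a displaced $E'$-edge has its $L$-endpoint in $X$ you have double-counted the loss, and once you correct for that you must show these ``savings'' are limited. That is exactly the quantitative heart of the argument, and it is where the forest structure is actually used. The paper handles this quite differently from your charging sketch: it orients all edges toward higher layers, writes the problem of maximizing $|\delta_{M^*}(L\setminus X)|$ as a $b$-matching in $G^t-Y$ (with $Y$ the $R$-endpoints of $\delta_{E^*}(X)$), bounds it by $|M|$ plus the number of $M$-augmenting paths (for a specific starting matching $M$), and then bounds augmenting paths by an explicit cut via max-flow/min-cut. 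The girth enters only at the very end, to show that a certain auxiliary graph $\mathcal T$ on the vertex set $X$ is a forest, which gives the crucial inequality $|\delta^{in}_{G'}(X)\setminus\delta^{in}_{G'}(S)|\le |X|-1$ and hence $|\delta_{M^*}(L\setminus X)|\le C|L|-(2C-1)|X|$; plugging this into $(2C-1)C|X|+C\cdot|\delta_{M^*}(L\setminus X)|$ finishes.

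So: your forest-from-girth intuition is the same as the paper's, but the object that must be a forest is an auxiliary graph on $X$ built from long alternating paths, not a small-radius neighborhood, and the balancing of $(2C-1)$ versus $C$ goes through a cut bound rather than a local charge. If you want to salvage a direct charging proof, you would need to (i) let cascades run their full length, (ii) charge each re-priced vertex's net surplus $C(C-1)$ against the $E'$-edges it truly blocks in $L\setminus X$, and (iii) use the forest on $X$ to bound how many blocked edges are shared between cascades---effectively re-deriving the paper's $(C-1)|X|$ bound by hand.
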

If so, then the locality gap of this instance is as bad as $\frac{2C-1}{C} \cdot \frac{|L|-n}{|L|} = \frac{2C-1}{C} \cdot \frac{t-1}{t}$,
as required.
\end{proof}

\begin{proof}[Proof of Claim \ref{claim:localopt}]

Consider any pricing $p'$ with $\texttt{HW}(p, p') \leq \rho$. Let $X \subseteq L$ be the nodes $v$ with $p(v) \neq p'(v)$.
So $p'(v) = 2C-1$ for $v \in X$. If some vertex $v \in X$ lies in $L_1 \neq \emptyset$ then no edge incident to $v$ can afford the price $2C-1$, so we may assume that $X \cap L_1 = \emptyset$.

We show $\texttt{val}(p') \leq \texttt{val}(p)$. We first claim any optimal set of edges $M^*$ under this price includes all of $\delta_{E^*}(X)$
and excludes all of $\delta_{E'}(X)$. The latter is simple, no edge in $\delta_{E'}(X)$ can afford the price of its endpoint in $L$. Then if any $e \in \delta_{E^*}(X)$
is missing from $M^*$, we can get an even better solution by adding $e$ and removing, if necessary, an edge of $E' \cap M$ sharing its $R$-endpoint with $e$.
The value increases by at least $(2C-1) - C = C-1$.

So $M^*$ contains all edges of $\delta_{E^*}(X)$ with value $2C-1$ each plus some edges in $\delta(L-X)$ (which could be in either $E'$ or $E^*$) with value $C$ each. We then see the value
of $M^*$ is
\begin{equation}\label{eqn:goal}
(2C-1) \cdot C \cdot |X| + C \cdot |\delta_{M^*}(L-X)|.
\end{equation}
The rest of the proof focuses on showing the following:
\begin{equation}\label{eqn:finalbound}
|\delta_{M^*}(L-X)| \leq C \cdot |L| - (2C-1) \cdot |X|.
\end{equation}
If this holds, we can bound \ref{eqn:goal} by $C \cdot C \cdot |L|$ thus showing $\texttt{val}(p') \leq \texttt{val}(p)$.

To show \ref{eqn:finalbound}, first consider the graph $G'$ obtained from $G^t$ by directing all edges to higher layers: so an edge in $E'_i$ is directed from $L_i$ to $R_i$ and an edge in $E^*_i$ is directed from $L_i$ to $R_{i+1}$.
Let $S$ consist of $R_t$ plus all nodes reachable from $X$ in $G'$, including $X$ itself. We claim $|\delta^{in}_{G'}(S)| = C \cdot n$. This can be seen easily:
\[
|\delta^{in}_{G'}(S)| = |\delta^{in}_{G'}(S)| - |\delta^{out}_{G'}(S)|
 =  \sum_{v \in S} |\delta^{in}_{G'}(v)| - \sum_{v \in S} |\delta^{out}_{G'}(S)| 
 =  \sum_{v \in S \cap R_t} |\delta^{in}(v)| = n \cdot C.
\]
The first equality holds because $\delta^{out}_{G'}(S) = \emptyset$ by construction of $S$, the second holds for any cut of any directed graph, and the third holds because $S \cap L_1 = \emptyset$ (as $X \cap L_1 = \emptyset$)
and because every vertex not in $L_1$ or $R_t$ has equal in- and out-degree.

Now let $Y$ be all endpoints of edges in $\delta_{E^*}(X)$ and let $G''$ be the subgraph of $G'$ obtained by deleting $Y$ and incident edges. Let $S' = S-Y$, we claim $|\delta_{G''}(S')| \leq n \cdot C - (C-1) \cdot |X|$.
One should think that $\delta_{G''}(S')$ is obtained by deleting edges of $\delta^{in}_{G'}(X) \cap \delta^{in}_{G'}(S)$ from $\delta^{in}_{G'}(S)$. There are precisely $C \cdot |X|$ edges in $\delta^{in}_{G'}(X)$, we show at least $(C-1) \cdot |X|$ of there were
also in $\delta^{in}_{G'}(S)$.

To that end, consider an edge $e \in \delta^{in}_{G'}(x)$ for some $x \in X$ that does not lie in $\delta^{in}_{G'}(S)$. Then $v$ is reachable from some other node of $X$ in $G'$ by construction of $S$, pick the deepest such node
and call this node $\tau(e)$. By this choice for $\tau(e)$, there is a $\tau(e)-x$ path in $G'$ that avoids every other vertex in $X$. Also, the length of this path is at most $2t$ because the paths are monotone with respect
to the layers of $G'$. Also note for two different $e, e' \in \delta^{in}_{G'}(x) - \delta^{in}_{G'}(S)$ that $\tau(e) \neq \tau(e')$, or else we have two different $\tau(e)-x$ walks implying there is a cycle of length at most $4t$ in $G^t$
which is not possible.

Build an auxiliary graph $\mathcal T = (X, F)$ where for each $e \in \delta^{in}_{G'}(x) - \delta^{in}_{G'}(S)$ for some $x \in X$ we include an undirected edge from $\tau(e)$ to $x$ in $F$.
By the above discussion, this is a simple graph. We also claim it is a forest, otherwise consider a cycle $C$ in $\mathcal T$. Focus on some edge $xy \in C$ and let $z \notin \{x,y\}$ be another node in $C$. As the $xy$-path from the construction
in the last paragraph avoids $z$, we get two different $x-y$ walks in $G^t$ by following the paths corresponding to the two directions around $C$ from $x$ to $y$. But this is impossible because $G^t$ has no cycle of length at most $2t \cdot |X| \leq 2t \cdot \rho$. So, $|F| \leq |X|-1$ meaning $|\delta^{in}_{G'}(X) - \delta^{in}_{G'}(S)| \leq |X|-1$. Thus,
\begin{equation}\label{eqn:cutbound}
|\delta^{in}_{G''}(S')| \leq C \cdot n - (C-1) \cdot |X|.
\end{equation}

Now we can prove \ref{eqn:finalbound}. Let $\overline{G}''$ be the undirected version of $G''$, so $\overline{G''}$ is obtained from $G^t$ by deleting $Y$ and its incident edges from $G^t$. Call a subset of edges of $\overline{G}''$ a matching
if they satisfy the capacity constraints of nodes in $\overline{G}''$. Note $\delta_{M^*}(L-X)$ is a matching.

We bound the size of a maximum matching in $\overline{G}''$. First, observe $M := E^* - \delta(X)$ is a matching in $\overline{G}''$ and that $G''$ is the directed graph we get by directing edges along this matching. That is, the set of $L_1-R_t$ paths in $G''$ are exactly the set of $M$-alternating path. By the max-flow/min-cut theorem, the maximum number of edge-disjoint $M$-alternating paths is at most $|\delta^{in}_{G''}(S')| \leq C \cdot n - (C-1) \cdot |X|$. So the maximum size of a matching
in $\overline{G}''$ is at most
\[ |M| + C \cdot n - (C-1)\cdot |X| = C \cdot (L-n) - C \cdot |X| + C \cdot n - (C-1) \cdot |X| \leq C \cdot |L| - (2C-1) \cdot |X|. \]
This proves \ref{eqn:finalbound} and completes the analysis of the locality gap.

\end{proof}


\section{LP-Based Approximations}\label{sec:lp}

So far, our focus has been on approximations based on local search. Here, we consider linear programming relaxations for \lsp. Recall for each $u \in L$ that $P_u = \{b_e : e \in \delta(u)\}$
is a set of possible prices for vertex $u$: there is an optimal solution that selects $p(u)$ from $P_u$ for each $u \in L$.

For $u \in L$ and $p \in P_u$, we let $y_{u,p}$ be a variable indicating we select price $p$ for $u$. Similarly, for each $e = uv \in E$ and $p \in P_u$ such that $p \leq b_e$, we let $x_{e,p}$ be a variable indicating 
edge $e$ is selected and vertex $u$ is assigned price $p$ (so $e$ buys their bundle at price $p$). The following relaxation provides an upper bound on the optimal solution to the given instance of the \lsp.

\begin{alignat}{3}
\texttt{\bf maximize:} \quad& \sum_{e = uv} \sum_{p \in P_u} p \cdot x_{e,p}& \tag{LP-Pricing} \label{lp:pricing} \\
\texttt{\bf subject to:} \quad& \sum_{p \in P_u} y_{u,p} \quad &= &\quad 1 && \quad  \forall~u \in L \label{lp:oneprice}\\
& \sum_{e \in \delta(u)} x_{e,p} \quad &\leq &\quad y_{u,p} \cdot \mu_u && \quad \forall~u \in L, p \in P_u \label{lp:capL} \\
& \sum_{e=uv \in \delta(v)} \sum_{p \in P_u} x_{e,p} \quad &\leq& \quad \mu_v  && \quad \forall~v \in R \label{lp:capR} \\
& x_{e,p} \quad &\leq &\quad y_{u,p} && \quad \forall~u \in L, e \in \delta(u), p \in P_u \texttt{ s.t. } p \leq b_e \label{lp:cover} \\
& x, y \quad & \geq &\quad 0 && \quad \nonumber
\end{alignat}

Constraints \ref{lp:oneprice} indicate one price must be selected for each $u \in L$, \ref{lp:capL} ensures the capacity constraints for $u \in L$ are satisfied and \ref{lp:capR} ensures the capacity constraints
for $v \in R$ are satisfied, \ref{lp:cover} ensures we must set the price of $u$ to $p$ if we are to have $e$ pay $p$.


\subsection{Randomized Rounding Algorithms}
We first show in simple graphs with large capacities for $R$, the integrality gap is close to 1.
\begin{theorem} \label{thm:LPBigCap}
For $\eps > 0$, the integrality gap of \eqref{lp:pricing} is $1-2\eps$ in simple graphs when $\mu_v \geq 3 \ln(1/\eps)/\eps^2+1$ for all $v \in R$.
\end{theorem}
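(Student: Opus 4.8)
The plan is to round the LP solution by independently sampling a price for each $u \in L$ according to the distribution $(y_{u,p})_{p \in P_u}$, then greedily selecting which edges to buy subject to the capacity constraints, and finally arguing that the expected revenue is at least $(1-2\eps)$ times the LP value. Let $\mathrm{OPT}_{LP}$ denote the optimum LP value and let $(x,y)$ be an optimal fractional solution. For each $u \in L$ let $p(u)$ be the random price drawn so that $\Pr[p(u) = p] = y_{u,p}$; by constraint \eqref{lp:oneprice} this is a valid distribution. After prices are fixed, call an edge $e = uv$ \emph{affordable} if $p(e) = p(u) \le b_e$ (recall $p(v) = 0$ for $v \in R$). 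The key quantity is, for a fixed right vertex $v \in R$, the number $D_v$ of affordable edges incident to $v$: we have $\mathbf{E}[D_v] = \sum_{e = uv \in \delta(v)} \sum_{p \in P_u : p \le b_e} y_{u,p} \cdot \mathbf{1}[p = p(u)]$, and taking expectations gives $\mathbf{E}[D_v] \le \sum_{e=uv}\sum_{p \le b_e} y_{u,p}$, which need not directly relate to $\mu_v$ — so instead I would track $\sum_{e = uv} x_{e,p(u)}$ as the "fractional mass'' landing at $v$ and use constraint \eqref{lp:capR} together with the fact that in a simple graph each term $x_{e,p} \le 1$.

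First I would set up the rounding: draw all prices $p(u)$ independently. For each $u \in L$, among its affordable incident edges, tentatively mark all of them; each $u$ then has at most $\mu_u$ affordable edges marked because by \eqref{lp:capL} the edges $e \in \delta(u)$ with $x_{e,p(u)} > 0$ satisfy $\sum_e x_{e,p(u)} \le \mu_u$ — wait, that bounds fractional mass, not the count — so more carefully I would define, for $u$ with chosen price $p(u) = p$, the candidate set $E_u = \{e \in \delta(u) : b_e \ge p\}$ and note the LP "commits'' total fractional value $\sum_{e \in E_u} x_{e,p}$ at that price conditioned on choosing $p$; I would include each $e \in E_u$ independently with probability $x_{e,p}/y_{u,p}$ (which is $\le 1$ by \eqref{lp:cover}), so that the unconditional probability edge $e$ at price $p$ is a candidate is exactly $x_{e,p}$. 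Then $\mathbf{E}[\,\#\text{candidates at } u\,] = \sum_p \sum_{e} x_{e,p} \le \mu_u$, and more importantly for each right vertex $v$, $\mathbf{E}[D_v] = \sum_{e=uv}\sum_p x_{e,p} \le \mu_v$ by \eqref{lp:capR}, where $D_v$ is the number of candidate edges at $v$. Since the candidate indicators at $v$ come from independent choices at distinct left-vertices (this is where simplicity matters: at most one edge between any $u$ and $v$, so the indicators for distinct edges at $v$ are independent), I can apply a Chernoff bound.

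The heart of the argument: condition on all randomness and let $\mathcal{C}$ be the set of candidate edges. For $u \in L$, the candidates at $u$ automatically respect $\mu_u$ if we additionally cap them — but I would rather argue the expected overflow is small. For $v \in R$, by a Chernoff bound and the hypothesis $\mu_v \ge 3\ln(1/\eps)/\eps^2 + 1 \ge 1 + \frac{3\ln(1/\eps)}{\eps^2}$, we get $\Pr[D_v > (1+\eps)\mu_v] \le \eps$ (choosing the Chernoff parameters so the deviation bound $\exp(-\eps^2 \mu_v / 3) \le \eps$ holds). When $D_v \le (1+\eps)\mu_v$ we keep all but at most an $\eps/(1+\eps)$ fraction; when it exceeds, we discard all of $v$'s edges. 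A symmetric (in fact simpler, since candidate counts at $u$ are bounded via the same Chernoff idea using that $\mu_u$ may be small — here I would instead just deterministically keep an arbitrary $\mu_u$-subset, losing only the excess) treatment handles $L$. Combining: the expected revenue lost to $R$-overflow is at most $\big(\eps + \frac{\eps}{1+\eps}\big)\mathrm{OPT}_{LP} \le 2\eps\,\mathrm{OPT}_{LP}$, and similarly bounding $L$-side losses, the surviving solution has expected value $\ge (1-2\eps)\mathrm{OPT}_{LP}$. Since every surviving edge pays exactly its assigned price $p(u) \le b_e$, the solution is feasible, proving an integrality gap of at least $1-2\eps$ (and the gap is trivially at most $1$ since the LP is a relaxation).

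The main obstacle I anticipate is getting the constants to line up exactly to $1 - 2\eps$ with the stated capacity threshold $3\ln(1/\eps)/\eps^2 + 1$: this requires being careful about (i) which Chernoff form to invoke so that $\exp(-\eps^2 (\mu_v - 1)/3) \le \eps$ exactly uses the "$+1$'' (likely because $\mathbf{E}[D_v]$ is bounded by $\mu_v$ but Chernoff wants the deviation relative to the mean, and a small slack of $1$ absorbs the gap between $\mu_v$ and the actual mean), and (ii) correctly accounting for both the probability-$\eps$ "bad event'' discard and the deterministic $\eps/(1+\eps)$-fraction trim, and checking their sum is $\le 2\eps$. A secondary subtlety is ensuring the independence needed for Chernoff genuinely holds — this is exactly why the theorem restricts to simple graphs, and I would make that dependence explicit in the write-up. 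The $L$-side capacity constraints need a brief separate remark since we are given no lower bound on $\mu_u$; there, simply keeping an arbitrary feasible subset of the candidates works because the candidate count at $u$ has expectation $\le \mu_u$ and any overflow is removed, but I should double check this only costs the $L$-side its "fair share'' and does not compound with the $R$-side loss beyond the $2\eps$ budget (it may in fact require splitting the budget, e.g. trimming to $(1-\eps)\mu$ on one side — I would revisit the exact bookkeeping in the full proof).
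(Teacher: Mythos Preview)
Your overall plan---sample a price $p(u)$ for each $u \in L$ independently with $\Pr[p(u)=p]=y_{u,p}$, then use concentration at the $R$-vertices exploiting simplicity of $G$---is the same as the paper's. But your second layer of randomization (independently including each affordable edge $e\in\delta(u)$ with probability $x_{e,p}/y_{u,p}$) is \emph{not} what the paper does, and it is precisely the source of the $L$-side difficulty you flag at the end.

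The paper never samples integer edges. After drawing the prices it sets, deterministically, the fractional value $x''_e = (1-\eps)\cdot x_{e,p(u)}/y_{u,p(u)}$ for each $e=uv$. By constraint \eqref{lp:capL} this already satisfies $x''(\delta(u)) \le (1-\eps)\mu_u \le \mu_u$ for every $u\in L$, with no trimming and no concentration argument needed on the left side---this is exactly what you were missing, and it works for arbitrary (even $\mu_u=1$) left-capacities. On the right side one applies a Chernoff bound to the sum of the $[0,1]$-valued independent random variables $x''_{e'}$ over $e'\in\delta(v)\setminus\{e\}$; if this sum stays below $\mu_v-1$ then $x''(\delta(v))\le \mu_v$ and we set $x'_e=x''_e$, otherwise $x'_e=0$. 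The resulting $x'$ is a \emph{feasible fractional $\mu$-matching}, and integrality of the bipartite $b$-matching polytope then gives an integral solution of at least the same value. This polytope step replaces your entire ``trim and hope the losses do not compound'' bookkeeping.

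Two smaller points. First, the reason for the ``$+1$'' in the capacity threshold is not slack between $\mu_v$ and the mean: it is that the bad event $\mathcal B_e$ must be defined using only the edges $e'\in\delta(v)\setminus\{e\}$, so that $\mathcal B_e$ is independent of the price $p(u)$ chosen at $e$'s own left endpoint; this is what lets you factor ${\bf E}[p(e)\cdot x'_e] \ge (1-\Pr[\mathcal B_e])\cdot(1-\eps)\cdot {\bf E}[p(e)\cdot x_{e,p(e)}/y_{u,p(e)}]$. Excluding one edge means the threshold for the remaining sum is $\mu_v-1$, hence the ``$+1$''. Second, your loss accounting ``$\eps + \eps/(1+\eps)$'' implicitly treats the discard event and the contribution of $e$ as independent, which they are not in your setup since $D_v$ includes $e$'s own indicator; the paper's version of the argument, with $\mathcal B_e$ excluding $e$, is what makes this factorization legitimate.
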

\begin{proof}
Consider the following randomized rounding algorithm. For each $u \in L$, sample a price $p'(u) \in P_u$ from the distribution with ${\bf Pr}[p'(u) = p] = y_{u,p}$.
This is a distribution by \ref{lp:oneprice} and nonnegativity of $y$. For brevity, we will let $p'(e) = p'(u)$ for an edge $e = uv$.

Then define a fractional matching in $G$ as follows. The idea is that we want to assign a value of $x_{e,p'(e)}/y_{u,p'(e)}$ to each edge,
this is at most 1 by \ref{lp:cover}. By \ref{lp:capL} this fractional matching
would always satisfy the capacity constraints for nodes in $L$. But it may violate constraints for nodes in $R$. The obvious solution would be to scale each of these fractional values to be a feasible matching
satisfying all vertex constraints. We take a simpler view which is sufficient for our purposes, we scale all resulting values by $1-\eps$, and then outright discard edges $e = uv$ where the capacity of $v$ is still
violated after this scaling.

More precisely, for each $e = uv$ we first let $x''_e = (1-\eps)\cdot \frac{x_{e,p'(e)}}{y_{u,p'(e)}}$ (using 0 if $p'(e) > b_e$). Then for each edge $e = uv$, we define
\[ x'_e = \left\{ \begin{array}{rl}
x''_e & \texttt{if } \sum_{e' = u'v \in \delta(v)} x''_e \leq \mu_v, \\
0 & \texttt{otherwise}.
\end{array}\right. \]
Now $x'(\delta(w)) \leq \mu_w$ for each $w \in L \cup R$.
So, by integrality of the bipartite $\mu$-matching polytope, $\texttt{val}(p') \geq \sum_{e} p \cdot x'_e$ as there is an integral matching obtaining
at least as much value as the fractional matching $x'$.

For any $e = uv \in E$ let $\mathcal B_e$ be the {\em bad event} that $\sum_{e' = u'v \in \delta(v), e' \neq e} x''_{e'} \geq \mu_v-1$.
Notice that the second case in the definition of $x'_e$ applies only if event $B_e$ happens. We show
$\Pr[\mathcal B_e] \leq \eps$. If so, for each $e = uv \in E$ the fact that $\mathcal B_e$ is independent of the choice of $p'(e)$ (as $G$ is a simple graph) we then have 
\[ {\bf E}[p'(e) \cdot x'_e] \geq (1-{\bf Pr}[\mathcal B_e]) \cdot (1-\eps) \cdot {\bf E}\left[p'(e) \cdot \frac{x_{e,p'(e)}}{y_{u,p'(e)}}\right] \geq (1-\eps)^2 \cdot \sum_{p \in P_u} p \cdot x_{e,p}.  \]
Summing over all edges, ${\bf E}\left[\sum_{e \in E} p'(e) \cdot x'_e\right] \geq (1-2\eps) \cdot \sum_{e = uv} \sum_{p \in P_u} p \cdot x_{e,p}$, as required.

To bound ${\bf Pr}[\mathcal B_e]$,
for an edge $e'$ let $X_{e'}$ denote the random variable with value $(1-\eps) \cdot x_{e',p'(u)}/y_{u',p'(e')}$ and let $X^e = \sum_{e' \in \delta(v), e' \neq e} X_{e'}$.
Then ${\bf E}[X_{e'}] = (1-\eps) \cdot \sum_{p \in P_{u'}} x_{e',p}$ so by \ref{lp:capR} we have ${\bf E}[X^e] \leq (1-\eps) \cdot \mu_v$.

Again by simplicity of $G$, the random variables $X_{e'}$ are independent for different $e' \in \delta(v), e' \neq e$.
Let $Y = (1-\eps) \cdot (\mu_v-1)$. As $Y \geq {\bf E}[X^e]$ and $0 < \eps < 1$, by a standard Chernoff bound ({\em eg}. Theorem 1.1 in \cite{dpbook}) we have
${\bf Pr}[X > (1+\eps) \cdot Y] \leq \exp(-Y \eps^2/3) \leq \eps$.
Finally, since event $\mathcal B_e$ implies $X^e \geq \mu_v-1 \geq (1+\eps) \cdot Y$, we have ${\bf Pr}[B_e] \leq \eps$, as required.
\end{proof}
The fact that $G$ was simple was used in the application of the Chernoff bound. The random variables $X_{e'}$ for edges in $\delta(v)$ for some $v \in R$ are independent if $G$ is simple.

In fact, this proof generalizes to providing a constant bound on the integrality gap for any instance, even if capacities are small or there are parallel edges. Simply modify the proof to first set $x''_e = \frac{1}{2} \cdot \frac{x_{e,p'(e)}}{y_{u,p'(e)}}$ and set $x'$ as before. Instead of Chernoff bounds, just use Markov's inequality (which does not require independence, thus does not require $G$ to be simple) to show ${\bf Pr}[\sum_{e' = u'v \in \delta(v)} x''_e \leq \mu_v] \geq \frac{1}{2}$. Thus, ${\bf E}[x'_e] \geq \frac{x_e}{4}$.
\begin{lemma}
The integrality gap of \eqref{lp:pricing} no worse than $1/4$ in any instance of \lsp.
\end{lemma}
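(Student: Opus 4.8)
The plan is to follow the modification of the proof of Theorem~\ref{thm:LPBigCap} exactly as the surrounding text suggests, just replacing the Chernoff-bound step with Markov's inequality so that independence (and hence simplicity of $G$) is no longer needed. First I would take any optimal solution $(x,y)$ to \eqref{lp:pricing} for an arbitrary instance of \lsp (parallel edges allowed, capacities arbitrary). As before, for each $u \in L$ sample a price $p'(u) \in P_u$ independently with ${\bf Pr}[p'(u) = p] = y_{u,p}$, which is a valid distribution by \ref{lp:oneprice}, and write $p'(e) = p'(u)$ for $e = uv$. Then for each $e = uv$ set $x''_e = \tfrac12 \cdot \tfrac{x_{e,p'(e)}}{y_{u,p'(e)}}$ (interpreting this as $0$ when $p'(e) > b_e$ or $y_{u,p'(e)} = 0$), noting $x''_e \le \tfrac12$ by \ref{lp:cover}. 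By \ref{lp:capL}, the $x''$ values automatically respect the capacities of all $u \in L$ (they are at most $\tfrac12 \mu_u \le \mu_u$), so the only possible violation is at nodes of $R$; define $x'_e = x''_e$ if $\sum_{e' = u'v \in \delta(v)} x''_{e'} \le \mu_v$ and $x'_e = 0$ otherwise, exactly as in the original proof. Then $x'(\delta(w)) \le \mu_w$ for all $w$, so by integrality of the bipartite $\mu$-matching polytope $\texttt{val}(p') \ge \sum_e p'(e) \cdot x'_e$.

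The key step is bounding the probability of the bad event. Fix $e = uv$ and let $\mathcal B_e$ be the event $\sum_{e' = u'v \in \delta(v)} x''_{e'} > \mu_v$; this is the only situation in which $x'_e$ differs from $x''_e$. By \ref{lp:capR}, for every realization of the prices we have the deterministic LP bound ${\bf E}[\sum_{e' \in \delta(v)} x''_{e'}]$ in expectation equal to $\tfrac12 \sum_{e' = u'v} \sum_{p} x_{e',p} \le \tfrac12 \mu_v$, so by Markov's inequality ${\bf Pr}[\mathcal B_e] \le \tfrac12$. Crucially Markov's inequality requires no independence among the summands, so this holds regardless of whether $G$ has parallel edges. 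The remaining subtlety is that, unlike the Chernoff argument in Theorem~\ref{thm:LPBigCap}, the event $\mathcal B_e$ need not be independent of the price $p'(e)$ chosen at $u$; I would handle this by conditioning on $p'(e) = p$ and observing that ${\bf Pr}[\mathcal B_e \mid p'(e) = p] \le \tfrac12$ still holds by the same Markov argument (the conditional expectation of $\sum_{e'} x''_{e'}$ is still at most $\mu_v$ since the contribution of $e$ itself is at most $\tfrac12 < \mu_v$, or more cleanly one can bound the sum over $e' \ne e$), whence ${\bf E}[p'(e) \cdot x'_e \mid p'(e) = p] \ge \tfrac12 \cdot p \cdot \tfrac{x_{e,p}}{y_{u,p}}$.

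Putting this together, for each $e = uv$,
\[ {\bf E}[p'(e) \cdot x'_e] \;=\; \sum_{p \in P_u} {\bf Pr}[p'(e) = p] \cdot {\bf E}[p'(e) \cdot x'_e \mid p'(e) = p] \;\ge\; \sum_{p \in P_u} y_{u,p} \cdot \tfrac12 \cdot \tfrac12 \cdot p \cdot \tfrac{x_{e,p}}{y_{u,p}} \;=\; \tfrac14 \sum_{p \in P_u} p \cdot x_{e,p}. \]
Summing over all $e \in E$ gives ${\bf E}[\sum_e p'(e) \cdot x'_e] \ge \tfrac14 \sum_{e = uv}\sum_{p \in P_u} p \cdot x_{e,p}$, which equals $\tfrac14$ times the LP optimum, so there is a pricing $p'$ (and a corresponding matching, found via a max-weight $\mu$-matching computation) achieving at least a quarter of the LP value. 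This shows the integrality gap is no worse than $1/4$ for every instance of \lsp.

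I expect the main obstacle to be the bookkeeping around conditioning on $p'(e)$: one must be careful that the $\tfrac12$ scaling, the Markov bound, and the factor coming from the matching-polytope integrality combine correctly and that the argument genuinely avoids any appeal to independence of the $X_{e'}$'s. Once the conditional Markov bound ${\bf Pr}[\mathcal B_e \mid p'(e) = p] \le \tfrac12$ is in hand, the rest is routine and parallels the large-capacity proof verbatim.
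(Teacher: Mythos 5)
Your proposal is the same argument the paper intends (its own ``proof'' is just the two-sentence remark preceding the lemma: scale by $\tfrac12$, discard edges at violated vertices of $R$, replace Chernoff by Markov), and you have correctly isolated the one genuinely delicate step: once parallel edges are allowed, the bad event at $v$ is \emph{not} independent of the price sampled at $u$. But the way you discharge that step does not work. First, knowing that the conditional expectation of $\sum_{e' \in \delta(v)} x''_{e'}$ is ``still at most $\mu_v$'' is useless: Markov at threshold $\mu_v$ with mean $\mu_v$ gives probability at most $1$, not $\tfrac12$; you need the \emph{conditional} mean to stay below $\mu_v/2$. Second, and more seriously, conditioning on $p'(u) = p$ does not merely add the contribution of $e$ itself (at most $\tfrac12$): it simultaneously fixes $x''_{e''} = \tfrac12 x_{e'',p}/y_{u,p}$ for \emph{every} parallel copy $e''$ of $e$, i.e.\ every edge of $\delta(u) \cap \delta(v)$. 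Constraint \ref{lp:capR} only bounds $\sum_{e''} \sum_{p'} x_{e'',p'}$, so this conditional contribution can be as large as $\mu_v/(2 y_{u,p})$ (or $\mu_u/2$), which exceeds $\mu_v$ when $y_{u,p}$ is small. Concretely, take $\mu_v = 1$, $\mu_u = 10$, ten parallel edges between $u$ and $v$, $y_{u,p} = 0.1$ and $x_{e'',p} = 0.1$ for each copy (all of \ref{lp:oneprice}--\ref{lp:cover} hold, several with equality). Conditioned on $p'(u) = p$, each copy gets $x''_{e''} = \tfrac12$, the load at $v$ is $5 > \mu_v$ deterministically, every edge at $v$ is discarded, and ${\bf Pr}[\mathcal B_e \mid p'(e) = p] = 1$. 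So the per-price inequality ${\bf E}[p'(e) \cdot x'_e \mid p'(e) = p] \geq \tfrac14 \cdot p \cdot x_{e,p}/y_{u,p}$, which your final summation relies on, is false, and the rounding as analyzed can return expected value far below $\tfrac14$ of the value of this fractional solution.

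In fairness, the paper's own remark is no more careful: it asserts ${\bf Pr}[\sum_{e'} x''_{e'} \leq \mu_v] \geq \tfrac12$ unconditionally (which is a correct Markov bound) and then concludes ${\bf E}[x'_e] \geq x_e/4$, a step that implicitly decouples $x''_e$ from the survival event and is exactly where the simplicity hypothesis was doing work in Theorem~\ref{thm:LPBigCap}. Markov's inequality removes the need for independence only in the unconditional tail bound; it does not restore the decoupling. So you have faithfully reconstructed the intended proof, but the patch you offer for its one non-trivial step does not close the gap. A correct argument needs something extra for parallel edges --- e.g.\ treating each parallel class $\delta(u) \cap \delta(v)$ as a single unit whose total conditional $x''$-mass is capped (say at $\tfrac12\min(\mu_u,\mu_v)$ via a modified scaling) before applying Markov to the remaining, genuinely independent, contributions at $v$ --- or a different rounding/charging scheme altogether.
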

The choice of $1/2$ in the definition of $x''_e$ is optimal for this analysis. Note, this approximation guarantee is even worse
than our single-swap algorithm.


\subsection{Proof of Theorem \ref{thm:uniform}}\label{app:uniform}
Here we combine the results from Theorem \ref{thm:multi} and \ref{thm:rounding} (or Theorem \ref{thm:LPBigCap}) to provide a better than $2$-approximation for the instances with uniform capacities, where all vertices $w \in L \cup R$ have capacity $\mu_w = C$. 

This is obtained using a slightly more refined analysis of the randomized rounding procedure. We used simpler Chernoff bounds in the proof of Theorem \ref{thm:LPBigCap} to keep the bound simpler to state since the main the goal was to show the guarantee for \lsp approaches 1 as $C$ increases. But since we are interested in optimal constants at this point, we analyze a tighter Chernoff bound.

\begin{theorem}
The randomized rounding procedure produces a solution for \lsp with expected profit at least $0.516 \cdot OPT_{LP}$ in simple graphs where all capacities are at least 22.
\end{theorem}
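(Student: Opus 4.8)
The plan is to re-run the randomized rounding algorithm from the proof of Theorem~\ref{thm:LPBigCap} essentially verbatim, except that the shrinking factor $1-\eps$ is replaced by a free parameter $\lambda \in (0,1)$ whose value is fixed only at the very end. That is: solve \eqref{lp:pricing}; independently for each $u \in L$ sample $p'(u) \in P_u$ with ${\bf Pr}[p'(u)=p] = y_{u,p}$; put $x''_e = \lambda\cdot x_{e,p'(e)}/y_{u,p'(e)}$ for $e = uv$ (and $0$ if $p'(e) > b_e$); for each $v \in R$ zero out all edges incident to $v$ whenever $\sum_{e' \in \delta(v)} x''_{e'} > \mu_v$, producing $x'$; finally round $x'$ to an integral $\mu$-matching with no loss by integrality of the bipartite $\mu$-matching polytope. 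Exactly as in Theorem~\ref{thm:LPBigCap}, $x'$ respects all capacities --- the $L$-side constraints because $\sum_{e \in \delta(u)} x''_e \le \lambda\mu_u \le \mu_u$ by \eqref{lp:capL} (using that $p'(e)$ is common to all $e \in \delta(u)$) --- so $\texttt{val}(p') \ge \sum_e p'(e)\,x'_e$, and it remains only to bound this expectation.

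The first part of the analysis mirrors Theorem~\ref{thm:LPBigCap}. For $e = uv$ let $\mathcal B_e$ be the event $\sum_{e' \in \delta(v),\, e'\neq e} x''_{e'} > \mu_v - 1$; since $x''_e \le \lambda < 1$, the event that $e$ gets zeroed out is contained in $\mathcal B_e$. Because $G$ is simple, $\mathcal B_e$ depends only on the prices of $L$-vertices other than $u$, hence is independent of $p'(u)$, so ${\bf E}[p'(e)\,x'_e] \ge (1-{\bf Pr}[\mathcal B_e])\cdot\lambda\cdot\sum_{p \in P_u} p\,x_{e,p}$; summing over $e$ gives ${\bf E}[\texttt{val}(p')] \ge \lambda\cdot(1-\max_e{\bf Pr}[\mathcal B_e])\cdot OPT_{LP}$. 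Thus everything reduces to showing that ${\bf Pr}[\mathcal B_e]$ is small: $X^e := \sum_{e'\in\delta(v),\,e'\neq e} x''_{e'}$ is a sum of independent (again by simplicity) random variables in $[0,\lambda]$ with ${\bf E}[X^e] \le \lambda\mu_v$ by \eqref{lp:capR}, and we need ${\bf Pr}[X^e > \mu_v-1]$ to be tiny when $\mu_v \ge 22$.

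For this I would rescale by $\lambda$ and apply the sharp multiplicative Chernoff bound: $W := X^e/\lambda$ is a sum of independent $[0,1]$ variables with mean $\mu_W \le \mu_v$, and ${\bf Pr}[W \ge a] \le e^{-\mu_W}(e\mu_W/a)^{a}$ with $a := (\mu_v-1)/\lambda$. Choosing $\lambda < 1-1/\mu_v$ guarantees $a > \mu_v \ge \mu_W$, which is exactly what makes this bound valid and, moreover, increasing in $\mu_W$ on $[0,\mu_v]$; so the worst case is $\mu_W = \mu_v$, and one further checks that the resulting quantity is decreasing in $\mu_v$. Hence a single $\lambda$ gives ${\bf Pr}[\mathcal B_e] \le e^{-22}(22e/a)^{a}$ with $a = 21/\lambda$ for every instance with capacities $\ge 22$, and the theorem follows by picking $\lambda$ (slightly above $1/2$ is already enough, since then this tail bound is well under $10^{-2}$) so that $\lambda(1-{\bf Pr}[\mathcal B_e]) \ge 0.516$; this is a one-line numerical check. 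The only real work is the bookkeeping: verifying the two monotonicity claims (in $\mu_W$ and in $\mu_v$) so that one fixed $\lambda$ handles all capacities $\ge 22$ simultaneously, and stating the Chernoff bound in the bounded-variable rather than $0/1$ form. I expect the monotonicity-in-capacity step to be the only genuinely fiddly point.
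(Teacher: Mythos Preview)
Your proposal is correct and follows essentially the same route as the paper: scale the fractional edges by a constant $\lambda$ (the paper fixes $\alpha=0.57$), use the bad-event/Chernoff argument of Theorem~\ref{thm:LPBigCap}, and verify that the worst case over $\mu_v \ge 22$ occurs at $\mu_v=22$. The only notable difference is that you rescale by $\lambda$ before applying Chernoff (so the summands lie in $[0,1]$ with mean $\le \mu_v$), whereas the paper applies the bound directly to $[0,\alpha]$-valued variables with mean $\le \alpha\mu_v$; your version is the tighter of the two equivalent forms, which is why you get slack even with $\lambda$ closer to $0.52$ while the paper needs $\alpha=0.57$ to clear $0.516$.
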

\begin{proof}
Let $\alpha := 0.57$. In the analysis of the randomized rounding procedure, we explicitly constructed a fractional matching after sampling all of the prices. In this proof, we let $x'_e = \alpha \cdot x_{e, p'(e)}/y_{u, p'(e)}$ for each edge $e$ after sampling $p'$.

For each $v \in R$, each $e \in \delta(v)$ has $x'_e$ being a random value between 0 and 1. Set $\epsilon_v := \frac{c-1}{\alpha \cdot c} - 1$, so that $(1+\epsilon_v) \cdot \alpha \cdot \mu_v = \mu_v-1$.
Since these are independent for different $e$ and since the expected value of $x'(\delta(v))$ is at most $\alpha \cdot \mu_v$, then by a Chernoff bound we have
for any $e \in \delta(v)$ that
\[ {\bf Pr}[x(\delta(v)-\{e\}) > \mu_v - 1] \leq {\bf Pr}[x(\delta(v)-\{e\} \geq (1+\epsilon_v) \cdot \alpha \cdot \mu_v] \leq \left(\frac{e^{\epsilon_v}}{(1+\epsilon_v)^{1+\epsilon_v}}\right)^{\alpha \cdot \mu_v}. \]
Now, as a function of $\mu_v$ the right hand side increases as $\mu_v$ decreases. Since we assumed $\mu_v \geq 22$, we evaluate this at $\mu_v = 22$ to get for any $\mu_v \geq 22$ that
\[  {\bf Pr}[x(\delta(v)-\{e\}) > \mu_v - 1] \leq 0.0937076. \]
Continuing as in the proof of Theorem \ref{thm:LPBigCap}, the expected value of $x''_e$ for $e = uv$ when $p'(e)$ is given is at least
\[ \frac{x_{e, p'(e)/y_{v, p'(e)}}}{\alpha} \cdot \left(1- {\bf Pr}[x(\delta(v)-\{e\}) > \mu_v - 1]\right) \geq \frac{x_{e, p'(e}}{y_{v, p'(e)}} \cdot 0.516. \]
Therefore, the expected contribution of $e$'s value to the matching $x''$ over the random choice of $p'$ is at least $0.516 \cdot \sum_{p \in P_u} p \cdot x_{e,p}$. Summing over all $e$ completes the proof.
\end{proof}

We can now finish the main proof from this section.
\begin{proof}[Proof of Theorem \ref{thm:uniform}]
If $C \leq 21$, use the multiswap local search algorithm to get a solution with profit $\geq \left(\frac{21}{41} - \epsilon\right) \cdot OPT$ for \lsp. If $C \geq 22$, use the randomized rounding procedure to get a solution whose cost is at least $0.516 \cdot OPT$.
For small enough $\epsilon$, $0.516 > \frac{21}{41} - \epsilon$, so in either case we get profit at least $\left(\frac{21}{41} - \epsilon\right) \cdot OPT$. In terms of approximation guarantees, this yields an approximation guarantee of at most $1.952381$ (again, for small enough $\epsilon$). Using Lemma \ref{lem:reduce}, we get a 7.8096-approximation for \cgp.
\end{proof}

Note, our analysis here may still not be optimal: for example, one could consider a smoother scaling of $x''$ instead of simply discarding edges incident to some $v \in R$ whose capacity is violated and, perhaps, get a smaller ``threshold'' value for $C$ (smaller than 22) for which the randomized rounding outperforms the multi-swap algorithm.


\section{{\bf APX}-hardness for \lsp}\label{app:hardness}
\begin{theorem}
\lsp is {\bf APX}-hard, even if all capacities are at most 4 and all customers have a budget of 1 or 2.
\end{theorem}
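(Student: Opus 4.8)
The plan is to give an approximation‑preserving reduction (an L‑reduction) from Maximum Independent Set on cubic (i.e.\ $3$‑regular) graphs, which is a well‑known APX‑hard problem. Given a cubic graph $G = (W, \hat{E})$ with $n = |W|$ vertices, I would construct the following \lsp instance. Let $L = \{u_x : x \in W\}$ with $\mu_{u_x} = 4$ for every $x$. On the right side, for each vertex $x \in W$ add a private ``reward'' node $r_x$ with $\mu_{r_x} = 1$ together with a customer $(u_x, r_x)$ of budget $2$; for each edge $e = \{x,y\} \in \hat{E}$ add a ``conflict'' node $v_e$ with $\mu_{v_e} = 1$ together with the two customers $(u_x, v_e)$ and $(u_y, v_e)$, each of budget $1$. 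Every capacity is at most $4$, every budget is $1$ or $2$, and since $P_{u_x} = \{1,2\}$ we may assume $p(u_x) \in \{1, 2\}$. The intended meaning is that pricing $u_x$ at $2$ puts $x$ into the independent set (only the reward edge at $u_x$ is then affordable, earning $2$), while pricing $u_x$ at $1$ puts $x$ into the complementary vertex cover (the reward edge earns $1$, and $u_x$ may additionally sell conflict edges, at most one per incident graph edge).

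The core of the proof is the exact identity $\mathrm{OPT}_{\lsp} = \tfrac{5n}{2} + \beta(G)$, where $\beta(G)$ denotes the independence number of $G$. For the lower bound (completeness), fix a maximum independent set $I$ and its complement $C$ (a minimum vertex cover); price $u_x$ at $2$ for $x \in I$ and at $1$ for $x \in C$, sell all $n$ reward edges, and for each $e = \{x,y\}$ sell the copy of the conflict edge at an endpoint lying in $C$ (which exists since $C$ is a cover). This is feasible: each $u_x$ with $x \in C$ holds its reward edge and at most its three incident conflict edges, well within capacity $4$. The revenue is $2|I| + |C| + |\hat{E}| = 2\beta(G) + (n - \beta(G)) + \tfrac{3n}{2} = \tfrac{5n}{2} + \beta(G)$.

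For the upper bound (soundness), take any feasible solution $(p, F)$ and let $B = \{x : p(u_x) = 1\}$ and $A = W \setminus B$. A conflict edge $(u_x, v_e)$ is affordable only when $p(u_x) = 1$, so every sold conflict edge touches $B$; since each conflict node has capacity $1$, at most $e(A,B) + e(B)$ conflict edges are sold in total (those graph edges with an endpoint in $B$). Charging per left vertex --- $u_x$ with $x \in A$ earns at most $2$, and $u_x$ with $x \in B$ earns $1$ per sold incident edge --- yields $\mathrm{rev}(F) \le 2|A| + |B| + e(A,B) + e(B)$, which by $3$‑regularity ($e(A,B) = 3|B| - 2e(B)$) simplifies to $2n + 2|B| - e(B)$. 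It remains to show $\max_{B \subseteq W}(2|B| - e(B)) = \tfrac{n}{2} + \beta(G)$: writing $B' = W \setminus B$ one checks $2|B| - e(B) = \tfrac{n}{2} + |B'| - e(B')$, and $\max_{S}(|S| - e(S)) = \beta(G)$ because deleting an endpoint of any edge internal to $S$ never decreases $|S| - e(S)$, so the optimum is attained at an independent set. The same peeling argument applied to $B'$ extracts from any solution of value $V$ an independent set of size at least $V - \tfrac{5n}{2}$, so $\beta(G) - |I_{\mathrm{rec}}| \le \mathrm{OPT}_{\lsp} - V$; together with $\beta(G) \ge n/4$ (hence $\mathrm{OPT}_{\lsp} = O(\beta(G))$) this is a genuine L‑reduction, proving \lsp is APX‑hard already with capacities $\le 4$ and budgets in $\{1,2\}$.

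I expect the main obstacle to be the soundness inequality $\mathrm{rev}(F) \le 2n + 2|B| - e(B)$ for an \emph{arbitrary} feasible $F$: one must rule out that a clever allocation (leaving a reward edge unsold, or filling an $L$‑vertex's capacity in some unexpected combination) can beat the canonical behaviour described above. This is precisely what the per‑vertex charging plus the capacity‑$1$ constraint on conflict nodes delivers, in the same exchange‑argument spirit as the locality‑gap analyses earlier in the paper. Two lesser points also need care: checking feasibility of the completeness solution when both endpoints of a graph edge lie in the cover (then that conflict edge is sold at only one of them, which is fine since a cubic vertex has just three incident conflict edges and capacity $4$); and invoking a cubic MIS inapproximability result in explicit promise‑gap form so that the affine relation $\mathrm{OPT}_{\lsp} = \tfrac{5n}{2} + \beta(G)$ carries the gap over.
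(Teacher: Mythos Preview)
Your reduction is exactly the paper's: the paper phrases it as a reduction from \textsc{Vertex Cover} on cubic graphs rather than \textsc{Maximum Independent Set}, but the instance built is identical, and your formula $\mathrm{OPT}_{\lsp}=\tfrac{5n}{2}+\beta(G)$ is the same as the paper's $\mathrm{OPT}_{\lsp}=m+2n-k$ once one substitutes $m=\tfrac{3n}{2}$ and $k=n-\beta(G)$.

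The only real difference is the soundness step. The paper uses a one-move exchange: if some graph edge $e_j=v_iv_k$ has both $p(l_i)=p(l_k)=2$, flip $p(l_i)$ to $1$; this loses one unit on the reward customer $l_ir_i$ but gains one unit by now selling $l_id_j$, so without loss an optimal pricing has its price-$1$ vertices forming a vertex cover, and the profit is then read off directly. You instead bound revenue for an arbitrary pricing by $2|A|+|B|+e(A,B)+e(B)=2n+2|B|-e(B)$ and then maximise over $B$ via the identity $\max_{S}(|S|-e(S))=\beta(G)$. Both arguments are correct; the paper's exchange is a line shorter, while your route has the advantage that the peeling of $B'$ to an independent set makes the $L$-reduction recovery map completely explicit.
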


\begin{proof}
We reduce from the \textsc{Vertex Cover} problem for 3-regular graphs, which is known to be {\bf APX}-hard \cite{AK00}.
Let $G = (V,E)$ be a 3-regular graph, with $|V| = n$ nodes and $|E| = m = \frac{3n}{2}$ edges.

Construct the following bipartite graph $G' = (L \cup R, E')$ from $G$.
Here, $L$ is a copy of $V$ and $R$ is a copy of $V$ plus a copy of $E$.
Each $v \in L$ has capacity 4 and each vertex in $R$ has a capacity of 1.
For a node $v_i \in V$, let $l_i$ denote its copy in $L$ and $r_i$ denote its copy in $R$. Similarly, for each edge $e_j \in E$ let $d_j$
denote its copy in $R$.

All customers have budget equal to $1$ or $2$, and they fall into two classes: node customers and edge customers. For each $v_i \in V$, we have a node customer who is interested in $l_i$ and $r_i$ with budget $2$. For each edge $e_j = v_iv_k \in E$, we define two edge customers interested in $l_id_j$ and $l_kd_j$ respectively, both with budget $1$.
We claim that the optimal solution to this \lsp instance $G'$ has profit $m + 2n - k$ where $k$ is the size of the smallest vertex cover of $G$.

First, suppose $S$ is a vertex cover of $G$ with  $|S| = k$. Consider the pricing $p$ with $p(l_i) = 1$ if $l_i \in S$ and $p(l_i) = 2$ if $l_i \notin S$. As $S$ is a vertex cover in $G$, for each $e_j = v_iv_k \in E$ we have at least one of $l_id_j$ or $l_kd_j$ is incident to a vertex with
price $1$. Form $F' \subseteq E'$ by adding one such edge from each $e_j$ and adding all node customers. We get profit $m$ from edge customers, profit $2(n-k)$ from all node customers $l_ir_i$ such that $v_i \notin S$, and profit $k$ from all node customers $l_ir_i$ with $v_i \in S$ for a total profit of $m + 2n - k$.

Conversely, consider an optimal pricing $p$, so each price is 1 or 2. For $e_j = v_iv_k \in E$, we claim that either $p(v_i) = 1$ or $p(v_k) = 1$.
If not, then consider changing $p(v_i)$ to 1. We lost a profit of 1 from the node customer $l_ir_i$ but have gained a profit of 1 by adding $v_id_j$, which remains feasible because neither $v_id_j$ nor $v_kd_j$ could afford the price of their left-endpoint before (i.e. $d_j$ is not used by any edge that can afford their price under pricing $p$, so we may add $v_id_j$ after adjusting prices).

Set $S = \{v_i : p(l_i) = 1\}$. By the above argument, $S$ is a vertex cover of $G$. Also observe that the optimal set of edges of $G'$
under prices $p$ will include every node customer plus exactly one from each pair $\{l_id_j, l_kd_j\}$ for each $e_j = v_iv_k \in E$.
So the profit of $p$ is $m + 2n - |S|$.

Therefore, the optimal profit in $G'$ is exactly $\frac{5}{2} \cdot n - k$ where $k$ is the size of a minimum vertex cover of $G$. There are constants $0 < \alpha < \beta < 1$ such that it is {\bf NP}-hard to distinguish between 3-regular graphs having vertex covers of size $\leq \alpha \cdot n$ and 3-regular graphs requiring vertex covers of size $\geq \beta \cdot n$. So it is {\bf NP}-hard to distinguish between \lsp instances that have optimal profit at least $\left(\frac{5}{2} - \alpha\right) \cdot n$ or at most $\left(\frac{5}{2} - \beta\right) \cdot n$.
\end{proof}


\section{Conclusion}\label{sec:conclusion}

We presented an 8-approximation for \cgp. If all capacities were bounded from above by a constant or, in simple graphs, were bounded from below by a sufficiently large constant then we get slightly better approximations. It would be nice to combine these two cases in a more clever way to beat the 8-approximation in any \cgp instance without an assumption on the uniformity of the vertex capacities, even if only for simple graphs. But the techniques we use are quite different and it is not clear how to combine them in a single algorithm that works in the presence of both small and large capacities.

It would also be interesting to know if the hardness lower bound for \cgp is worse than 4. Intuitively, this could be the case as the \lsp problem we reduce to is {\bf APX}-hard in the capacitated case.

We also briefly remark that it is simple to get an LP-based $O(k^2)$-approximation for the generalization of \cgp to hypergraphs where each hyperedge has size $\leq k$ by first reducing to a bipartite hypergraph where we are only allowed to use nonzero prices on one side $L$ and each hyperedge has only one vertex in $L$ (losing an $O(k)$ in the guarantee \cite{BB06}) and then using standard randomized rounding of the obvious generalization of our LP to this setting while losing only an additional $O(k)$ factor. This problem is a common generalization of the uncapacitated case which has a hardness of $\Omega(k^{1-\epsilon})$ \cite{CLN13}, and the $k$-\textsc{Set Packing} problem which has a hardness of $\Omega(k/\log k)$ \cite{HSS06}. One then wonders if \cgp in hypergraphs could be hard to approximate better than $\Omega(k^{2-\epsilon})$. It would be interesting to determine if this is the case or to see if there is a noticeably better approximation than $O(k^2)$, perhaps even $O(k)$.


\bibliographystyle{plain}
\bibliography{graphpricing}

\appendix


\section{Incorporating Loops} \label{app:loops}
The algorithms presented in this paper assumed every customer was interested in a bundle with precisely two distinct items. This was done for notational simplicity.
However, the algorithmic results extend very easily to the case where some customers may be only
interested in a single item. The reduction to \lsp is valid in this case as well and we only have to consider singleton customers interested in an item in $L$. One can still compute an optimum matching for a given pricing in this case, so the local search algorithm can still be executed. The analysis of the local search algorithms using test swaps can then be adapted in a straightforward way by removing singleton customers from the local optimum
and adding singleton customers from the global optimum who are interested in an item whose price changed when constructing the matching used to generate the inequality for this swap.

Similarly, the LP-based $(1+\epsilon)$-approximation for \lsp with large capacities from Section \ref{sec:lp} is is trivial to adapt. The ``edge-variables'' for singleton customers interested only an item in $L$ do not contribute
to the load of any constraint for any $v \in R$. The randomized rounding algorithm is identical.

\section{Efficient Versions of Local Search} \label{app:faster}
The standard trick to make local search algorithms efficient is to only make an improvement if it is somewhat noticeable. That is, a swap is performed only if it improves the cost
by a factor of at least $1+\eps/\Delta$ where $\Delta$ is the total ``weight'' of all inequalities generated by test swaps to complete the analysis (typically, $\Delta$ is polynomial in the input size).
See \cite{AGK04} for a specific example of this approach.

However, such analysis typically ``loses an $\epsilon$'' in the approximation guarantee.
We adapt an alternative approach outlined in \cite{FKS} that avoids this $\epsilon$-loss while still achieving the same approximation
guarantee that a true local optimum is proven to have. We consider the single-swap algorithm first, the extension to the multi-swap algorithm is in Section \ref{app:multi_fast}.

Recall that the proof of Theorem \ref{thm:localsingle} described a set of {\em test swaps} and placed a bound on the cost change.
That is, for each $u \in L$ the swap $p \rightarrow p^u$ is considered and a bound $\Delta_u$ on the change in $\texttt{val}()$
was given as
\[ \Delta_u \geq p^*(u) \cdot |\delta_{B^*}(u)| - p(u) \cdot |\delta_B(u)| - \sum_{e \in \delta_{B'}(u)} p(\sigma(e)). \]
Observe this bound holds even if $p$ is not a local optimum solution. The only place in the proof of Theorem \ref{thm:localsingle} that used the fact that $p$ was a local optimum was in asserting $0 \geq \Delta_u$, which is not required here.

Summing the above over all $u \in L$ shows
\[ \sum_{u \in L} \Delta_u \geq \texttt{val}(p^*) - 2 \cdot \texttt{val}(p). \]
Thus, the $u \in L$ with largest $\Delta_u$ satisfies
\[ \Delta_u \geq \frac{\texttt{val}(p^*) - 2 \cdot \texttt{val}(p)}{|L|}. \]
So if we take the best improvement in each step of the algorithm, the next price $p'$ then satisfies
\[ \texttt{val}(p') \geq \texttt{val}(p) + \frac{\texttt{val}(p^*) - 2 \cdot \texttt{val}(p)}{|L|}. \]

Consider the potential function $\Phi(p) := \texttt{val}(p^*) - 2 \cdot \texttt{val}(p)$. If $\Phi(p) > 0$, then $\Phi(p') \leq \left(1-\frac{2}{|L|}\right) \cdot \Phi(p)$ follows from the expression above. That is, $\Phi(p)$ decreases by a factor of $\exp(-1)$ after every $|L|/2$ iterations
as long as the current price $p$ satisfies $\Phi(p) > 0$.

With the standard assumption that the budgets $b_e$ are expressed as rational numbers in the input, after a polynomial number of iterations (in the total bit complexity of the input), we will reach a solution $p$ with $\Phi(p) \leq 0$, {\em i.e.} $\texttt{val}(p) \geq \texttt{val}(p^*)/2$
as required, provided we take the best improvement in each step.


\subsection{Extension to multi-swap}\label{app:multi_fast}
Each swap of the form $p \rightarrow p^{u,r}$ for $0 \leq i \leq d$ and $u \in L$ in the analysis was weighted with a value $0 \leq \tau(u,r) \leq C^{d-r}$. Let $\kappa = \frac{C^{d+1}-1}{C-1} \cdot |L|$, so $\kappa$
is an upper bound on the total weight of all test swaps and $\kappa = O(|L|)$ as $C$ and $d$ are constants.

Again, even if $p$ is not a local optimum our analysis still shows
\[
\sum_{u \in L, 0 \leq r \leq d} \tau(u, r) \cdot (\texttt{val}(p^{u,r}) - \texttt{val}(p)) \geq \frac{C^{d+1}-1}{C-1} \cdot \texttt{val}(p^*) - \left(\frac{C^{d+1}-1}{C-1} + C^d\right) \cdot \texttt{val}(p).
\]
Local optimality of $p$ was only used to show the left-hand side was not positive. Without local optimality, we may still conclude the most improving swap $p \rightarrow p'$ satisfies
\[
\texttt{val}(p') \geq \texttt{val}(p) + \frac{1}{\kappa} \cdot \left(\frac{C^{d+1}-1}{C-1} \cdot \texttt{val}(p^*) - \left(\frac{C^{d+1}-1}{C-1} + C^d\right) \cdot \texttt{val}(p)\right).
\]
Consider the potential function
\[
\phi(p) = \frac{C^{d+1}-1}{C-1} \cdot \texttt{val}(p^*) - \left(\frac{C^{d+1}-1}{C-1} + C^d\right) \cdot \texttt{val}(p).
\]
The above bound shows if $\phi(p) > 0$ then choosing the best improving swaps will result in a solution $p'$
with $\phi(p') \leq \left(1 - \left(\frac{C^{d+1}-1}{C-1} + C^d\right) \cdot \frac{1}{\kappa}\right) \cdot \phi(p)$.
So $\phi(p)$ decreases geometrically every $O(\kappa)$ iterations as long as it remains positive.
As $\kappa = O(|L|)$ and by using rationality of the input values, the potential $\phi(p)$ will become nonpositive after a polynomial number of iterations
in the total bit complexity of the input as long as we take the most improving swap.



\end{document}